\newcites{App}{Extra References for the Online Appendix} 
\newcommand{\bs}[1]{\boldsymbol{#1}}
\DeclareMathOperator{\tr}{tr}
\DeclareMathOperator{\E}{E}
\DeclareMathOperator{\var}{var}
\let\Pr\relax
\DeclareMathOperator{\Pr}{P}
\DeclareMathOperator{\cov}{cov}
\DeclareMathOperator{\AR}{AR}
\DeclareMathOperator{\corr}{corr}
\DeclareMathOperator{\lambdamax}{\lambda_{\mathrm{max}}}
\DeclareMathOperator{\lambdamin}{\lambda_{\mathrm{min}}}
\DeclareMathOperator{\rank}{rank}
\newtheorem{nondecr}{Theorem}
\newtheorem{theorem}[nondecr]{Theorem}
\newtheorem{nondecrl}{Lemma}
\newtheorem{lemma}[nondecrl]{Lemma}
\newtheorem{nondecra}{Assumption}
\newtheorem{assumption}[nondecra]{Assumption}
\newtheorem{nondecrc}{Corollary}
\newtheorem{corollary}[nondecrc]{Corollary}
\newtheorem{nondecre}{Example}
\newtheorem{example}[nondecre]{Example}
\definecolor{lightblue}{rgb}{0, 0.4, 0.6}
\newcounter{expectRademachers}
\let\@fnsymbol\@alph
\begin{document}

\title{Identification- and Many Moment-Robust Inference via Invariant Moment Conditions}
  \author{Tom Boot\thanks{
    Corresponding author. University of Groningen, Nettelbosje 2, 9747 AE Groningen, The Netherlands. \texttt{t.boot@rug.nl}.}
    \and
    Johannes W. Ligtenberg\thanks{Erasmus University Rotterdam, Burgemeester Oudlaan 50, 3062 PA Rotterdam, The Netherlands.   \texttt{ligtenberg@ese.eur.nl}.}}
  \maketitle

  \begin{abstract}
Identification-robust hypothesis tests are commonly based on the continuous updating GMM objective function.
When the number of moment conditions grows proportionally with the sample size, the large-dimensional weighting matrix prohibits the use of conventional asymptotic approximations and the behavior of these tests remains unknown. We show that the structure of the weighting matrix opens up an alternative route to asymptotic results when, under the null hypothesis, the distribution of the moment conditions satisfies a symmetry condition known as reflection invariance. We provide several examples in which the invariance follows from standard assumptions. Our results show that existing tests will be asymptotically conservative, and we propose an adjustment to attain nominal size in large samples. We illustrate our findings through simulations for various linear and nonlinear models, and an empirical application on the effect of the concentration of financial activities in banks on systemic risk.
\\
\textit{Keywords}: identification-robust inference, many moment conditions, continuous updating.\\
    \textit{JEL codes}: C12, C26.
\end{abstract}

\newpage
\section{Introduction}
Identification-robust inference in models defined by moment conditions is commonly based on the continuous updating (CU) objective function of \citet*{hansen1996finite}, either directly through \citeauthor{anderson1949estimation}'s (\citeyear{anderson1949estimation}) AR statistic, via its score as in \citeauthor{kleibergen2005testing}'s (\citeyear{kleibergen2005testing}) K statistic, or via a combination thereof \citep{moreira2003conditional,kleibergen2005testing,iandrews2016conditional}. While conventional tools yield the limiting distribution of these statistics when the number of moment conditions is small, their asymptotic behavior is unknown when the number of moments increases proportionally with the sample size. 

The main obstacle in establishing the limiting distribution of statistics based on the CU objective function is that when the number of moment conditions increases proportionally with the sample size, the weighting matrix for the moment conditions is large-dimensional and does not converge to a well-defined object. As \citet[p.\ 689]{newey2009generalized} write for the instrumental variables (IV) model ``\textit{If the number of instruments grows as fast as the sample size, the number of elements of the weight matrix grows as fast as the square of the sample size. It seems difficult to simultaneously control the estimation error for all these elements.}" In the linear IV model, a recent stream of the literature circumvents this problem by changing the weighting matrix so that it depends only on the instruments, and not on the second stage errors as in the CU objective function \citep{hausman2012instrumental,crudu2021inference,mikusheva2021inference,matsushita2020jackknife,dovi2022ridge,lim2022conditional}. However, for nonlinear models, no such results are available, and existing theory requires that the number of moment conditions grows at a slower rate than the sample size, e.g.\ \citet{han2006gmm} and \citet{newey2009generalized}.

In this paper, we focus on the CU-GMM objective function when the number of moment conditions increases proportionally to the sample size, similar to the many instrument sequences in linear IV models studied by \citet{bekker1994alternative}. We show how the difficulty posed by the weighting matrix can be circumvented when the distribution of the moment conditions evaluated at the true parameters is symmetric around zero, referred to as orthant symmetry by \citet{efron1969student} and as reflection invariance by \citet{bekker2008symmetry}. 
We discuss various examples where reflection invariance is implied by existing assumptions: a censored panel data model, median IV regression, linear IV with reflection invariant instruments or regression errors, and GMM-based tests for symmetry. 

We show that under reflection invariance, the limiting distribution of the suitably recentered and rescaled CU objective function, and hence that of the AR statistic, follows from known results on the limiting behavior of bilinear forms by \citet*{chao2012asymptotic}. These results indicate that the usual AR test will be conservative under many moments. We propose adjusted critical values that make the procedure asymptotically exact regardless of whether the number of moments is fixed or increasing. The results naturally extend to settings with clustered observations that in a linear IV setting were considered by \citet{ligtenberg2023inference}. For the heteroskedastic linear IV model, we also study the asymptotic behavior of the score of the CU objective function \citep{kleibergen2002pivotal,iandrews2016conditional,matsushita2020jackknife}. We derive a new central limit theorem for bilinear forms with an additional third-order term. We provide a set of assumptions under which the additional variance terms that enter under many instruments are negative, indicating that \citeauthor{kleibergen2005testing}'s (\citeyear{kleibergen2005testing}) score test will yield conservative inference. We also provide a variance estimator that consistently estimates the additional variance terms.

We assess the finite sample performance of the tests through simulations in two nonlinear settings and one linear setting. Specifically, for the nonlinear settings we consider the moment conditions by \citet{honore1992trimmed} for censored panel data models and the quantile IV moments developed by \citet{chernozhukov2009finite}. The simulations highlight the effect of (i) the identification strength, (ii) the number of instruments, and (iii) the effect of deviations from the invariance assumption. The simulation shows that, unlike conventional asymptotic approximations, the many moment-robust tests have close to nominal size control regardless of the identification strength and regardless of the number of moment conditions. Consequently, we document an increase in power when using the many-moment robust tests over the classic AR statistic. In the quantile IV application, the power appears nearly identical to that of the finite sample inference procedure of \citet{chernozhukov2009finite}. The quantile IV setting also provides a natural setting to test the sensitivity to violations of reflection invariance, as the invariance only holds at the median. We find that the proposed test is reasonably robust, although the rejection rates exceed the nominal size for quantiles distant from the median. 

Finally, we consider the heteroskedastic linear IV model under independent and clustered errors. For the case with independent errors, the many instrument correction to the AR and score statistic yields higher power relative to the standard AR and score test and the jackknife AR and score test \citep{mikusheva2021inference,matsushita2020jackknife} in settings with high heteroskedasticity, while differences are small in settings with low heteroskedasticity. Under clustering, the difference between the standard AR statistic and the many instrument robust version is larger compared to the setting with independent errors. With low heteroskedasticity, a clustered version of the jackknife AR appears more powerful. When we increase the degree of heteroskedasticity, this difference disappears.

We consider an empirical illustration by revisiting the study by \citet{langfield2016bank} on the effect of concentration of financial activity in the banking sector on systemic risk. As the data are left-censored, the authors employ the reflection invariant moment conditions from \citet{honore1992trimmed}. These moment conditions are clustered at the bank level, and the number of moment conditions is not negligible relative to the number of banks. The confidence regions from the AR statistic and the many instrument-robust AR statistic are wide, but, similar to \citet{langfield2016bank}, provide evidence that countries with a relatively large banking sector experience lower financial stability during financial crises than countries with a relatively small banking sector.
As expected, the many moment-robust AR statistic offers smaller confidence regions compared to the standard AR statistic.

\paragraph*{Structure} In Section \ref{sec:group} we discuss the reflection invariance assumption, and several examples where the assumption holds, in Section \ref{subsec:invar}. Theoretical results under independent moment conditions are given in Section \ref{subsec:independent} and under clustered moment conditions in Section \ref{subsec:cluster}. Score-based tests in the linear IV model are discussed in Section \ref{sec:score}. Section \ref{sec:MC} contains the Monte Carlo results for a censored tobit model, quantile IV and linear IV. The empirical application is presented in Section \ref{sec:application}. Section \ref{sec:conclusion} concludes.

\paragraph*{Notation}\label{page:notation} For a vector $\bs v$, $\bs D_{v}$ denotes the diagonal matrix with $\bs v$ on its diagonal. For a square matrix $\bs A$, let $\bs D_{A}=\bs A\odot \bs I$, where $\odot$ is the Hadamard product. We use $\dot{\bs A}= \bs A-\bs D_{A}$ for a matrix with all diagonal elements equal to zero. $\bs\iota$ indicates a vector of ones and $\bs e_{k}$ a vector with its $k^{\text{th}}$ entry equal to one and the remaining entries equal to zero.  Observations are indexed by $i=1,\ldots,n$ and $j=1,\ldots,n$, parameters of interest by $l=1,\ldots,p$, clusters by $h=1,\ldots, H$, time by $t=1,\ldots, T$. If we require additional indices, we subscript the relevant index, for example $i_{1},i_{2}\ldots$. Let $\bs a_{i}=\bs A'\bs e_{i}$ and $\bs a_{(l)}=\bs A\bs e_l$ denote the $i^\text{th}$ row and $l^\text{th}$ column of a matrix $\bs A$. For random variables $A$ and $B$, $A\overset{(d)}{=}B$ means that $A$ is distributionally equivalent to $B$. $A\overset{(\E)}{=}B$ means that $\E[A] = \E[B]$. $\E_{A}[\cdot]$ is the expectation over the distribution of the random variable $A$. $\rightarrow_{d}$ denotes convergence in distribution, $\rightarrow_{p}$ convergence in probability and $\rightarrow_{a.s.}$ almost sure convergence. $a.s.n.$ is short for with probability 1 for all $n$ sufficiently large. For a symmetric $n\times n$ matrix $\bs A$, $\lambdamin(\bs A)=\lambda_1(\bs A)\leq\ldots\leq\lambda_n(\bs A)=\lambdamax(\bs A)$ denote its eigenvalues. $C$ denotes a finite positive constant that can differ between appearances. $\mathbbm{1}\{\cdot\}$ is the indicator function.

\section{Continuous updating and invariant moment conditions}\label{sec:group}
We start with a GMM set-up where we have a vector of independent observations $\bs W_{i}$ for $i=1,\ldots,n$. Define the $p\times 1$ vector of parameters $\bs\beta$ and the vector of functions $\bs g_{i}(\bs\beta) = (g_{1}(\bs W_i,\bs\beta), \ldots,g_{k}(\bs W_i,\bs\beta))'$ for $k\geq p$. The true parameter $\bs \beta_0$ satisfies the moment conditions $\E[\bs g(\bs W_{i},\bs\beta_{0})]=\bs 0$. We stack the moment conditions in the $n\times k$ matrix $\bs G(\bs\beta) = [\bs g_{1}(\bs\beta),\ldots,\bs g_{n}(\bs\beta)]'$, with $\rank(\bs G(\bs\beta_{0}))=k$. Define the projector $\bs P(\bs\beta) = \bs G(\bs\beta)(\bs G(\bs\beta)'\bs G(\bs\beta))^{-1}\bs G(\bs\beta)'$. 

The continuous updating (CU) objective function introduced by \citet{hansen1996finite} can be written as
\begin{equation}\label{eq:objgen}
    Q(\bs\beta) = \frac{1}{n}\bs\iota'\bs P(\bs\beta)\bs\iota= \frac{k}{n} + \frac{1}{n}\sum_{i \neq j}\bs P_{ij}(\bs\beta).
\end{equation}
This objective function is closely related to the identification robust Anderson--Rubin (AR) GMM statistic, defined as
\begin{equation}\label{eq:AR}
    \AR(\bs\beta) = nQ(\bs\beta).
\end{equation}
For a fixed number of moment conditions $k$, the AR statistic is asymptotically $\chi^2(k)$ distributed when evaluated at $\bs\beta_{0}$. Extending this result to the case where the number of moment conditions grows proportionally with the sample size is challenging as the weighting matrix $\bs G(\bs\beta)'\bs G(\bs\beta)/n$ does not converge to a non-stochastic object.

\subsection{Invariance assumptions}\label{subsec:invar}

To derive the asymptotic behavior of \eqref{eq:AR} when the number of moment conditions $k$ grows proportional with $n$, we make the following invariance assumption.
\begin{assumption}[Reflection invariance]\label{ass:main}
    Let $\{r_{i}\}_{i=1}^{n}$ be a sequence of independent Rademacher random variables, so that $\Pr(r_{i}=1)=\Pr(r_{i}=-1)=1/2$, and define $\bs r=(r_{1},\ldots,r_{n})'$. Then, $\bs G(\bs\beta_{0})\overset{(d)}{=}\bs D_{r}\bs G(\bs\beta_{0})$.
\end{assumption}
Assumption~\ref{ass:main} is equivalent to stating that the distribution of $\bs g_{i}(\bs\beta_{0})$ is symmetric around 0. 
Note that this does not preclude the distribution of the moment conditions to differ across observations.
We now discuss several examples under which Assumption \ref{ass:main} holds. 

\begin{example}[Panel tobit model]\label{ex:Honore}
\citet{honore1992trimmed} considers a (censored or truncated) two period panel data model
\begin{equation}
    \begin{split}
        y_{i1}&=\alpha_i+\bs x_{i1}'\bs\beta+\varepsilon_{i1},\\
        y_{i2}&=\alpha_i+\bs x_{i2}'\bs\beta+\varepsilon_{i2}.
    \end{split}
\end{equation}
If the $\varepsilon_{it}$ are 
exchangeable conditional on $\bs x_{i1}$, $\bs x_{i2}$ and $\alpha_i$, then $\varepsilon_{i1}-\varepsilon_{i2}$ is symmetrically distributed. Consequently, $y_{i1}-y_{i2}$ is symmetric around $(\bs x_{i1}-\bs x_{i2})'\bs\beta$. \citet{honore1992trimmed} exploits this symmetry to derive valid moment conditions for $\bs\beta$, which are by the same mechanism reflection invariant. Any two time periods can be used to obtain moment conditions. By restricting the set so that we use each time period only once, we obtain reflection invariant moment conditions. The number of moment conditions then increases linearly with the product of the number of time periods and the dimension of the regressors. See Section \ref{ssec:MC Honore} for details on the moment conditions.
\end{example}

\begin{example}[Median IV regression]\label{ex:IVQR}
\citet{chernozhukov2009finite} estimate a quantile regression function $q$ that relates a dependent variable $y$ to possibly endogenous variables $\bs x$ and a uniformly distributed rank variable $\varepsilon$ as $y=q(\bs x,\varepsilon)$. Their Assumption 1 implies that for the median $\mathbbm{1}\{y\leq q(\bs x,1/2)\}$ is Bernoulli(1/2) distributed and hence reflection invariant once shifted. If $q$ can be parameterized with parameters $\bs\theta$ and there are exogenous instrumental variables $\bs z$, then we can estimate $\bs\theta$ via GMM with the invariant moment conditions $g_i(\bs\theta)=(1/2-\mathbbm{1} \{y_i\leq q(\bs x_i,1/2;\bs\theta)\})\bs\Psi(\bs z_i)$, where $\bs\Psi(\bs z_i)$ is some function of the instruments. The number of instrumental variables can be large in itself, or after using power series or splines as approximating functions for $\bs\Psi(\cdot)$ \citep{newey1990efficient}.
\end{example}

\begin{example}[Heteroskedastic IV with reflection invariant instruments or second stage errors]\label{ex:linIV}  Consider the standard linear IV model \begin{align}
	y_{i} &= \bs x_{i}'\bs\beta_{0}+ \varepsilon_{i},\quad
	\bs x_{i} =\bs \Pi'\bs z_{i} + \bs\eta_{i},\label{eq:model1}
\end{align}
with $\bs x_{i}$ the endogenous regressors, $\bs\beta_{0}$ a $p\times 1$ vector, $\bs z_{i}$ the instruments, $\bs \Pi$ a $k\times p$ matrix. We assume an intercept has been projected out and write the model without exogenous control variables. Supplementary Appendix \ref{app:controls} shows that this does not affect the results if the number of the controls increases slower than the sample size. The moment conditions are $\bs g_{i}(\bs\beta) = \bs z_{i}(y_{i}-\bs x_{i}'\bs\beta)$ and are reflection invariant under $H_{0}\colon\bs\beta=\bs\beta_0$ if $\bs z_{i}$ and $\varepsilon_{i}$ are independent and at least one of them is reflection invariant. Examples of invariant instruments include sex-at-birth instruments \citep{angrist1998children} and randomized experiments with equal assignment probability to treatment and control groups. This leads to many instruments when these are interacted with control variables \citep{dupas2018banking,dube2020queens}.
\end{example}

\begin{example}[GMM based tests for symmetry]\label{ex:symmtest}
\citet{bontemps2005testing} use Stein's equation and GMM to test for normality of a variable. Let $H_{i}$ be the Hermite polynomial of order $i$ recursively defined as $H_0(x)=1$, $H_1(x)=x$ and $H_i(x)=[xH_{i-1}(x)-\sqrt{i-1}H_{i-2}(x)]/\sqrt{i}$, $i>1$. \citet{bontemps2005testing} show that for $X\sim N(0,1)$ the moment conditions $\E(H_i(X))=0$ hold. If $X$ is symmetrically distributed but not necessarily normal, then these moment conditions still hold for $i$ odd and are reflection invariant. The number of moment conditions is large if many orders are used.
\end{example}
There are also settings where invariance cannot be argued based on standard assumptions and one wishes to test this assumption. As the reflection invariance is part of the specification, an empty confidence set could point to a failure of this assumption. When the moment conditions are of a product form, such as in the linear IV $\boldsymbol{g}_{i}(\boldsymbol{\beta_{0}}) = \boldsymbol{z}_{i}(y_{i}-\boldsymbol{x}_{i}'\boldsymbol{\beta}_{0}) = \boldsymbol{z}_{i}\varepsilon_{i}$, reflection invariance is testable if it comes from the observable component, e.g.\ the instrument vector $\boldsymbol{z}_{i}$. Whether testing the unobserved component, e.g.\ $\varepsilon_{i}$ in the linear IV, is possible in combination with identification-robust inference is an important question for further research.

\subsection{Continuous updating under reflection invariance}\label{subsec:independent}
Under Assumption \ref{ass:main} we can relate the distribution of the CU objective function with a similar function written in terms of Rademacher random variables. Define, $Q_{r}(\bs\beta_{0})= \bs r'\bs P(\bs\beta_{0})\bs r/n$. Then,
\begin{equation*}
    Q(\bs\beta_{0}) \overset{(d)}{=} Q_{r}(\bs\beta_{0}).
\end{equation*}
As 
$Q(\bs\beta_{0})$ and $Q_{r}(\bs\beta_{0})$ are distributionally equivalent, it suffices to analyze the asymptotic distribution of $Q_{r}(\bs\beta_{0})$ to obtain the asymptotic distribution of $Q(\bs\beta_{0})$. Likewise, we analyze $\AR_{r}(\bs\beta_{0})=nQ_{r}(\bs\beta_{0})$ to establish the asymptotic distribution of the AR statistic defined in \eqref{eq:AR}. 

Conditioning on $\mathcal{J}=\{\bs g_{i}(\bs\beta_{0})\}_{i=1}^{n}$, the only randomness in $\AR_r(\bs\beta_{0})$ comes from the Rademacher random variables. Under the following assumptions, we can directly apply the CLT for bilinear forms by \citet{chao2012asymptotic} to obtain the asymptotic distribution of the AR statistic under many instrument sequences.
\begin{assumption}\label{ass:AR} 
With probability 1 for all sufficiently large $n$: (a)  $\rank[\bs P(\bs\beta_{0})]=k$, 
(b) $\sigma_{n}^2(\bs\beta_{0})>1/C$ where
\begin{equation}\label{eq:sigman}
    \sigma_{n}^2(\bs\beta_{0}) = \frac{2}{k}\sum_{i\neq j} P_{ij}(\bs\beta_{0})^2.
\end{equation}
\end{assumption}
Part (a) excludes any redundant moment conditions.  Part (b) bounds the variance of the scaled AR statistic away from zero and is required to apply the central limit theorem provided in Lemma A2 by \citet{chao2012asymptotic}. A stronger alternative assumption that can be found in the many instrument literature is that $P_{ii}(\bs\beta_{0})\leq C<1$ for $i=1,\ldots,n$, see e.g.\ \citet{hausman2012instrumental}, \citet{bekker2015jackknife} and \citet{anatolyev2019many}. The following result follows from Lemma A2 in \citet{chao2012asymptotic} and its proof.
\begin{theorem}\label{corr:AR}
	 Under Assumptions \ref{ass:main} and \ref{ass:AR}, when $k\rightarrow\infty$ as $n\rightarrow\infty$, $(k\sigma_{n}^2(\bs\beta_{0}))^{-1/2}(\AR_{r}(\bm \beta_{0})-k)\rightarrow_{d} N(0,1)$ $a.s.$ As a consequence $(k\sigma_{n}^2(\bs\beta_{0}))^{-1/2}(\AR(\bs \beta_{0})-k) \rightarrow_{d} N(0,1).$
\end{theorem}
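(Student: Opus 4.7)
My plan is to reduce the statement to a central limit theorem for a bilinear form in Rademacher signs, for which Lemma~A2 of \citet{chao2012asymptotic} is tailor made. First I would apply Assumption~\ref{ass:main} to obtain $\bs G(\bs\beta_{0})\overset{(d)}{=}\bs D_{r}\bs G(\bs\beta_{0})$; substituting into the projector gives the joint distributional identity $\bs P(\bs\beta_{0})\overset{(d)}{=}\bs D_{r}\bs P(\bs\beta_{0})\bs D_{r}$. Taking the quadratic form with $\bs\iota$ on both sides yields $\AR(\bs\beta_{0})\overset{(d)}{=}\AR_{r}(\bs\beta_{0})$, while $\sigma_{n}^{2}(\bs\beta_{0})$ is invariant under the transformation because $(\bs D_{r}\bs P(\bs\beta_{0})\bs D_{r})_{ij}^{2}=P_{ij}(\bs\beta_{0})^{2}$. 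This reduces the problem to analysing $\AR_{r}(\bs\beta_{0})$ with $\bs P(\bs\beta_{0})$ treated as fixed on the probability-one event from Assumption~\ref{ass:AR}.

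Using $r_{i}^{2}=1$ and $\tr(\bs P(\bs\beta_{0}))=k$ from Assumption~\ref{ass:AR}(a),
\[
  \AR_{r}(\bs\beta_{0})-k \;=\; \sum_{i\neq j}P_{ij}(\bs\beta_{0})\,r_{i}r_{j},
\]
which I would organize as a sum of martingale differences $\xi_{n,i}=2r_{i}\sum_{j<i}P_{ij}(\bs\beta_{0})r_{j}$ with respect to the filtration generated by $(r_{1},\ldots,r_{i})$. A direct computation shows that, conditional on $\mathcal{J}=\{\bs g_{i}(\bs\beta_{0})\}_{i=1}^{n}$, the sum of the conditional variances equals $2\sum_{i\neq j}P_{ij}(\bs\beta_{0})^{2}=k\sigma_{n}^{2}(\bs\beta_{0})$, which by Assumption~\ref{ass:AR}(b) is bounded away from zero for all sufficiently large $n$ with probability one.

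To invoke Lemma~A2 of \citet{chao2012asymptotic} conditional on $\mathcal{J}$ I need to verify a Lindeberg-type condition, which I expect to be the main obstacle. The verification is tractable because the Rademacher variables are bounded and $|P_{ij}(\bs\beta_{0})|\le 1$ since $\bs P(\bs\beta_{0})$ is a projector: a fourth-moment calculation for $\sum_{i}\E[\xi_{n,i}^{4}\mid\mathcal{J}]$ reduces to bounding sums like $\sum_{i,j,k,l}P_{ij}P_{ik}P_{lj}P_{lk}$, each controlled by $\tr(\bs P(\bs\beta_{0})^{4})\le k$, so the ratio to $(k\sigma_{n}^{2})^{2}$ vanishes once Assumption~\ref{ass:AR}(b) is used to lower-bound the denominator. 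The conditional CLT then yields $(k\sigma_{n}^{2})^{-1/2}(\AR_{r}(\bs\beta_{0})-k)\to_{d}N(0,1)$ almost surely in $\mathcal{J}$, and bounded convergence applied to the conditional characteristic function upgrades this to unconditional convergence. The corresponding statement for $\AR(\bs\beta_{0})$ then follows from the distributional equivalence established in the first step.
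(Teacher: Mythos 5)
Your proposal follows essentially the same route as the paper: Assumption~\ref{ass:main} gives $\bs P(\bs\beta_{0})\overset{(d)}{=}\bs D_{r}\bs P(\bs\beta_{0})\bs D_{r}$ and hence $\AR(\bs\beta_{0})\overset{(d)}{=}\AR_{r}(\bs\beta_{0})$, one conditions on $\mathcal{J}$ so that only the Rademacher signs are random, and the conclusion is read off from Lemma A2 of \citet{chao2012asymptotic} (whose proof is precisely the martingale-difference decomposition you write down), with the unconditional statement following by bounded convergence of the conditional distribution. One caveat: your claim that the sum of conditional variances $\sum_{i}\E[\xi_{n,i}^{2}\mid\mathcal{F}_{i-1,n},\mathcal{J}]$ \emph{equals} $k\sigma_{n}^{2}(\bs\beta_{0})$ is only true in conditional expectation; the sum is itself a random quadratic form in $\bs r$, and the martingale CLT additionally requires that, normalized by $k\sigma_{n}^{2}(\bs\beta_{0})$, it converge to one in probability --- this variance-stabilization step, not the Lyapunov/Lindeberg bound you sketch, is the main technical content of the proof of Lemma A2. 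Since you ultimately invoke that lemma (as the paper does, with Assumption~\ref{ass:AR}(b) standing in for the usual $P_{ii}\leq C<1$ condition), this does not invalidate the argument, but if you intend a self-contained martingale proof that step must still be supplied.
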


Theorem \ref{corr:AR} shows that the AR statistic needs to be shifted and scaled to have a well-defined asymptotic distribution. A similar result is obtained by \citet{anatolyev2011specification} for the AR statistic in a homoskedastic IV model with many instruments. 
While Theorem \ref{corr:AR} requires $k\rightarrow\infty$, we can achieve uniform inference across $k$ by testing based on the quantiles of the distribution of $Z=(2k)^{-1/2}(Z_{1}-k)$ where $Z_{1}\sim \chi^2(k)$. When $k$ is fixed, $\sigma_{n}^2(\bs\beta_{0})\rightarrow_{p} 2$, and hence, we compare $\AR(\bs\beta_{0})$ against the quantiles of a $\chi^2(k)$ distribution. When $k$ increases, the quantiles of $Z$ approach that of the standard normal distribution and Theorem \ref{corr:AR} applies. 

Theorem~\ref{corr:AR} implies that the usual AR test will be conservative at conventional significance levels. The proof implies that we reject a value of $\bs\beta$ if \begin{equation*}
    \AR(\bs\beta)>\chi^2(k)_{1-\alpha}-[\chi^2(k)_{1-\alpha}-k]\cdot \bigg[1-\bigg(1-k^{-1}\sum_{i=1}^{n}P_{ii}(\bs\beta)^2\bigg)^{1/2}\bigg].
\end{equation*}
If $\alpha< 0.3$, we have that $\chi^2(k)_{1-\alpha}>k$ for all $k$. We then see that the usual AR-test will reject less often relative to the many moment robust test. 
This result does not contradict the fact that both procedures are asymptotically size correct under a fixed number of moments, as in that case  $k^{-1}\sum_{i=1}^{n}P_{ii}(\bs\beta)^2\rightarrow_{p}0$.

\subsection{Clustered moment conditions}\label{subsec:cluster}
Assumption \ref{ass:main} implies that under the null, randomly flipping the sign of the moment conditions of one observation, does not alter the distribution of the moment conditions of another observation. Many studies however, allow for clustered dependence between observations. 
If instead of Assumption~\ref{ass:main}, we assume that the moment conditions of all observations within a cluster are jointly reflection invariant, we can generalize the AR statistic from above to allow for clustered data by taking similar steps as in the linear IV model analyzed in
\citet{ligtenberg2023inference}. 

Assume that the $n$ observations can be grouped in $H$ clusters with $n_h$ observations in cluster $h=1,\dots,H$.  Let $[h]$ denote all observations from cluster $h$ and for any $n\times m$ matrix $\bs A$ write $\bs A_{[h]}$ as the $n_h\times m$ submatrix of $\bs A$ with only the rows in $[h]$ selected. Then we adapt Assumption \ref{ass:main} as follows.
\begin{assumption}\label{ass:main cluster}
    Let $\{r_{h}\}_{h=1}^{H}$ be a sequence of independent Rademacher random variables. For all $h=1,\dots,H$ it holds that $\bs G(\bs\beta_{0})_{[h]}\overset{(d)}{=}r_h\bs G(\bs\beta_{0})_{[h]}$.
\end{assumption}

The CU objective function \eqref{eq:objgen} can be easily adapted to clustered dependence by summing the moment conditions of all observations within a cluster. This allows to write Assumption \ref{ass:main cluster} in a form that resembles Assumption \ref{ass:main}. Let $\tilde{\bs G}(\bs\beta)_{h}=\bs\iota_{n_h}'\bs G(\bs\beta_0)_{[h]}$ be the summed moment conditions, which we stack in the $H\times k$ matrix $\tilde{\bs G}(\bs\beta_0)$. Then, under Assumption \ref{ass:main cluster}, $\tilde{\bs G}(\bs\beta_0)\overset{(d)}{=}\bs D_{\tilde{r}}\tilde{\bs G}(\bs\beta_0)$, where $\tilde{\bs r}=(r_1,\dots,r_H)'$. 
We then define $\tilde{\bs P}(\bs\beta_0)=\tilde{\bs G}(\bs\beta_0)(\tilde{\bs G}(\bs\beta_0)'\tilde{\bs G}(\bs\beta_0))^{-1}\tilde{\bs G}(\bs\beta_0)'$. The AR statistic is now given by $\tilde{\text{AR}}(
\bs\beta) = \bs\iota_{H}'\tilde{\bs P}(\bs\beta)\bs\iota_{H}$. To derive its distribution, make the following assumption similar to Assumption \ref{ass:AR}.

\begin{assumption}\label{ass:AR cluster} 
With probability 1 for all sufficiently large $n$: (a)  $\rank[\tilde{\bs P}(\bs\beta_{0})]=k$, 
(b) $\tilde{\sigma}_{n}^2(\bs\beta_{0})>1/C$ where
\begin{equation}\label{eq:sigman cluster}
    \tilde{\sigma}_{n}^2(\bs\beta_{0}) = \frac{2}{k}\sum_{h_1 \neq h_2} \tilde{P}_{h_1 h_2}(\bs\beta_{0})^2.
\end{equation}
\end{assumption}
Part (a) is ensured by full column rank of $\bs G(\bs\beta_0)$, which also implies that $\rank[\bs P(\bs\beta_0)]=k$. 
Assumptions \ref{ass:main cluster} and \ref{ass:AR cluster} then yield the following extension of Theorem \ref{corr:AR} to clustered data.

\begin{corollary}
    Under Assumptions \ref{ass:main cluster} and \ref{ass:AR cluster}, when $k\rightarrow\infty$ as $H\rightarrow\infty$, $(k\tilde{\sigma}_{n}^2(\bs\beta_{0}))^{-1/2}(\tilde{\AR}(\bs \beta_{0})-k) \rightarrow_{d} N(0,1)$.
\end{corollary}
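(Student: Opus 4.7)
The approach is to mirror the proof of Theorem \ref{corr:AR}, now with a cluster playing the role of an individual observation. The key observation, already made right after Assumption \ref{ass:main cluster}, is that aggregating the moment conditions within each cluster converts joint within-cluster reflection invariance into observation-level invariance at the cluster level: $\tilde{\bs G}(\bs\beta_0)\overset{(d)}{=}\bs D_{\tilde r}\tilde{\bs G}(\bs\beta_0)$ with $\tilde{\bs r}$ consisting of $H$ independent Rademacher variables. Structurally, the clustered problem has thus been reduced to the setting of Theorem \ref{corr:AR} on $H$ pseudo-observations.

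Given this, the first step is to transfer the invariance from $\tilde{\bs G}(\bs\beta_0)$ to $\tilde{\AR}(\bs\beta_0)$. Using $\bs D_{\tilde r}^2=\bs I_H$, one has $\tilde{\bs P}(\bs D_{\tilde r}\tilde{\bs G}(\bs\beta_0))=\bs D_{\tilde r}\tilde{\bs P}(\bs\beta_0)\bs D_{\tilde r}$, and therefore
\begin{equation*}
\tilde{\AR}(\bs\beta_0)=\bs\iota_H'\tilde{\bs P}(\bs\beta_0)\bs\iota_H\overset{(d)}{=}\bs\iota_H'\bs D_{\tilde r}\tilde{\bs P}(\bs\beta_0)\bs D_{\tilde r}\bs\iota_H=\tilde{\bs r}'\tilde{\bs P}(\bs\beta_0)\tilde{\bs r}.
\end{equation*}
Conditional on $\tilde{\mathcal J}=\{\tilde{\bs g}_h(\bs\beta_0)\}_{h=1}^{H}$, the right-hand side is a bilinear form in i.i.d.\ Rademacher random variables, with mean $\tr\tilde{\bs P}(\bs\beta_0)=k$ and variance $2\sum_{h_1\neq h_2}\tilde P_{h_1h_2}(\bs\beta_0)^2=k\tilde\sigma_n^2(\bs\beta_0)$.

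The second step is to invoke Lemma A2 of \citet{chao2012asymptotic} conditional on $\tilde{\mathcal J}$. Assumption \ref{ass:AR cluster}(a) ensures $\tilde{\bs P}(\bs\beta_0)$ is an idempotent rank-$k$ matrix with $k\to\infty$, and Assumption \ref{ass:AR cluster}(b) bounds $\tilde\sigma_n^2(\bs\beta_0)$ away from zero almost surely for all sufficiently large $n$, so the prerequisites of Lemma A2 hold almost surely in $\tilde{\mathcal J}$. This yields $(k\tilde\sigma_n^2(\bs\beta_0))^{-1/2}(\tilde{\bs r}'\tilde{\bs P}(\bs\beta_0)\tilde{\bs r}-k)\rightarrow_d N(0,1)$ almost surely in $\tilde{\mathcal J}$; removing the conditioning via bounded convergence applied to the characteristic function, and combining with the distributional equivalence of step one, gives the stated limit for $\tilde{\AR}(\bs\beta_0)$.

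The main obstacle is essentially one of bookkeeping rather than new analysis: the reflection-invariance argument must be applied to cluster-summed moments, and the rank and variance conditions driving Lemma A2 must be restated at the cluster level. Assumption \ref{ass:AR cluster} is precisely designed to provide these cluster-level analogues of Assumption \ref{ass:AR}, after which the proof collapses to that of Theorem \ref{corr:AR} with $H$ playing the role of $n$ and $\tilde{\bs P}$ playing the role of $\bs P$.
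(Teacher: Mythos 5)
Your proposal is correct and follows essentially the same route the paper intends: the paper states the corollary without a separate proof precisely because, after summing moment conditions within clusters, Assumption \ref{ass:main cluster} gives $\tilde{\bs G}(\bs\beta_0)\overset{(d)}{=}\bs D_{\tilde r}\tilde{\bs G}(\bs\beta_0)$ and the argument of Theorem \ref{corr:AR} (conditioning on the cluster-summed moments, applying Lemma A2 of \citet{chao2012asymptotic} to the Rademacher bilinear form, and de-conditioning by dominated convergence) carries over verbatim with $H$ in place of $n$ and $\tilde{\bs P}$ in place of $\bs P$. Your bookkeeping of the conditional mean $k$ and variance $k\tilde\sigma_n^2(\bs\beta_0)$, and your use of Assumption \ref{ass:AR cluster} to supply the rank and variance conditions, match the paper's reasoning.
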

We note that under clustering, many moment sequences are defined as a number of moments that grows proportional to the number of clusters. As such, the corrections can become relevant for a relatively small number of moments even in large data sets.

\subsection{Extension: score-based tests in linear IV}\label{sec:score}
The fact that the standard AR test is conservative under many moment sequences and reflection invariance as shown in Section \ref{subsec:independent}, raises the question whether the same is true for a statistic based on the score of the CU objective function given in \eqref{eq:objgen}. To gain insight into this question, we specialize to the linear IV model with heteroskedasticity from Example \ref{ex:linIV}. We then consider the application of Assumption \ref{ass:main} to analyze the score statistic and show that it is indeed conservative under many moment sequences and reflection invariance.

The following notation will be convenient below: for some $\bs\beta$, not necessarily equal to $\bs\beta_{0}$, $\varepsilon_{i}(\bs\beta)=y_{i}-\bs x_{i}'\bs\beta$ and $\bs\varepsilon(\bs\beta) = (\varepsilon_{1}(\bs\beta),\ldots,\varepsilon_{n}(\bs\beta))'$. The model \eqref{eq:model1} is accompanied by the following assumptions.
\begin{assumption}\label{ass:model} 
    (a) Conditional on $\bs Z$, $\{\varepsilon_{i},\bs{\eta}_{i}'\}_{i=1}^{n}$ is independent, with mean zero and $\E[(\varepsilon_{i},\bs\eta_{i}')'(\varepsilon_{i},\bs\eta_{i}')|\bs Z] = \bs \Sigma_{i}=\begin{array}{cccc} (\sigma_{i}^2 & \bs\sigma_{12i}'; & \bs\sigma_{12i}&\bs\Sigma_{22i})\end{array}$ (b) $0<C^{-1}\leq \lambda_{\min}(\bs\Sigma_{i})\leq\lambda_{\max}(\bs\Sigma_{i})\allowbreak\leq C<\infty $ a.s., (c) For all $i$, $\E[\varepsilon_{i}^4|\bs Z]\leq C<\infty$ a.s.\ and $\E[\Vert\bs\eta_{i}\Vert^4|\bs Z]\leq C<\infty$ a.s. (d) Conditional on $\bs Z$, $\bs\varepsilon\overset{(d)}{=}\bs D_{r}\bs\varepsilon$ with $\bs D_{r}$ as in Assumption \ref{ass:main}.
\end{assumption}
Part (a)--(c) are similar to those made by \citet{crudu2021inference}, \citet{mikusheva2021inference} and \citet{matsushita2020jackknife} for their jackknife tests. Part (d) ensures that the moment conditions are reflection invariant. 

To obtain the limiting distribution of the first order conditions of the CU objective function, we make the following assumption on the IV model in \eqref{eq:model1}.
\begin{assumption}\label{ass:decomposev}
    Consider $\bs\eta_{i}$ and $\varepsilon_{i}$ as in \eqref{eq:model1}. Then, $
	\bs\eta_{i} = \varepsilon_{i}\bs a_{i} + \bs u_{i},$
where $\bs a_{i}=\bs\sigma_{21i}/\sigma_{i}^2$, and, conditional on $\bs Z$, $\{\bs u_{i},\varepsilon_{i}\}$ are mutually independent. 
\end{assumption}
This assumption can also be found in \citet{bekker2003finite}, \citet{dokotchatoka2020exogeneity}, and \citet{fraizier2024weak}, but it is nevertheless a strong assumption. It is for example satisfied if $(\varepsilon_{i},\bs\eta_{i}')$ is multivariate normal. We use Assumption \ref{ass:decomposev} to write
\begin{equation}\label{def:barxi}
\bs x_i=\bar{\bs x}_i+\varepsilon_i\bs a_i, \quad \bar{\bs x}_i=\bar{\bs z}_i+\bs u_i,
\end{equation}
for $\bar{\bs z}_{i}=\bs z_{i}'\bs\Pi$ and stack these in the matrix $\bar{\bs Z}$. 
Define $\bs{x}_{(l)}$ as the column vector $(x_{1h},\ldots,x_{nh})'$. The score of a half times the CU objective function is
\begin{equation*}\label{eq:foc}
	S_{l}(\bs\beta) = \frac{1}{2}\frac{\partial Q(\bs\beta)}{\partial \beta_{l}}= -\frac{1}{n}\bs x_{(l)}'(\bs I-\bs D_{P(\beta)\iota})\bs Z(\bs Z'\bs D_{\varepsilon(\beta)^2}\bs Z)^{-1} \bs Z'\bs\varepsilon(\bs\beta).
\end{equation*}
Under Assumption \ref{ass:main}, and using the decomposition of $\bs\eta_i$ in Assumption \ref{ass:decomposev} we find that, conditional on $\bs Z$, $S_{l}(\bs\beta_{0})\overset{(d)}{=}S_{l,r}(\bs\beta_{0})$, with
\begin{equation}\label{eq:foc_r}
\begin{split}
        S_{l,r}(\bs\beta_{0})	&=-\frac{1}{n}\bar{\bs x}_{(l)}'\bs Z(\bs Z'\bs D_{\varepsilon^2}\bs Z)^{-1} \bs Z'\bs D_{\varepsilon}\bs r  -\frac{1}{n}\bs r'\bs D_{a_{(l)}}\bs P\bs r + \frac{1}{n}\bs r'\bs P\bs D_{a_{(l)}}\bs P\bs r\\
        &\quad + \frac{1}{n}\bs r'\bs P\bs D_{r}\bs D_{\bar{x}_{(l)}}\bs Z(\bs Z'\bs D_{\varepsilon^2}\bs Z)^{-1} \bs Z'\bs D_{\varepsilon}\bs r,
        \end{split}
\end{equation}
where $\bs{a}_{(l)} = (a_{1l},\ldots, a_{nl})'$ and $\bs P = \bs P(\bs\beta_{0})$.

Note that the limiting distribution of the score is not covered by the results in \citet{chao2012asymptotic} as it contains products of Rademacher variables up to order three. To describe the joint limiting distribution of the AR statistic and the score, we need the following assumptions.
\begin{assumption}\label{ass:eigbound} 
    (a) $\sum_{i=1}^{n}\|\bar{\bs z}_{i}\|^2/n\leq C<\infty$ a.s.n., (b) $\displaystyle{\max_{i=1,\ldots,n}}\|\bar{\bs z}_{i}\|^2/n\rightarrow_{a.s.}0$, 
    (c) $\displaystyle{\max_{i=1,\ldots,n}}\|\bar{\bs Z}'\bs V\bs D_{\varepsilon}\bs e_{i}\|^2/n\rightarrow_{a.s.}0,$ (d) $0<C^{-1}\leq\lambdamin(\frac{1}{n}\bs Z'\bs Z)\leq\lambdamax(\bs Z'\bs Z/n)\leq C<\infty$ a.s.n., $0<C^{-1}\leq\lambdamin(\bs Z'\bs D_{\varepsilon}^2\bs Z/n)\leq\lambdamax(\bs Z'\bs D_{\varepsilon}^2\bs Z/n)\leq C<\infty$ a.s.n., (e) $P_{ii}\leq C<1$ a.s.n.
\end{assumption}
Part (a) and (b) are standard assumptions under many instruments. Part (a) also appears in \citet{chao2012asymptotic} and \citet{hausman2012instrumental}, who instead of (b) require $n^{-2}\sum_{i=1}^{n}\|\bar{\bs z}_{i}\|^4\rightarrow_{a.s.}0$. We see that this condition is implied by Assumption \ref{ass:eigbound} parts (a) and (b). In particular, (b) is a Lyapunov condition needed for the CLT we employ. Part (c) is another Lyapunov condition needed for the CLT under heteroskedasticity. Part (d) ensures that $\bs Z(\bs Z'\bs D_{\varepsilon}^2\bs Z)^{-1}\bs Z'$ has bounded eigenvalues $a.s.n.$ Part (e) is a `balanced design' assumption, e.g. \citet{chao2012asymptotic}. Finally, note that we make no assumption on $\bs\Pi$, allowing for weak and even irrelevant instruments. 

The joint limiting distribution of the AR statistic and the score evaluated at the true parameter $\bs\beta_{0}$ is given in the following theorem.
\begin{theorem}\label{thm:main} Let $\bs T(\bs\beta_{0}) = (k^{-1/2}(\AR(\bs\beta_{0})-k),
		\sqrt{n} \bs S(\bs\beta_{0}))'$. Under Assumptions \ref{ass:model} to \ref{ass:eigbound}, when $n\rightarrow\infty$ and $k/n\rightarrow\lambda\in (0,1)$, 
		$\hat{\bs\Sigma}_{n}(\bs\beta_{0})^{-1/2}\bs T(\bs\beta_{0}) \rightarrow_{d} N(\bs 0,\bs I_{p+1})$,
     with 
    \begin{equation}
	\hat{\bs\Sigma}_n(\bs\beta) =
	\begin{pmatrix}
		\hat\sigma_{n}^2(\bs\beta) & \big[\hat{\bs\Sigma}_n(\bs\beta)\big]_{2:p+1,1}' \\
	    \big[\hat{\bs\Sigma}_n(\bs\beta)\big]_{2:p+1,1}&\hat{\bs\Omega}(\bs\beta)
	\end{pmatrix},
    \end{equation}
    an unbiased and consistent estimator of
    \begin{equation}
    \bs\Sigma_{n}(\bs\beta_0)=\var(\bs T(\bs\beta_{0})|\mathcal{J})=\begin{pmatrix}
		\sigma_{n}^2(\bs\beta_0) & \big[\bs\Sigma_n(\bs\beta_0)\big]_{2:p+1,1}' \\
	    \big[\bs\Sigma_n(\bs\beta_0)\big]_{2:p+1,1}&\bs\Omega(\bs\beta_0)
	\end{pmatrix}.
\end{equation}
    Both the variance and the estimator are detailed in Appendix~\ref{sapp:proof main}.

\end{theorem}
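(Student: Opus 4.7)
The plan is to exploit Assumption \ref{ass:model}(d) to recast $\bs T(\bs\beta_{0})$, conditional on $\mathcal{J}$, as a vector of polynomials in i.i.d.\ Rademacher variables of degree at most three, compute its conditional mean and variance explicitly, and then apply a martingale CLT that extends the bilinear-form result of \citet{chao2012asymptotic} invoked in Theorem \ref{corr:AR} to accommodate an additional cubic term. Substituting $\bs\varepsilon\overset{(d)}{=}\bs D_{r}\bs\varepsilon$ into $\bs T(\bs\beta_{0})$ and using the decomposition $\bs\eta_{i}=\varepsilon_{i}\bs a_{i}+\bs u_{i}$ from Assumption \ref{ass:decomposev} yields $\bs T(\bs\beta_{0})\overset{(d)}{=}\bs T_{r}(\bs\beta_{0})$ conditional on $\mathcal{J}$, with first component $k^{-1/2}(\bs r'\bs P\bs r-k)$ and remaining components $\sqrt{n}S_{l,r}(\bs\beta_{0})$ as in \eqref{eq:foc_r}. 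It then suffices to prove the theorem for $\bs T_{r}$ almost surely in $\mathcal{J}$ and transport it to the unconditional statement.

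The conditional moments follow from $\E[r_{i}]=0$, $r_{i}^{2}=1$, and the vanishing of every odd-order cross moment of independent Rademachers. A term-by-term calculation in \eqref{eq:foc_r} confirms the stated centering, recovers $\sigma_{n}^{2}(\bs\beta_{0})$ in \eqref{eq:sigman} as the AR variance, and produces the explicit score and cross-block entries of $\bs\Sigma_{n}(\bs\beta_{0})$ stated in Appendix~\ref{sapp:proof main}; cross-covariances between the AR statistic and the scores arise only from matched Rademacher pairings between the quadratic AR form and the cubic term of $S_{l,r}$.

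By the Cram\'er--Wold device, it remains to prove $\bs\lambda'\bs\Sigma_{n}(\bs\beta_{0})^{-1/2}\bs T_{r}(\bs\beta_{0})\rightarrow_{d}N(0,\bs\lambda'\bs\lambda)$ for each fixed $\bs\lambda\in\mathbb{R}^{p+1}$. Setting $\mathcal{F}_{i}=\sigma(\mathcal{J},r_{1},\ldots,r_{i})$, I would write this scalar as $\sum_{i=1}^{n}D_{i}$ with $D_{i}=\E[\bs\lambda'\bs\Sigma_{n}^{-1/2}\bs T_{r}\mid\mathcal{F}_{i}]-\E[\bs\lambda'\bs\Sigma_{n}^{-1/2}\bs T_{r}\mid\mathcal{F}_{i-1}]$, a martingale difference sequence. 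For the quadratic contributions the argument follows that of \citet{chao2012asymptotic}. The main obstacle is the cubic piece $\frac{1}{n}\bs r'\bs P\bs D_{r}\bs M\bs r$ with $\bs M=\bs D_{\bar{x}_{(l)}}\bs Z(\bs Z'\bs D_{\varepsilon^{2}}\bs Z)^{-1}\bs Z'\bs D_{\varepsilon}$: the resulting $D_{i}$ contains double Rademacher products of indices $j_{1},j_{2}<i$, so I need to establish (i) the conditional variance convergence $\sum_{i}\E[D_{i}^{2}\mid\mathcal{F}_{i-1}]\rightarrow_{p}\bs\lambda'\bs\lambda$ and (ii) a Lyapunov bound $\sum_{i}\E[D_{i}^{4}\mid\mathcal{J}]\rightarrow 0$. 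Assumption \ref{ass:eigbound}(a) keeps the kernels bounded, (b)--(c) supply the Lyapunov controls on third- and fourth-order Rademacher expansions, (d) bounds $(\bs Z'\bs D_{\varepsilon^{2}}\bs Z)^{-1}$, and the balanced-design condition (e) prevents degeneracy of the off-diagonal contributions; together with Assumption \ref{ass:model}(b)--(c), they yield the extended CLT for bilinear forms with a third-order term advertised in the introduction.

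Unbiasedness of $\hat{\bs\Sigma}_{n}(\bs\beta_{0})$ under the Rademacher measure follows by matching expectations with the formula for $\bs\Sigma_{n}(\bs\beta_{0})$ derived above, and consistency is established by a variance calculation using the fourth-moment bounds in Assumption \ref{ass:model}(c) together with Assumption \ref{ass:eigbound}(a)--(d). Slutsky's lemma then upgrades the convergence to $\hat{\bs\Sigma}_{n}(\bs\beta_{0})^{-1/2}\bs T(\bs\beta_{0})\rightarrow_{d}N(\bs 0,\bs I_{p+1})$.
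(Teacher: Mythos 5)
Your overall route is the paper's own: condition on $\mathcal{J}$, use reflection invariance plus Assumption \ref{ass:decomposev} to recast $\bs T(\bs\beta_{0})$ as a polynomial of degree at most three in independent Rademacher variables, compute the conditional moments by Rademacher moment identities, prove a martingale-difference CLT (Cram\'er--Wold, fourth-moment Lyapunov bound, convergence of the conditional variance), pass back to the unconditional law, and combine with unbiasedness/consistency of $\hat{\bs\Sigma}_n$ and Slutsky; this is exactly the structure of Lemmas \ref{thm:expandvar}, \ref{thm:var unbias} and \ref{lem:AR score} in Appendix \ref{sapp:proof main}. However, one step as you describe it would fail. You claim the AR--score covariance ``arises only from matched Rademacher pairings between the quadratic AR form and the cubic term of $S_{l,r}$.'' By parity this pairing contributes nothing: the centered AR statistic is an even (quadratic) form in $\bs r$, while the term $\frac{1}{n}\bs r'\bs P\bs D_{r}\bs D_{\bar{x}_{(l)}}\bs V\bs D_{\varepsilon}\bs r$ in \eqref{eq:foc_r} splits into purely linear and purely cubic (odd) forms once coincident indices are collapsed, so every cross moment with the AR form involves an odd total number of Rademacher factors and vanishes. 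The nonzero cross block comes instead from the genuinely quadratic part of the score, $-\frac{1}{n}\bs r'\bs D_{a_{(l)}}\bs P\bs r+\frac{1}{n}\bs r'\bs P\bs D_{a_{(l)}}\bs P\bs r=-\frac{1}{n}\bs r'\bs\Psi^{(l)}\bs r$ with $\bs\Psi^{(l)}=\bs M\bs D_{a_{(l)}}\bs P$, paired with the quadratic AR form; the Rademacher identity $\E[\bs r'\bs P\bs r\,\bs r'\bs\Psi^{(l)}\bs r]=-2\tr(\bs D_{P}\bs D_{\Psi^{(l)}})$ (since $\bs P\bs M=\bs O$ and $\tr(\bs\Psi^{(l)})=0$) gives exactly $[\bs\Sigma_n]_{l+1,1}=2(nk)^{-1/2}\tr(\bs\Psi^{(l)}\odot\bs P)$ as in \eqref{eq:covariance AR score}. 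Executed as written, your calculation would return a zero cross block and hence the wrong $\bs\Sigma_n$, which then propagates into the conditional-variance convergence step and the consistency claim for $\hat{\bs\Sigma}_n$.

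A second, smaller gap: before applying Cram\'er--Wold to $\bs\Sigma_{n}^{-1/2}\bs T_{r}(\bs\beta_{0})$ you need $\bs\Sigma_n$ to be nonsingular with probability one for all large $n$, and ``(e) prevents degeneracy'' is an assertion, not an argument. The paper devotes a separate part of Appendix \ref{sapp:proof AR score} to this: it bounds the score variance below via $c_{i}\geq V_{ii}(1-P_{ii})(1-2P_{ii})^2$ together with Assumption \ref{ass:eigbound}, and it shows the correlations between the AR statistic and each score component are bounded away from $\pm 1$ so that the Schur complement of $\bs\Sigma_n$ stays positive definite. Some version of that argument must appear in your proof; otherwise the normalization in the statement is not well defined.
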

\begin{proof}
See Appendix \ref{sapp:proof main}.
\end{proof}

In Appendix~\ref{sapp:proof main} we show that the variance of the score, $\bs\Omega(\bs\beta_0)$, can be decomposed into terms that are present in the case with a limited number of moments, $\bs\Omega^{L}(\bs\beta_0)$, and additional terms labeled $\bs\Omega^{H}(\bs\beta_{0})$ that appear due to the presence of many moments. That is, $\bs\Omega(\bs\beta_0)=\bs\Omega^{L}(\bs\beta_0)+\bs\Omega^{H}(\bs\beta_0)$. The most important property of the additional variance terms in $\bs \Omega^{H}(\bs\beta_{0})$ is given by the following result.
\begin{lemma}
    \label{lem:negdef}
   Suppose that Assumptions~\ref{ass:model} to \ref{ass:decomposev} hold, and furthermore, $\max_{i=1,\ldots,n}P_{ii}\leq 0.9$ and $ n^{-1}\sum_{i=1}^{n}V_{ii}P_{ii}>0$. Then, $\bs\Omega^{H}(\bs\beta_{0})$ from Theorem \ref{thm:expandvar} is negative definite.
\end{lemma}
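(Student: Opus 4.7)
The plan is to start from the explicit decomposition $\bs\Omega(\bs\beta_0)=\bs\Omega^{L}(\bs\beta_0)+\bs\Omega^{H}(\bs\beta_0)$ produced by Theorem~\ref{thm:expandvar} in the appendix and to show that $\bs\Omega^{H}(\bs\beta_0)$ can be rewritten as a non-negatively weighted Gram matrix with an overall minus sign. Since the representation $\bs S_{r}(\bs\beta_0)$ in \eqref{eq:foc_r} contains Rademacher products of orders one, two, and three, the conditional variance under $\bs r$ involves Rademacher moments up to order six; using $r_i^2=1$ each such moment reduces to an indicator on coinciding indices, and the ``high-dimensional'' piece $\bs\Omega^{H}(\bs\beta_0)$ is precisely the collection of contributions in which a diagonal projector entry $P_{ii}$ or a squared projector row-sum $\sum_{j}P_{ij}^{2}=P_{ii}$ survives. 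The resulting expression is therefore a sum over $i$ of rank-one outer products, weighted by polynomials in $P_{ii}$ and $V_{ii}$.

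The key step is to reorganise these terms so that
\begin{equation*}
\bs\Omega^{H}(\bs\beta_0)\;=\;-\frac{1}{n^{2}}\sum_{i=1}^{n}\phi_i(P_{ii},V_{ii})\,\bs c_i\bs c_i',
\end{equation*}
where $\bs c_i\in\mathbb{R}^{p}$ is built from $\bs a_i$ together with the $i$th row of $\bar{\bs Z}(\bs Z'\bs D_{\varepsilon^{2}}\bs Z)^{-1}\bs Z'\bs D_{\varepsilon}$, and $\phi_i$ is a polynomial whose leading positive contribution is proportional to $V_{ii}P_{ii}$ and whose only potentially negative pieces carry a prefactor $(1-P_{ii})$. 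The hypothesis $\max_{i}P_{ii}\leq 0.9$ guarantees $1-P_{ii}\geq 0.1$, which prevents any sign flip in $\phi_i$; combined with Assumption~\ref{ass:model}(b) this yields $\phi_i\geq 0$ for every $i$, hence $\bs\Omega^{H}(\bs\beta_0)\preceq\bs 0$.

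To upgrade to strict negative definiteness I would invoke the second hypothesis $n^{-1}\sum_{i}V_{ii}P_{ii}>0$: this ensures that at least a constant fraction of the weights $\phi_i$ are strictly positive, so the averaged Gram matrix $n^{-2}\sum_i \phi_i \bs c_i\bs c_i'$ has rank $p$ by Assumption~\ref{ass:eigbound}(d), which rules out collinearities among $\{\bs c_i\}$ via the uniform invertibility of $\bs Z'\bs Z/n$ and $\bs Z'\bs D_{\varepsilon^{2}}\bs Z/n$. Consequently $\bs v'\bs\Omega^{H}(\bs\beta_0)\bs v<0$ for every nonzero $\bs v\in\mathbb{R}^{p}$.

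The main obstacle is the bookkeeping required to reduce the six-index Rademacher variance sum to the clean form displayed above. Cross products between the quadratic pieces $\bs r'\bs D_{a_{(l)}}\bs P\bs r$, $\bs r'\bs P\bs D_{a_{(l)}}\bs P\bs r$ and the cubic piece of \eqref{eq:foc_r} generate partially cancelling terms, and one must repeatedly use $\bs P^{2}=\bs P$ together with $\sum_j P_{ij}^{2}=P_{ii}$ to collapse them onto diagonal index sums. The role of the bound $0.9$ is exactly to guarantee that after these cancellations the surviving coefficient is unambiguously non-negative; any weaker bound allowing $P_{ii}$ arbitrarily close to $1$ would leave the sign of $\phi_i$ indeterminate and break the conclusion.
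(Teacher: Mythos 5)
Your central claim---that $\bs\Omega^{H}(\bs\beta_{0})$ can be written as $-\frac{1}{n^{2}}\sum_{i}\phi_i(P_{ii},V_{ii})\,\bs c_i\bs c_i'$ with weights depending only on the diagonal entries $(P_{ii},V_{ii})$ and vectors $\bs c_i$ built from $\bs a_i$ and the $i$th row of $\bar{\bs Z}(\bs Z'\bs D_{\varepsilon^{2}}\bs Z)^{-1}\bs Z'\bs D_{\varepsilon}$---does not hold, and the gap is structural rather than a matter of bookkeeping. From Lemma~\ref{thm:expandvar}, $\bs\Omega^{H}$ has three pieces: $\Omega^{H,z}=\frac1n\bar{\bs z}_{(l_1)}'(\bs V\odot\bs W)\bar{\bs z}_{(l_2)}$, whose matrix $\bs V\odot\bs W$ has genuinely off-diagonal entries $[\bs V\odot\bs W]_{i_1i_2}$ involving $V_{i_1i_2}$ and $P_{i_1i_2}^2$ (and whose diagonal contains $\sum_{i_2}V_{i_1i_2}^2\varepsilon_{i_2}^2P_{i_1i_2}^2$, not a function of $P_{i_1i_1},V_{i_1i_1}$ alone); $\Omega^{H,a}=-\frac2n\bs a_{(l_1)}'(\bs D_P-\bs P\odot\bs P)^2\bs a_{(l_2)}$, again a non-diagonal weighting; and $\Omega^{H,u}$, which involves the full-rank matrices $\bs\Sigma^{U}_i=\var(\bs u_i\mid\mathcal J)$ and cannot be absorbed into rank-one outer products of $\bs a_i$ and $\bar{\bs z}_i$ at all. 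The bound $\max_iP_{ii}\le 0.9$ is not there to keep a single-index weight $\phi_i$ nonnegative: in the paper it is used to sign the \emph{off-diagonal} entries $w_{i_1i_2}$ of $\bs W$ (the case $P_{i_1i_1}+P_{i_2i_2}\ge 3/4$ gives the threshold $\approx 0.904$), and the diagonal entries $w_{ii}$ are signed via a separate, nonobvious device---the observation that $\Omega_{jj}\ge 0$ in an auxiliary model with $\bs\Pi=\bs O$, $\bs a_{(l)}=\bs 0$ and unit $\bs\Sigma^{U}$, which yields the inequality the paper labels \eqref{eq:usefuldiagonal}. None of this is reachable from a representation in which only $(P_{ii},V_{ii})$ appear.

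Your route to \emph{strict} negative definiteness also fails. The hypothesis $n^{-1}\sum_iV_{ii}P_{ii}>0$ does not give a ``constant fraction'' of positive weights (a single term can carry the sum), and, more importantly, the rank-$p$ claim for $\sum_i\phi_i\bs c_i\bs c_i'$ cannot be obtained from Assumption~\ref{ass:eigbound}(d): that assumption controls $\bs Z'\bs Z/n$ and $\bs Z'\bs D_{\varepsilon}^2\bs Z/n$ but places no restriction on $\bs\Pi$, and the paper explicitly allows weak or irrelevant instruments. If $\bs\Pi=\bs O$ and $\bs a_i=\bs 0$, your vectors $\bs c_i$ all vanish and your representation would force $\bs\Omega^{H}=\bs O$, contradicting the lemma's conclusion. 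In the paper, strictness comes from the component you leave out: $\bs\Omega^{H,u}=-\frac2n\sum_ic_i\bs\Sigma^{U}_i$ with $c_i\ge V_{ii}P_{ii}(1-P_{ii})^2\ge 0$ and $\lambda_{\min}(\bs\Sigma^{U}_i)\ge C>0$ (Assumptions~\ref{ass:model}(b) and \ref{ass:decomposev}), so that $\sum_iV_{ii}P_{ii}>0$ alone delivers $\bs\Omega^{H,u}\prec\bs O$, while the other two pieces only need to be shown negative \emph{semi}definite. You would need to redo your argument along these lines: treat the three pieces separately, get strictness from $\bs\Sigma^{U}_i$, and handle the off-diagonal sign analysis of $\bs V\odot\bs W$ (where the $0.9$ bound actually bites) together with a positive-semidefinite regrouping of $\bar{\bs Z}'(\bs V\odot\bs W)\bar{\bs Z}$ in the transformed coordinates $\tilde{\bs z}_i=(\bs Z'\bs D_{\varepsilon}^2\bs Z)^{-1/2}\bs z_i$.
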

\begin{proof}
See Appendix \ref{app:proof negdef}.
\end{proof}
Lemma~\ref{lem:negdef}
implies that the use of the conventional inference procedures based on the score will be conservative. The condition that $n^{-1}\sum_{i=1}^{n}V_{ii}P_{ii}>0$ makes it clear that this is more likely to occur, and asymptotically only occurs, when the number of instruments is a nonnegligible fraction of the sample size. For instance, if $\{\varepsilon_{i}\}_{i=1}^{n}$ is itself a sequence of Rademacher variables and $\bs Z$ has independent elements with finite fourth moment. 
\citet*{bai2007asymptotics} show that $\frac{1}{n}\sum_{i=1}^{n}V_{ii}P_{ii} - k^2/n^2 \rightarrow_{a.s.} 0$, and hence  (c) requires that $k/n\rightarrow\lambda>0$.

\section{Simulation results}\label{sec:MC}
In this section we assess the finite sample properties of the proposed tests through a simulation study on the panel tobit model from \citet{honore1992trimmed} discussed in Example \ref{ex:Honore}, the quantile IV model from \citet{chernozhukov2006finite} discussed in Example \ref{ex:IVQR}, and the linear IV model discussed in Example \ref{ex:linIV}.  To contrast the tests developed under many moment sequences with those developed under fixed moment conditions, we label the former as many instrument (MI) and the latter as fixed-$k$ tests.

\subsection{Panel tobit}\label{ssec:MC Honore}
Consider the censored panel data model as in \citet{honore1992trimmed}
\begin{equation}\label{eq:honore}
    \begin{split}
        y_{it}&=\max\{0,y_{it}^*\}, \quad 
        y_{it}^*=\alpha_i+\bs x_{it}'\bs\beta+\varepsilon_{it},
    \end{split}
\end{equation}
where $y_{it}$ is the censored version of a latent variable $y_{it}^*$ that relates to regressors $\bs x_{it}\in\mathbb{R}^{p}$, unobserved individual fixed effects $\alpha_{i}$ and an error $\varepsilon_{it}$.

\citet{honore1992trimmed} assumes that for each individual $i$ the $\varepsilon_{it}$ are independent and identically distributed over time $t$ conditional on $\bs x_{i1}$, $\bs x_{i2}$ and $\alpha_i$, allowing for heteroskedasticity across individuals. Under this assumption, or conditional exchangeability more generally, the difference between $y_{it_1}^*$ and $y_{it_2}^*$ for some $t_1$ and $t_2$, as denoted by $\Delta_{t_1,t_2}y_{i}^*$, conditional on $\bs x_{it_1}$ and $\bs x_{it_2}$ is symmetrically distributed around $\Delta_{t_1,t_2}\bs x_{i}'\bs\beta=(\bs x_{it_1}-\bs x_{it_2})'\bs\beta$. Furthermore, if $y_{it_1}^*$ and $y_{it_2}^*$ are both greater than zero, neither is affected by the censoring and therefore also the observed values are unaffected by the censoring. 
\citet{honore1992trimmed} combines these two observations to define two sets of outcomes  for ($y_{it_1}$,$y_{it_2}$)
\begin{equation*}
    \begin{split}
    A_{1i,t_1,t_2}&=\begin{cases}
        \{(y_{it_1}^*,y_{it_2}^*): y_{it_1}^*>\Delta_{t_1,t_2}\bs x_{i}'\bs\beta, y_{it_2}^*>y_{it_1}^*-\Delta_{t_1,t_2}\bs x_{i}'\bs\beta\} &\text{if }\Delta_{t_1,t_2}\bs x_{i}'\bs\beta\geq0\\
        \{(y_{it_1}^*,y_{it_2}^*): y_{it_1}^*>0, y_{it_2}^*>y_{it_1}^*-\Delta_{t_1,t_2}\bs x_{i}'\bs\beta\}&\text{if }\Delta_{t_1,t_2}\bs x_{i}'\bs\beta<0
    \end{cases}\\
    B_{1i,t_1,t_2}&=\begin{cases}
        \{(y_{it_1}^*,y_{it_2}^*): y_{it_1}^*>\Delta_{t_1,t_2}\bs x_{i}'\bs\beta, 0<y_{it_2}^*<y_{it_1}^*-\Delta_{t_1,t_2}\bs x_{i}'\bs\beta\} &\text{if }\Delta_{t_1,t_2}\bs x_{i}'\bs\beta\geq0\\
        \{(y_{it_1}^*,y_{it_2}^*): y_{it_1}^*>0, -\Delta_{t_1,t_2}\bs x_{i}'\bs\beta<y_{it_2}^*<y_{it_1}^*-\Delta_{t_1,t_2}\bs x_{i}'\bs\beta\}&\text{if }\Delta_{t_1,t_2}\bs x_{i}'\bs\beta<0.
    \end{cases}
    \end{split}
\end{equation*}
Conditional on $(x_{it_1},x_{it_2})$, ($y_{it_1}$,$y_{it_2}$) falls with equal probability in $A_{1i,t_1,t_2}$ and $B_{1i,t_1,t_2}$, which leads to the following reflection invariant moment condition
\begin{equation}\label{eq:moment honore 1}
\E[(\mathbbm{1}\{(y_{it_1},y_{it_2})\in A_{1,t_1,t_2}\}-\mathbbm{1}\{(y_{it_1},y_{it_2})\in B_{1,t_1,t_2}\})\Delta_{t_1,t_2} \bs x_{i}]=\bs 0.
\end{equation}

Similarly, \citet{honore1992trimmed} notes that, conditional on $(\bs x_{it_1},\bs x_{it_2})$, the expected vertical distance from $(y_{it_1},y_{it_2})$ given that $(y_{it_1},y_{it_2})$ lies in $A_{1i,t_1,t_2}$, to the $45^\circ$ line through $(\Delta_{t_1,t_2}\bs x_{i}'\bs\beta,0)$ if $\Delta_{t_1,t_2}\bs x_{i}'\bs\beta\geq0$ or through $(0,-\Delta_{t_1,t_2}\bs x_{i}'\bs\beta)$ if $\Delta_{t_1,t_2}\bs x_{i}'\bs\beta<0$, equals the expected horizontal distance from $(y_{it_1},y_{it_2})$ given that $(y_{it_1},y_{it_2})$ lies in $B_{1i,t_1,t_2}$, to the same line. These distances are given by $-(y_{it_1}-y_{it_2}-\Delta_{t_1,t_2}\bs x_{i}'\bs\beta)$ and $(y_{it_1}-y_{it_2}-\Delta_{t_1,t_2}\bs x_{i}'\bs\beta)$, which leads to the second reflection invariant moment condition
\begin{equation}\label{eq:moment honore 2}
    \begin{split}
        &\E[(\E[\mathbbm{1}\{(y_{it_1},y_{it_2})\in A_{1i,t_1,t_2}\}(y_{it_1}-y_{it_2}-\Delta_{t_1,t_2}\bs x_{i}'\bs\beta)|\bs x_{it_1},\bs x_{it_2}]\\
        &\quad+\E[\mathbbm{1}\{(y_{it_1},y_{it_2})\in B_{1i,t_1,t_2}\}(y_{it_1}-y_{it_2}-\Delta_{t_1,t_2}\bs x_{i}'\bs\beta)|\bs x_{it_1},\bs x_{it_2}])\Delta_{t_1,t_2}\bs x_{i}]\\
        &=\E[\mathbbm{1}\{(y_{it_1},y_{it_2})\in A_{1i,t_1,t_2}\cup B_{1i,t_1,t_2}\}(y_{it_1}-y_{it_2}-\Delta_{t_1,t_2}\bs x_{i}'\bs\beta)\Delta_{t_1,t_2}\bs x_{i}]=\bs 0.
    \end{split}
\end{equation}

The sets in which the observations are unaffected by censoring can be slightly enlarged, to yield two additional moment conditions. These moment conditions are also the difference between either indicators of events with equal probability or of equal expected distances, and hence are also reflection invariant. To be precise, define
\begin{equation*}
    \begin{split}
    A_{2i,t_1,t_2}&=\begin{cases}
        \{(y_{it_1}^*,y_{it_2}^*): y_{it_1}^*\leq\Delta_{t_1,t_2}\bs x_{i}'\bs\beta, y_{it_2}^*>0\} &\text{if }\Delta_{t_1,t_2}\bs x_{i}'\bs\beta\geq0\\
        \{(y_{it_1}^*,y_{it_2}^*): y_{it_1}^*\leq0, y_{it_2}^*>-\Delta_{t_1,t_2}\bs x_{i}'\bs\beta\}&\text{if }\Delta_{t_1,t_2}\bs x_{i}'\bs\beta<0
    \end{cases}\\
    B_{2i,t_1,t_2}&=\begin{cases}
        \{(y_{it_1}^*,y_{it_2}^*): y_{it_1}^*>\Delta_{t_1,t_2}\bs x_{i}'\bs\beta, y_{it_2}^*\leq0\} &\text{if }\Delta_{t_1,t_2}\bs x_{i}'\bs\beta\geq0\\
        \{(y_{it_1}^*,y_{it_2}^*): y_{it_1}^*>0, y_{it_2}^*<-\Delta_{t_1,t_2}\bs x_{i}'\bs\beta\}&\text{if }\Delta_{t_1,t_2}\bs x_{i}'\bs\beta<0.
    \end{cases}
    \end{split}
\end{equation*}
Then in addition to \eqref{eq:moment honore 1} and \eqref{eq:moment honore 2}, the following reflection invariant moment conditions hold
\begin{equation*}
    \begin{split}
        &\E[(\mathbbm{1}\{(y_{it_1},y_{it_2})\in A_{1i,t_1,t_2}\cup A_{2i,t_1,t_2}\}-\mathbbm{1}\{(y_{it_1},y_{it_2})\in B_{1i,t_1,t_2}\cup B_{2i,t_1,t_2}\})\Delta_{t_1,t_2}\bs x_{i}']=\bs 0\\
        &\E[(\mathbbm{1}\{(y_{it_1},y_{it_2})\in A_{1i,t_1,t_2}\}(y_{it_1}-y_{it_2}-\Delta_{t_1,t_2}\bs x_{i}'\bs\beta)\\
        &\quad-\mathbbm{1}\{(y_{it_1},y_{i,t_2})\in A_{2i,t_1,t_2}\}(y_{it_2}-\max\{0,-\Delta_{t_1,t_2}\bs x_{i}'\bs\beta\})\\
        &\quad+\mathbbm{1}\{(y_{it_1},y_{it_2})\in B_{1i,t_1,t_2}\}(y_{it_1}-y_{it_2}-\Delta_{t_1,t_2}\bs x_{i}'\bs\beta)\\
        &\quad+\mathbbm{1}\{(y_{it_1},y_{i,t_2})\in B_{2i,t_1,t_2}\}(y_{it_1}-\max\{0,-\Delta_{t_1,t_2}\bs x_{i}'\bs\beta\}))\Delta_{t_1,t_2}\bs x_{i}']=\bs0.
    \end{split}
\end{equation*}

Any two periods $t_1\neq t_2$ can be used to obtain moment conditions, which brings the potential number of moment conditions per individual to $4\binom{T}{2}p$. In what follows, however, we will not use the same time period twice for the MI AR, since although the moment conditions are all reflection invariant on their own, they are not necessarily jointly invariant when the same period is used twice. 
To illustrate this, note that the invariance of the moment conditions stems from differences in exchangeable errors. Then take $\varepsilon_{it}$ discrete with $\Pr(\varepsilon_{it}=1)=2/3$ and $\Pr(\varepsilon_{it}=-2)=1/3$, and independent over $t=1,2,3$. In this case, $(\varepsilon_{it}-\varepsilon_{it'})$ is reflection invariant, but $(\varepsilon_{i1}-\varepsilon_{i2},\varepsilon_{i2}-\varepsilon_{i3})$ is not, as for example  $\Pr((\varepsilon_{i1}-\varepsilon_{i2},\varepsilon_{i2}-\varepsilon_{i3})=(0,-3))=4/27$, whereas $\Pr((\varepsilon_{i1}-\varepsilon_{i2},\varepsilon_{i2}-\varepsilon_{i3})=(0,3))=2/27$.
Hence, we will use $4\lfloor T/2\rfloor p$ moment conditions.

\begin{figure}[t]
    \centering
    \caption{Size in the panel tobit simulation.}
    \label{fig:size Honore heteroskedastic}
    \includegraphics{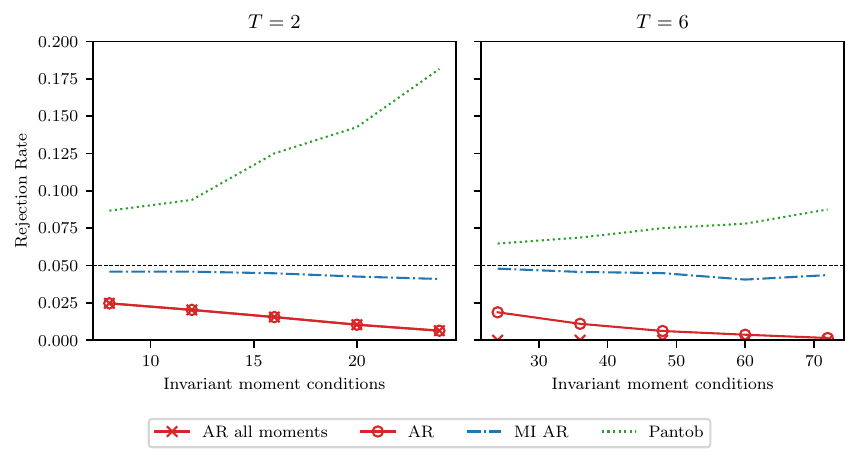}
    \begin{minipage}{\textwidth}\footnotesize\textit{Note:} Size when testing $H_{0}\colon \bs\beta=\bs\iota$ at $\alpha=0.05$ based on the fixed-$k$ Anderson--Rubin test with all moment conditions (AR all moments), the fixed-$k$ Anderson--Rubin test with the invariant moment conditions (AR), the many instrument Anderson--Rubin test (MI AR) and the test based on the panel tobit estimator by \citet{honore1992trimmed} (Pantob). $T$ denotes the number of time periods. Section \ref{ssec:MC Honore} describes the DGP.
    \end{minipage}
\end{figure}

\begin{figure}[t]
    \centering
    \caption{Power in the panel tobit simulation.}
    \label{fig:power Honore heteroskedastic}
    \includegraphics{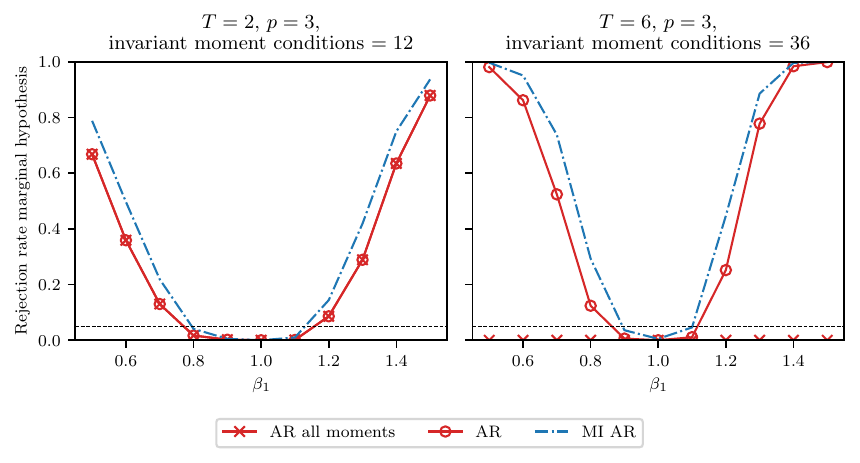}
    \begin{minipage}{\textwidth}\footnotesize\textit{Note:} Marginal power when testing $H_{0}\colon\beta_1=1$ at $\alpha=0.05$ when the true value of $\beta_1$ varies between 0.5 and 1.5, based on the fixed-$k$ Anderson--Rubin test with all moment conditions (AR all moments), the fixed-$k$ Anderson--Rubin test with the invariant moment conditions (AR) and the many instrument Anderson--Rubin test (MI AR). $T$ denotes the number of time periods, $p$ the number of regressors. Section \ref{ssec:MC Honore} describes the DGP.
    \end{minipage}
\end{figure}
We use the simulation design in \citet{honore1992trimmed} to study the size and power of the MI AR test compared to benchmarks that we explain below. We make two adaptations to the DGP, such that $T$ and $p$ can be flexible and to include heteroskedasticity. We generate $n=200$ observations from \eqref{eq:honore}, where $x_{1,it}=\alpha_i+\eta_{it}$, where $\alpha_{i},\eta_{it}\sim N(0,1)$. The other regressors $x_{l,it}$, $l=2,\dots, p$ are also standard normally distributed and $p$ varies over $\{2,\ldots,6\}$. Finally, $\varepsilon_{it}\sim N(0,\alpha_{i}^2)$. The results are averaged over $10,000$ draws from the DGP.

As benchmarks we include the fixed-$k$ AR test that uses the same moment conditions as the MI AR test, the AR test that uses all possible moment conditions, and the test based on \citeauthor{honore1992trimmed}'s (\citeyear{honore1992trimmed}) panel tobit estimator. For the latter we used the STATA package as on the author's website. To the best of our knowledge, there are few other tests for coefficients in panel tobit models that are as generally applicable as \citeauthor{honore1992trimmed}'s (\citeyear{honore1992trimmed}) panel tobit estimator. 

Figure \ref{fig:size Honore heteroskedastic} shows the rejection rates of the MI AR test and the benchmarks when testing the true null hypothesis $H_0\colon\bs\beta=\bs\iota$ at a $5\%$ significance level. The number of moment conditions increases with $T$ and $p$. For $T=2$ all moment conditions are jointly reflection invariant, hence the two fixed-$k$ AR tests have identical rejection rates. These rejection rates are well below the desired $5\%$ rate and decreasing with the number of moment conditions. For $T=6$ the fixed-$k$ AR test that uses all moment conditions never rejects. Also the rejection rates of the fixed-$k$ AR statistic that uses only the invariant moment conditions are too low and again decreasing with the number of moments. \citeauthor{honore1992trimmed}'s (\citeyear{honore1992trimmed}) panel tobit estimator on the other hand is oversized and more so with many moment conditions. Only the MI AR statistic is close to size correct uniformly over the number of moment conditions.

Next, we investigate the power of the MI AR test relative to the fixed-$k$ AR that uses all moment conditions and the fixed-$k$ AR test that uses only the invariant moment conditions. We do not include test based on \citeauthor{honore1992trimmed}'s (\citeyear{honore1992trimmed}) panel tobit estimator as it was not size correct in Figure \ref{fig:size Honore heteroskedastic}. In Figure \ref{fig:power Honore heteroskedastic} we show the rejection rates when testing the marginal hypothesis $H_{0}\colon\beta_{1}=1$ over $1,000$ data sets when we vary $\beta_{1}$ in the DGP from $-0.5$ to $1.5$. The other coefficients in $\bs\beta$ remain fixed at $1$. We reject the null hypothesis only if the statistic for $\beta_{1}=1$ exceeds the critical value for all values of $\beta_{2}$ to $\beta_{p}$. One can alternatively view this as, for a given value of $\beta_1$, minimizing the AR and MI AR statistics over $\beta_2$ to $\beta_p$ and checking whether the minimum exceeds the critical value \citep{chernozhukov2009finite}. In this case the minimization is by means of a grid search, but other algorithms can be used as well. From the figure we conclude that the identification robust tests have lower rejection rates than the nonrobust test based on the panel tobit estimator. More importantly, the MI AR test has higher power than the fixed-$k$ AR tests and the power difference increases with the number of moment conditions.

\begin{figure}[t!]
    \centering
    \caption{Size in the median IV regression simulation with skewed moment conditions.}
    \label{fig:IVQR size quantile}
    \includegraphics{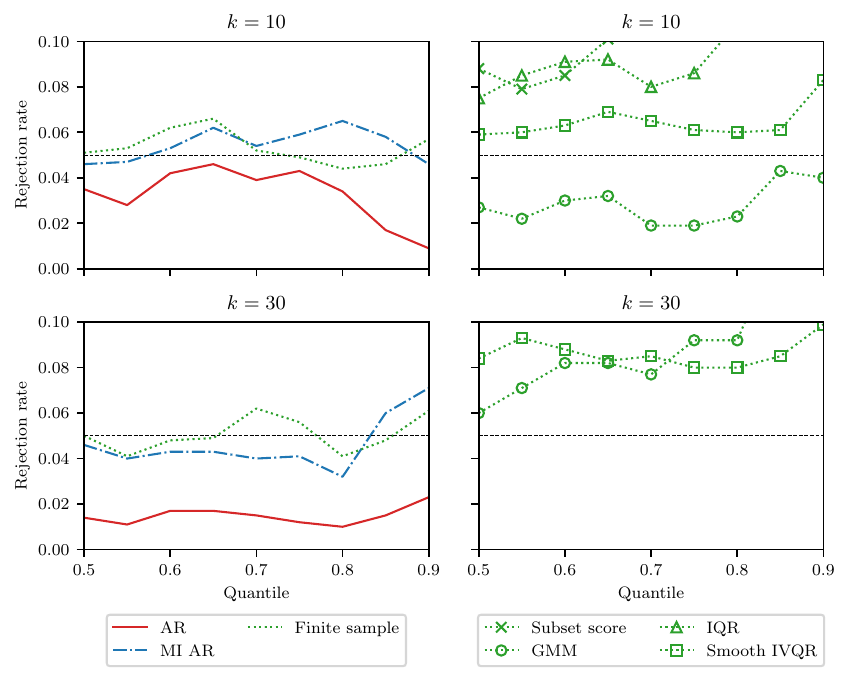}
    \begin{minipage}{\textwidth}\footnotesize\textit{Note:} Size when testing $H_0\colon\theta_{1}(\tau)=-1+\Phi^{-1}(\tau)$ and $\theta_{2}(\tau)=1$, where $\tau$ is the quantile at $\alpha=0.05$ based on the fixed-$k$ Anderson--Rubin test (AR), the many instrument Anderson--Rubin test (MI AR), the finite sample test by \citet{chernozhukov2009finite} (Finite sample), the subset score test by \citet{jun2008weak} (Subset score), and tests based on a GMM estimator (GMM), the inverse quantile regressor estimator by \citet{chernozhukov2006instrumental} (IQR) and the smoothed IVQR estimator by \citet{kaplan2017smoothed} (Smooth IVQR). $k$ denotes the number of instruments. Section \ref{ssec:MC IVQR} describes the DGP.
    \end{minipage}
\end{figure}

\subsection{Median IV regression}\label{ssec:MC IVQR}
We now consider the median IV regression from Example \ref{ex:IVQR}. This provides a natural setting to study the effect of violations of Assumption~\ref{ass:main} as the moment conditions in the quantile IV regression are symmetrically distributed for the median, but not for other quantiles.
We slightly adapt the simulation set-up from \citet{chernozhukov2006finite} to increase the number of instrumental variables and to have nonreflection invariant moment conditions. To be precise, there is a dependent variable $y_i$ that relates to a single endogenous regressor $x_i$ as $y_i=-1+x_i+\varepsilon_i$. The endogenous regressor relates to $k$ instruments $\bs z_i$ as $x_i=\bs z_i'\bs\iota_k\pi+\nu_i$ for $i=1,\dots,n$. Since the moment conditions consist of a product of $\tau-\mathbbm{1}\{y_i\leq\theta_{1}(\tau)+\theta_{2}(\tau)x_i\}$ with the instruments $\bs z_{i}$, we draw the instruments from a skewed distribution for the moment conditions to no longer be reflection invariant at $\tau\neq 1/2$. We draw from a Gamma distribution with shape parameter $1/\zeta$, rate parameter $\sqrt{1/(\sigma^2\zeta)}$ and we subtract $\sqrt{\sigma^2/\zeta}$, such that the distribution has mean zero, variance $\sigma^2$ and skewness governed by $\zeta$. Figure \ref{fig:histogram gamma} in Appendix \ref{sapp:skewed} shows how the skewness increases with $\zeta$. We set $\zeta=3/2$ and $\sigma^2=1$. The errors $\varepsilon_i$ and $\nu_i$ are standard normally distributed with correlation $\rho=0.8$. The first stage coefficient $\pi=0.5$. The number of observations is set to $n=100$. The results are averaged over $1,000$ draws from the DGP.

We fit the conditional quantile model $q(x_i,\tau;\bs\theta)=\theta_{1}(\tau)+\theta_{2}(\tau)x_i$, such that the true values for the intercept and slope parameters are $\theta_{1}(\tau)=-1+\Phi^{-1}(\tau)$ for $\Phi$ the CDF of the standard normal distribution, and  $\theta_{2}(\tau)=1$. We use the moment conditions as suggested in Example \ref{ex:IVQR} with $\bs\Psi(\bs z_{i})=\bs z_{i}$ to test values of $\bs\theta$ with a fixed-$k$ AR and the MI AR test. As benchmarks, we include the exact test by \citet{chernozhukov2009finite} with $1,000$ draws to approximate the critical values, the weak identification robust score test by \citet{jun2008weak}, a GMM based test with the limiting distribution derived as in \citet{andrews1994empirical} and \citet{arellano2009gmm}, the inverse quantile regressor (IQR) by \citet{chernozhukov2006instrumental} and the smoothed IVQR by \citet{kaplan2017smoothed}. See Appendix \ref{sapp:details MC IVQR} for details of these benchmarks.

The left panels of Figure \ref{fig:IVQR size quantile} show the size of the fixed-$k$ AR, MI AR and the test by \citet{chernozhukov2009finite} for $k=10$ and $k=30$ against the chosen quantile. The MI AR test appears relatively robust to small deviations from the invariance assumption. Only for $k=30$ and quantiles higher than $0.8$ there is a sudden increase in the rejection rate, but size distortions stay limited. The fixed-$k$ AR is conservative due to the many instruments, while  the test by \citet{chernozhukov2009finite} attains the nominal size for all quantiles. 

The right panels of the same figure show the size of the other benchmarks. It strikes that, except for the test based on the GMM estimator when $k=10$, all tests are oversized.

Further results on size and power at $\tau=1/2$ can be found in Appendix~\ref{sapp:medianIV} showing that the MI AR test offers size and power close to the exact test.

\subsection{Linear IV model}\label{ssec:MC linear IV}
For the linear IV, we base our simulations  on \citet{hausman2012instrumental}. We adapt the DGP slightly to allow for clustering in the data and vary the degree of heteroskedasticity.
We generate $n=800$ observations from $y_i=\alpha+\beta x_{i}+\varepsilon_i$, $i=1,\dots,n$, where $\alpha=0$. We test $H_{0}\colon\beta = 0$. The regressor is generated as $x_{i}=\pi \tilde{z}_{i}+\eta_i$, whose distribution we detail below. In addition to the instrumental variable $\tilde{z}_i$ with coefficient $\pi$, there are $k$ other instrumental variables stacked in $\bs z_i=(1,\tilde{z}_i, \tilde{z}_i^2, \tilde{z}_i^3, \tilde{z}_i^4, \tilde{z}_i D_{i1},\dots, \tilde{z}_iD_{ik-4})'$, where the $D_{ij}$, $j=1,\dots,k-4$, are independent Bernoulli(1/2) random variables. 

To allow the data to be clustered, we divide the observations over $H=100$ clusters containing between 4 and 12 observations. The random variables $\tilde{z}_i$, $\eta_i$, $v_{1i}$ and $v_{2i}$ consist of a cluster common component and an idiosyncratic component weighted by $\lambda$. That is, $\tilde{\bs z}_{[h]}=\sqrt{\lambda}\tilde{\bs z}_{[h]}^\text{ind}+\sqrt{1-\lambda}\tilde{z}_{h}^\text{cl}$, $\bs\eta_{[h]}=\sqrt{\lambda}\bs\eta_{[h]}^\text{ind}+\sqrt{1-\lambda}\eta_{h}^\text{cl}$, $\bs v_{1,[h]}=\sqrt{\lambda}\bs v_{1,[h]}^\text{ind}+\sqrt{1-\lambda}v_{1,h}^\text{cl}$ and $\bs v_{2,[h]}=\sqrt{\lambda}\bs v_{2,[h]}^\text{ind}+\sqrt{1-\lambda}v_{2,h}^\text{cl}$ for $h=1,\dots, H$. For the independent setting, we set $\lambda=1$, for the clustered setting $\lambda=0.5$. We draw the cluster common and idiosyncratic components of $z_{i}$ and $\eta_i$ from standard normal distributions. We generate $v_{1i}^\text{ind}\sim N(0,(\tilde{z}_{i}^\text{ind})^\kappa)$ and $v_{1h}^\text{cl}\sim N(0,(\tilde{z}_{h}^\text{cl})^\kappa)$, where $\kappa$ controls the degree of heteroskedasticity. We start with $\kappa=2$ as in \citet{hausman2012instrumental} and increase it to $\kappa = 6$. The cluster common and idiosyncratic components of $v_{2i}$ come from a normal distribution with mean zero and variance $0.86^2$. 
Finally, the second stage error generated as $\varepsilon_i=\rho\eta_i+\sqrt{(1-\rho^2)/(\phi^2+0.86^4)}(\phi v_{1i}+0.86v_{2i})$, where $\rho=0.3$ is the degree of endogeneity and $\phi=1.38072$ as in the implementation by \citet{bekker2015jackknife}. Appendix \ref{sapp:skewed} considers a DGP with skewed moment conditions where the AR test rejects around 12\% in the most extreme scenario, while the score test rejects 8\% of the cases.

\begin{figure}[t]
    \centering
    \caption{Size in the linear IV simulation.}
    \label{fig:size}
    \includegraphics{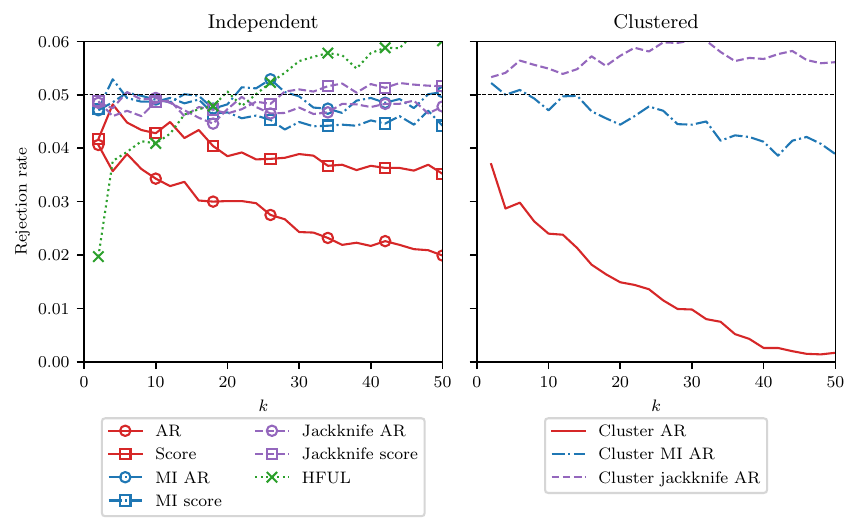}
    \begin{minipage}{\textwidth}\footnotesize\textit{Note:} Size when testing $H_0\colon\beta=0$ at $\alpha=0.05$ based on the (cluster) fixed-$k$ Anderson--Rubin test (AR), the fixed-$k$ score test (Score), the (cluster) many instrument Anderson--Rubin test (MI AR), the many instrument score test (MI score), the (cluster) jackknife Anderson--Rubin test (Jackknife AR, without cross-fit variance estimator), the jackknife score test (Jackknife score, without cross-fit variance estimator) and HFUL (HFUL). In the left panel the observations are independent. In the right panel the 800 observations are clustered in 100 unbalanced clusters. $k$ denotes the number of instruments. Section \ref{ssec:MC linear IV} describes the DGP and Appendix \ref{sapp:details MC} the implementation of the tests.
    \end{minipage}
\end{figure}

We investigate the size of the MI AR and MI score, and compare this to the fixed-$k$ AR, the fixed-$k$ score, the jackknife AR, the jackknife score test and the HFUL when the data are independent. See Appendix \ref{sapp:details MC} for details on the implementation of the different tests. 
The rejection rates are calculated over $10,000$ data sets.

The left panel of Figure \ref{fig:size} shows the rejection rate of the different tests when testing $H_0\colon\beta=0$ at a $5\%$ level for $k$ ranging from $2$ to $50$ and $\pi=\sqrt{8/n}$ as in \citet{hausman2012instrumental}. For weak instruments all identification robust tests are close to size correct as long as the number of instruments is not too high. In line with the theory from Sections \ref{subsec:independent} and \ref{sec:score},  the rejection rates of the fixed-$k$ AR and score tests drop when the number of instruments increases. The MI and jackknife AR and score tests remain size correct. HFUL is conservative for few weak instruments, but becomes oversized for many instruments.

\begin{figure}[t]
    \centering
    \caption{Power in the linear IV simulation with independent observations.}
    \label{fig:power}
    \includegraphics{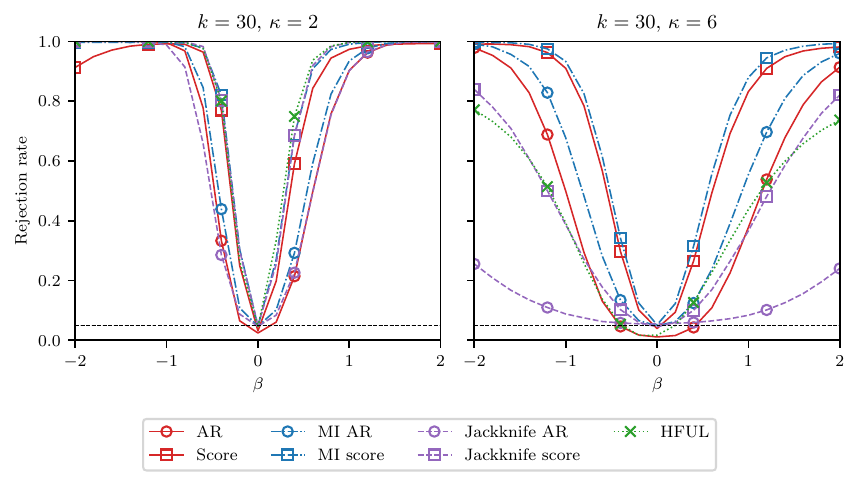}
    \begin{minipage}{\textwidth}\footnotesize\textit{Note:} Power when testing $H_{0}\colon\beta=0$ at $\alpha=0.05$ when the true value of $\beta$ varies between -2 and 2 based on the fixed-$k$ Anderson--Rubin test (AR), the fixed-$k$ score test (Score), the many instrument Anderson--Rubin test (MI AR), the many instrument score test (MI AR), the jackknife Anderson--Rubin test (Jackknife AR, without cross-fit variance estimator), the jackknife score test (Jackknife score, without cross-fit variance estimator) and HFUL (HFUL). The observations are independent. $k$ denotes the number of instruments, $\kappa$ the heteroskedasticity parameter and $\pi$ the instrument strength parameter. Section \ref{ssec:MC linear IV} describes the DGP and Appendix \ref{sapp:details MC} the implementation of the tests.
    \end{minipage}
\end{figure}

In the right panel of Figure \ref{fig:size}, we consider a DGP in which the data are clustered.
The figure shows the rejection rates of the cluster fixed-$k$ AR, the cluster MI AR and the cluster jackknife AR for the same hypothesis and 10,000 data sets. Appendix \ref{sapp:details MC} details the exact implementation. This figure is similar to the figure for independent data in the sense that the cluster fixed-$k$ AR test is conservative, whereas the cluster MI AR and cluster jackknife AR tests are size correct.

\begin{figure}[t]
    \centering
    \caption{Power in the linear IV simulation with clustered observations.}
    \label{fig:power clustered}
    \includegraphics{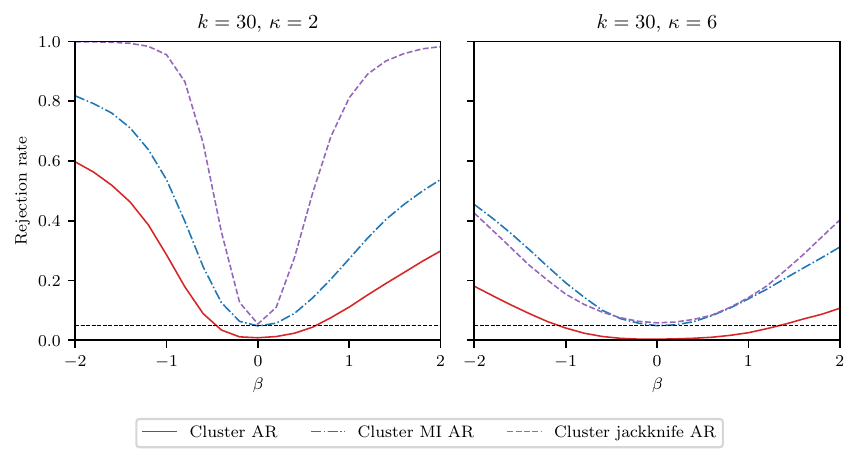}
    \begin{minipage}{\textwidth}\footnotesize\textit{Note:} Power when testing $H_{0}\colon\beta=0$ at $\alpha=0.05$ when the true value of $\beta$ varies between -2 and 2 based on the cluster fixed-$k$ Anderson--Rubin test (Cluster AR), the cluster many instrument Anderson--Rubin test (Cluster MI AR) and the cluster jackknife Anderson--Rubin test (Cluster jackknife AR, without cross-fit variance estimator). The 800 observations are clustered in 100 unbalanced clusters. $k$ denotes the number of instruments, $\kappa$ the heteroskedasticity parameter and $\pi$ the instrument strength parameter. Section \ref{ssec:MC linear IV} describes the DGP and Appendix \ref{sapp:details MC} the implementation of the tests.
    \end{minipage}
\end{figure}

We next study the power against $H_0\colon\beta=0$ of the tests from the previous subsection when varying the true $\beta$ between $-2$ and $2$. We fix $\pi=\sqrt{128/n}$ to attain reasonable power, set the number of instruments  $k=30$ and vary the degree of heteroskedasticity by choosing $\kappa\in\{2,6\}$. The values of the other parameters in the DGP are unaltered.

Figure \ref{fig:power} shows the power for independent data. We focus first on the AR statistics, which we have marked with a circle. In the left panel, with mild heteroskedasticity, we see that although the power of the three different types of AR tests are comparable, the MI AR has a slight edge over the other two. In the right panel, with stronger heteroskedasticity, the power differences are bigger. Especially the jackknife AR has low rejection rates. 

The score tests are marked with a square. In both panels, they outperform the AR tests, which might be due to the relatively strong instruments. In the left panel we furthermore see that the score tests have comparable power to HFUL. In the right panel we again have higher power for the fixed-$k$ and MI tests have higher power than for the jackknife test. In this case also HFUL does not perform as well as the fixed-$k$ and MI score.

For clustered data, as shown in Figure \ref{fig:power clustered}, the relative performance of the MI AR test is more nuanced. It still outperforms the fixed-$k$ test and by a larger margin than for independent data. The jackknife variant is clearly more powerful when the degree of heteroskedasticity is low, although the difference disappears if we increase the degree of heteroskedasticity in the right panel.

\section{Empirical application: the effect of bank bias on stability}\label{sec:application}
We revisit the study by \citet{langfield2016bank} 
on the effect of the relative concentration of finance in the banking sector in Europe on financial stability. The effects of this concentration, also called ``bank bias", are ambiguous: some argue that a larger banking sector increases systemic risk, whereas others argue that larger banking mitigates systemic risk \citep[see the discussion in][]{langfield2016bank}. 
To study the question empirically, \citet{langfield2016bank} construct a panel with 3981 observations on 467 banks in 20 countries observed over maximally 12 years. 

The outcome variable is the bank-specific systemic risk intensity defined as the bank's SRIKS, a measure of systemic risk provided by the New York University's \href{https://vlab.stern.nyu.edu/}{Volatility Laboratory}, over the bank's total assets. The variable of interest is the country-specific  bank-market ratio, defined as the amount of bank assets over the stock and private bond market capitalization.
\citet{langfield2016bank} estimate the effect of the bank-market ratio on the bank systemic risk intensity controlling for bank and time fixed effects, a crisis indicator and an interaction term between this indicator and the bank-market ratio.

Since a negative systemic risk intensity does not contribute to systemic risk, the outcome variable is censored at zero. \citet{langfield2016bank} therefore use \citeauthor{honore1992trimmed}'s (\citeyear{honore1992trimmed}) panel tobit specification in their Table 2. 
As shown in Section \ref{ssec:MC Honore}, the panel tobit uses many symmetric moment conditions to estimate the effect of interest. In this section we use the same moment conditions in identification robust tests. We the model,
\begin{equation}
    \begin{split}
        y_{it}=\max\{0,\alpha_{i}+\nu_{t}+x_{it}\beta_1+d_{it}\beta_2+x_{it}d_{it}\beta_3+\varepsilon_{it}\},
    \end{split}
\end{equation}
where $y_{it}$ is the censored systemic risk intensity of bank $i$ in year $t$, $\alpha_i$ and $\nu_{t}$ are bank and year fixed effects, $x_{it}$ is the bank-market ratio of the country that bank $i$ is in at time $t$, $d_{it}$ is a crisis indicator, and $\varepsilon_{it}$ is an error term that is conditionally exchangeable over time.

We take differences of the observations per bank to get rid of the bank fixed effects. We then calculate the symmetrically distributed events as in Section \ref{ssec:MC Honore}, which we multiply with the corresponding difference of the bank-market ratio, $x_{it}$, the crisis dummy, $d_{it}$, and their interaction. Since there may be dependence between the moment conditions per bank, we cluster the moment conditions on the bank level. This leaves us with a total of $12$ moment conditions, which is large relative to the 467 banks in the sample.

The AR tests jointly test hypothesized values for all parameters in the model. 
As we are mainly interested in $\beta_{1}$ and $\beta_{3}$, the coefficients on the bank market ratio and its interaction with the crisis dummy, 
we conduct a grid search only over $\beta_{1}$ and $\beta_{3}$, and minimize the AR statistics over all other parameters, which is computationally more efficient \citep{chernozhukov2009finite}.

\begin{figure}[t]
    \centering
    \caption{Confidence regions for bank-market ratio on systemic risk.}
    \label{fig:LP16}
    \includegraphics{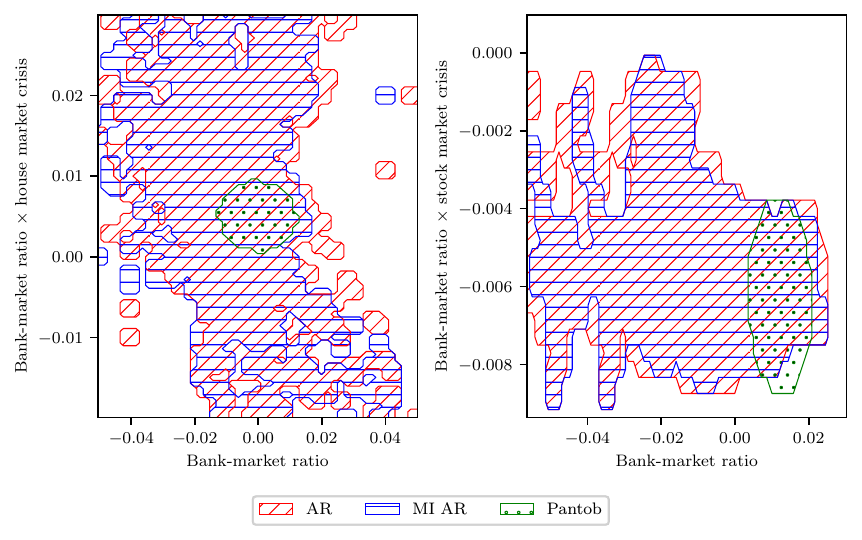}
    \begin{minipage}{\textwidth}\footnotesize\textit{Note:}
   $95\%$ confidence regions for the effect of bank-market ratio and its interaction with a crisis indicator on systemic risk, based on the fixed-$k$ Anderson--Rubin test (AR), the many instrument Anderson--Rubin test (MI AR) and the panel tobit estimator by \citet{honore1992trimmed} (Pantob). The confidence regions of the Anderson--Rubin tests are smoothed over a $3\times 3$ grid to mitigate the effect of numerical optimization. The data and model come from \citet{langfield2016bank}. The 3981 and 3909 observations are clustered on the bank level in 467 clusters. There are 12 moment conditions.
    \end{minipage}
\end{figure}

Figure~\ref{fig:LP16} shows the $95\%$ confidence regions for the effect of bank-market ratio and its interaction with a crisis dummy on systemic risk intensity, based on inverting the AR and MI AR statistics and \citeauthor{honore1992trimmed}'s (\citeyear{honore1992trimmed}) panel tobit estimator. In the left panel a crisis is defined as a year in which a country's real house prices drop by at least $10\%$, wheres in the right panel a crisis is defined as a year in which a country's real stock prices drop by at least $20\%$. Given that in the simulations in Section~\ref{ssec:MC Honore}, the AR statistic that uses all moment conditions has lower power compared to the AR statistic that only uses the reflection invariant moment conditions, we limit ourselves to the latter. To mitigate the effect of possible inaccuracies in the numerical optimization, we exclude a value $(\beta_1,\beta_3)$ from the confidence region only when the minimum of all statistics in a $3\times 3$ grid around the hypothesized point is above the critical value. This enlarges the confidence regions, but makes them more reliable. The confidence region for the panel tobit estimator is not smoothed and matches the results of columns I and III in Table 2 of \citet{langfield2016bank}.

The figure shows that the confidence regions from the AR and MI AR statistics are considerably larger than those obtained with the panel tobit estimator. When the crisis dummy captures a decrease in house prices, the robust confidence regions both contain $(0,0)$, whereas the confidence region for the panel tobit estimator excludes a zero effect for the interaction term. In contrast, when the crisis dummy captures a stock crisis, the robust confidence regions only contain zero for the bank-market ratio. The interaction term is significantly different from zero, indicating that countries with a high bank-market ratio have lower financial stability during crises, than countries with a low bank-market ratio.

Importantly, for the grids considered here, the confidence regions of the AR are larger than those of the MI AR. For the housing and stock market crises, the AR confidence regions are 10\% and 15\% larger than the MI AR confidence regions.

\section{Conclusion}\label{sec:conclusion}
We develop a new approach for identification-robust inference under many moment conditions based on the continuous updating AR GMM statistic. Since the presence of a large-dimensional weighting matrix prohibits the use of conventional asymptotic approximations, we show how reflection invariance in the moment conditions can be used to establish asymptotic results. The many moment condition correction to the variance of the AR statistic negative, indicating that conventional approximations lead to overly wide confidence intervals. We also provide an analysis of the linear IV model that provides conditions under which the same is true for the score statistic.
Monte Carlo simulations further show close to nominal size of the developed procedures regardless of the identification strength and the number of moment conditions, as well as an increase in power under many moment conditions. Empirically, we use the corrected AR test to construct confidence regions for effect of concentration of financial activities in banks on systemic risk intensity and find evidence for a negative effect of bank concentration on systemic risk during stock market crises.

\section{Acknowledgements}
We thank the editor Michael Jansson, the associate editor and two referees for their detailed comments. We also thank Federico Crudu, Paul Elhorst, Jinyong Hahn, Bruce Hansen, Art\={u}ras Juodis, Lingwei Kong, Nick Koning, Zhipeng Liao, Xinwei Ma, Sophocles Mavroeidis, Jack Porter, Andres Santos, Xiaoxia Shi, Yixao Sun, Aico van Vuuren, and Tom Wansbeek for many insightful comments. Tom Boot's work was supported by the Dutch Research Council (NWO) under grant VI\allowbreak.Veni\allowbreak.201E\allowbreak.11. 

\bibliographystyle{apalike}

\bibliography{literature}

\newpage

\begin{appendices}
\numberwithin{equation}{section}
\numberwithin{figure}{section}
\numberwithin{table}{section}

\section{Proofs}
\subsection{Proof of Theorem \texorpdfstring{\ref{thm:main}}{2}}\label{sapp:proof main}
For the proof of Theorem \ref{thm:main} we use the following results.

\begin{lemma}\label{thm:expandvar} For $\mathcal{J}=\{\varepsilon_{i},\bs Z_{i}'\}_{i=1}^{n}$ and under Assumption \ref{ass:model},  $\E[S_{l,r}(\bs\beta_{0})|\mathcal{J}]=0.$
	The $(l_{1},l_{2})^{\text{th}}$ element of the conditional variance matrix is 
		\begin{equation}\label{eq:var}
	\Omega_{l_{1} l_{2}}(\bs\beta_{0})=\E\big[n\cdot S_{l_{1},r}(\bs\beta_{0})S_{l_{2},r}(\bs\beta_{0})\big|\mathcal{J}\big] =\Omega_{l_{1} l_{2}}^{L}(\bs\beta_{0}) + \Omega_{l_{1} l_{2}}^{H}(\bs\beta_{0}),
	\end{equation}
	where $\Omega^{L}_{l_{1} l_{2}}(\bs\beta_{0}) =\Omega^{L,z}_{l_{1} l_{2}}(\bs\beta_{0}) +\Omega^{L,a}_{l_{1} l_{2}}(\bs\beta_{0}) +\Omega^{L,u}_{l_{1} l_{2}}(\bs\beta_{0})$ with
	\begin{align*}
        \Omega^{L,z}_{l_{1} l_{2}}(\bs\beta_{0})&=\frac{1}{n}\bar{\bs z}_{(l_{1})}'[(\bs I-\bs D_{P})\bs V(\bs I-\bs D_{P})+\bs D_{P}\bs D_{V}(\bs I-2\bs D_{P})+\bs V\odot\bs P\odot\bs P]\bar{\bs z}_{(l_{2})},\\
        \Omega^{L,a}_{l_{1} l_{2}}(\bs\beta_{0})&= \frac{1}{n}\bs a_{(l_{1})}'(\bs D_{P}-\bs P\odot\bs P)\bs a_{(l_{2})},\\
        \Omega^{L,u}_{l_{1} l_{2}}(\bs\beta_{0})&=\frac{1}{n}\tr(\bs D_{\Sigma^{U}(l_{1},l_{2})}\bs D_{V}(\bs I-\bs D_{P})),
        \intertext{and $\Omega_{l_{1} l_{2}}^{H}(\bs\beta_{0})= \Omega^{H,z}_{l_{1} l_{2}}(\bs\beta_{0}) +\Omega^{H,a}_{l_{1} l_{2}}(\bs\beta_{0}) +\Omega^{H,u}_{l_{1} l_{2}}(\bs\beta_{0})$ with}
        \Omega^{H,z}_{l_{1} l_{2}}(\bs\beta_{0})
        &=\frac{1}{n}\bar{\bs z}_{(l_{1})}'(\bs V\odot \bs W)\bar{\bs z}_{(l_{2})},\\
        \Omega^{H,a}_{l_{1} l_{2}}(\bs\beta_{0})&=
        -\frac{2}{n}\bs a_{(l_{1})}'(\bs D_{P}-\bs P\odot\bs P)^2\bs a_{(l_{2})},\\
          \Omega^{H,u}_{l_{1} l_{2}}(\bs\beta_{0})&=
        -\frac{2}{n}\tr(\bs D_{\Sigma^{U}(l_{1},l_{2})}(\bs V\bs D_{\varepsilon}\odot\bs V\bs D_{\varepsilon})(\bs I-2\bs D_{P}+\bs P\odot\bs P)),
	\end{align*}
	with $\bs D_{\Sigma^{U}(l_{1},l_{2})}$ an $n\times n$ diagonal matrix with the $i^{\text{th}}$ diagonal element equal to $\cov( u_{i l_{1}}, u_{i l_{1}}|\mathcal{J})$ and the elements of $(\bs W\odot \bs V)$ are defined as
    \begin{align}
        [\bs V\odot \bs W]_{i_{1} i_{2}} &= V_{i_{1} i_{2}}\big[(P_{i_{1} i_{1}}P_{i_{2} i_{2}}+P_{i_{1} i_{2}}^2)(3-4P_{i_{1} i_{1}} -4P_{i_{2} i_{2}})\nonumber\\
        &\qquad\quad -2P_{i_{1} i_{1}} -2P_{i_{2} i_{2}}+2(P_{i_{1} i_{1}} +P_{i_{2} i_{2}})^2\big],\nonumber\\
        [\bs V\odot\bs W]_{i_{1} i_{1}}&=-2V_{i_{1} i_{1}}P_{i_{1} i_{1}}(1-2P_{i_{1} i_{1}})-2\sum_{i_{2}=1}^{n}V_{i_{1} i_{2}}^2\varepsilon_{i_{2}}^2P_{i_{1} i_{2}}^2.\label{eq:WdotV}
    \end{align}
\end{lemma}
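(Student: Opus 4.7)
The plan is to decompose $S_{l,r}(\bs\beta_0)$ in \eqref{eq:foc_r} by source of randomness and then apply moment identities for products of Rademacher variables. For the conditional mean, split each occurrence of $\bar{\bs x}_{(l)}$ as $\bar{\bs z}_{(l)} + \bs u_{(l)}$ and use that, conditional on $\mathcal{J}$, $\bs r$ and $\bs u$ are independent with mean zero (by Assumption \ref{ass:decomposev}). The first summand of \eqref{eq:foc_r} is linear in $\bs r$ and the fourth is cubic in $\bs r$, so both integrate out to zero using $\E[r_i]=0$ and $\E[r_ir_jr_k]=0$. The second and third summands are quadratic in $\bs r$ with conditional means $-n^{-1}\tr(\bs D_{a_{(l)}}\bs P)$ and $+n^{-1}\tr(\bs P\bs D_{a_{(l)}}\bs P)$, which cancel thanks to $\bs P^2=\bs P$.

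For the variance, write $S_{l,r} = S^z_{l,r} + S^a_{l,r} + S^u_{l,r}$, where $S^z$ collects the summands linear and cubic in $\bs r$ carrying $\bar{\bs z}_{(l)}$, $S^u$ is the same with $\bs u_{(l)}$ in its place, and $S^a$ is the $\bs a_{(l)}$-quadratic pair formed by summands two and three of \eqref{eq:foc_r}. Cross-covariances vanish for two reasons: (i) $S^u$ is linear and mean zero in $\bs u$, which is independent of $(\bs r,\mathcal{J})$, so $\E[S^uS^z|\mathcal{J}] = \E[S^uS^a|\mathcal{J}] = 0$; (ii) $S^z$ has only odd orders in $\bs r$ while $S^a$ has only even orders in $\bs r$, so $S^zS^a$ has odd Rademacher order and integrates to zero. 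Therefore $\Omega_{l_1l_2} = \Omega^z_{l_1l_2} + \Omega^a_{l_1l_2} + \Omega^u_{l_1l_2}$, matching the $z,a,u$ split in the statement.

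Each same-source variance is then computed using $\E[r_ir_j]=\delta_{ij}$, the quartic identity $\E[r_ir_jr_kr_l] = \delta_{ij}\delta_{kl}+\delta_{ik}\delta_{jl}+\delta_{il}\delta_{jk} - 2\delta_{ij}\delta_{jk}\delta_{kl}$ (the last being the Rademacher-specific correction relative to Wick's formula), and its sextic analogue. Substituting $P_{ij} = \varepsilon_i\varepsilon_jV_{ij}$, which follows from $\bs G(\bs\beta_0) = \bs D_\varepsilon\bs Z$, and repeatedly using $\bs P^2=\bs P$ to collapse double projections turns the resulting index sums into the stated traces and Hadamard products. The Gaussian-type ``pairings'' of each identity reproduce the $\Omega^L$ pieces, matching the fixed-$k$ expressions of \citet{kleibergen2005testing}, \citet{iandrews2016conditional}, and \citet{matsushita2020jackknife}; the diagonal-correction terms (the $-2\delta\delta\delta$ and its sextic analogue) together with the contributions from cross- and self-products of the cubic-in-$\bs r$ fourth summand produce the $\Omega^H$ pieces. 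The main obstacle will be the sextic expectation arising from squaring the fourth summand of \eqref{eq:foc_r}: it involves fifteen non-coincident pairings plus several boundary corrections when indices coincide, and recognizing the surviving sums as the Hadamard product $\bs V\odot\bs W$ with $\bs W$ as in \eqref{eq:WdotV} is the most delicate bookkeeping step -- in particular the diagonal entries of \eqref{eq:WdotV} with the correction $\sum_{i_2}V_{i_1i_2}^2\varepsilon_{i_2}^2P_{i_1i_2}^2$ arise exactly from these sextic boundary terms.
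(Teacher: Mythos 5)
Your plan is correct and follows essentially the same route as the paper: pass to the Rademacher representation of the score, dispose of the mean and of the cross-source covariances via odd-order parity in $\bs r$ and the conditional mean-zero independence of $\bs u$, then evaluate the surviving blocks with product-moment identities for Rademacher quadratic and cubic forms (the sextic identity from squaring the cubic summand being the heavy bookkeeping, exactly as in the paper's Lemma on Rademacher expectations), and finally take the expectation over $\bs U$ to produce the $\bs D_{\Sigma^{U}(l_1,l_2)}$ traces. The only differences are organizational—the paper expands $(\bs I-\bs D_{P\iota})$ into four products $(I)$--$(IV)$ rather than grouping by randomness source, and its $\Omega^{L}/\Omega^{H}$ split is defined as the conditional expectation of the fixed-$k$ (Kleibergen-type) variance estimator rather than literally as the ``Gaussian pairings'' (some $-2\bs D_P$-type Rademacher corrections sit inside $\Omega^{L}$)—but since the lemma only requires the total to match the stated sum, this gloss does not affect the validity of your argument.
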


\begin{proof}
    See Supplementary Appendix \ref{sapp:proof expandvar}.
\end{proof}

For the next Lemma define the conditional variance  $\bs\Sigma_{n}(\bs\beta_0)$ as in Theorem~\ref{thm:main}
with $\sigma_n^2$ from \eqref{eq:sigman}, $\bs\Omega(\bs\beta_0)$ from Lemma \ref{thm:expandvar} and \begin{equation}\label{eq:covariance AR score}
    \big[\bs\Sigma_n(\bs\beta)\big]_{l+1,1}=\big[\bs\Sigma_n(\bs\beta)\big]_{l+1,1}=2/\sqrt{nk}\tr(\bs\Psi^{(l)}\odot\bs P),
\end{equation}
with $\bs\Psi^{(l)}=\bs M\bs D_{a_{(l)}}\bs P$ and $\bs M=\bs I-\bs P$.

Estimate $\bs\Sigma_n(\bs\beta_0)$ with 
\begin{equation}\label{eq:varestimator}
	\hat{\bs\Sigma}_n(\bs\beta) =
	\begin{pmatrix}
		\hat\sigma_{n}^2(\bs\beta) & \big[\hat{\bs\Sigma}_n(\bs\beta)\big]_{2:p+1,1}' \\
	    \big[\hat{\bs\Sigma}_n(\bs\beta)\big]_{2:p+1,1}&\hat{\bs\Omega}(\bs\beta)
	\end{pmatrix},
\end{equation}
with for the variance of the AR statistic $\hat{\sigma}_n^2(\bs\beta)=2k^{-1}(k-\bs\iota'\bs D_{P}^2\bs\iota)$ and to lighten the notation $\bs D_{P}$ is short for $\bs D_{P(\beta)}$. Furthermore,
\begin{align}
	    \hat{\Omega}_{l_{1} l_{2}}^{L}(\bs\beta)& = \frac{1}{n}\bs x_{(l_{1})}'(\bs I-\bs D_{P\iota})\bs V(\bs I-\bs D_{P\iota})\bs x_{(l_{2})},\nonumber\\
	    \hat{\Omega}_{l_{1} l_{2}}^{H}(\bs\beta)&=\frac{1}{n}\bs x_{(l_{1})}'[7\bs D_{P}\dot{\bs V}\bs D_{P}-4\bs D_{P}^2\dot{\bs V}\bs D_{P}-4\bs D_{P}\dot{\bs V}\bs D_{P}^2+3\dot{\bs V}\odot\bs P\odot\bs P\-4\bs D_{P}(\dot{\bs V}\odot\bs P\odot\bs P)\nonumber\\
	    &\qquad  -4(\dot{\bs V}\odot\bs P\odot\bs P)\bs D_{P}-2\bs D_{P}\dot{\bs V}-2\dot{\bs V}\bs D_{P}+2\bs D_{P}^2\dot{\bs V}+2\dot{\bs V}\bs D_{P}^2]\bs x_{(l_{2})}\nonumber\\
	    &\quad -\frac{2}{n}\bs x_{(l_{1})}'\big[\big((\bs V\bs D_{\varepsilon}\odot\bs V\bs D_{\varepsilon})(\bs P\odot\bs P)\big)\odot\bs I\big]\bs x_{(l_{2})}\label{eq:omegahat}.
\end{align}
Finally,
\begin{equation}\label{eq:omegacov}
	[\hat{\bs \Sigma}_n(\bs\beta)]_{l+1,1}
    = \frac{2}{\sqrt{n k}}\bs x_{(l)}'(\bs D_{V}-(\bs V\odot\bs P))\bs D_{P}\bs \varepsilon.
\end{equation}

The conditional unbiasedness and consistency of \eqref{eq:varestimator} for $\bs\Sigma(\bs\beta_{0})$ is stated in the following result. 
\begin{lemma}\label{thm:var unbias}
    Under Assumptions \ref{ass:model} and \ref{ass:decomposev}, $\E[\hat{\bs\Sigma}_n(\bs\beta_{0})|\mathcal{J}]=\bs\Sigma_n(\bs\beta_{0})$. Also, under Assumptions \ref{ass:model} to \ref{ass:eigbound}, $\hat{\bs\Sigma}_n(\bs\beta_0)\rightarrow_p\bs\Sigma_n(\bs\beta_{0})$.
\end{lemma}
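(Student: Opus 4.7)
The plan is to handle the three distinct blocks of $\hat{\bs\Sigma}_n(\bs\beta_0)$ separately, since each has its own structure. For the $(1,1)$ block, the equality $\hat\sigma_n^2(\bs\beta_0)=\sigma_n^2(\bs\beta_0)$ holds as a deterministic identity and requires no expectation: idempotency of $\bs P(\bs\beta_0)$ gives $\sum_{i,j}P_{ij}^2=\tr(\bs P^2)=\tr(\bs P)=k$, so $\sum_{i\neq j}P_{ij}^2=k-\bs\iota'\bs D_P^2\bs\iota$, matching the estimator exactly, and in particular making this block both unbiased and consistent on the nose.

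For the off-diagonal $[\hat{\bs\Sigma}_n]_{l+1,1}$ and for the $\hat{\bs\Omega}$ block, the strategy is to substitute the decomposition $\bs x_{(l)}=\bar{\bs z}_{(l)}+\bs u_{(l)}+\bs D_{\varepsilon}\bs a_{(l)}$ from \eqref{def:barxi} (together with Assumption~\ref{ass:decomposev}) into the estimator, and then take the conditional expectation given $\mathcal{J}=\{\varepsilon_i,\bs Z_i\}_{i=1}^n$. Under $\mathcal{J}$, the vectors $\bar{\bs z}_{(l)}$, $\bs a_{(l)}$, $\varepsilon$ and the projector/$\bs V$-type matrices are all deterministic, so only $\bs u_{(l)}$ is random, with $\E[\bs u_i|\mathcal{J}]=\bs 0$, $\E[u_{i,l_1}u_{j,l_2}|\mathcal{J}]=0$ for $i\neq j$ by Assumption~\ref{ass:model}(a), and $\E[u_{i,l_1}u_{i,l_2}|\mathcal{J}]=[\bs\Sigma^U_i]_{l_1l_2}$. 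Expanding the products in $\hat{\bs\Omega}^L+\hat{\bs\Omega}^H$ produces three types of surviving terms: (i) $\bar{\bs z}$--$\bar{\bs z}$ terms that must reproduce $\Omega^{L,z}+\Omega^{H,z}$; (ii) $\bs D_{\varepsilon}\bs a$--$\bs D_{\varepsilon}\bs a$ terms that must reproduce $\Omega^{L,a}+\Omega^{H,a}$; and (iii) quadratic-in-$\bs u$ terms whose conditional expectation, by the diagonal structure of $\E[\bs u_i\bs u_j'|\mathcal{J}]$, collapses to a trace form against $\bs D_{\Sigma^{U}(l_1,l_2)}$ matching $\Omega^{L,u}+\Omega^{H,u}$. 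All cross terms mixing $\bar{\bs z}$ or $\bs D_\varepsilon\bs a$ with $\bs u$ vanish by $\E[\bs u|\mathcal{J}]=\bs 0$. The off-diagonal case in \eqref{eq:omegacov} is analogous but simpler, reducing after expectation to $(2/\sqrt{nk})\tr\bigl((\bs D_V-\bs V\odot\bs P)\bs D_P\bs D_{\varepsilon}\bs D_{a_{(l)}}\bigr)$, which one rewrites as $(2/\sqrt{nk})\tr(\bs\Psi^{(l)}\odot\bs P)$ using $\bs M=\bs I-\bs P$ and the identity $\tr(\bs A\odot\bs B\cdot\bs C\odot\bs I)=\sum_i A_{ii}B_{ii}C_{ii}$. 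The main obstacle here is patient bookkeeping: verifying that the many combinatorial terms generated by the Hadamard and diagonal operators in $\hat{\bs\Omega}^H$, once reduced under $\E[\cdot|\mathcal{J}]$, match term-by-term with the $\bs V\odot\bs W$, $\bs D_P-\bs P\odot\bs P$, and $(\bs V\bs D_\varepsilon\odot\bs V\bs D_\varepsilon)$ structures in Lemma~\ref{thm:expandvar}; the key algebraic simplifications come from the identities $\bs D_P\bs D_\varepsilon=\bs D_\varepsilon\bs D_P$ and $\sum_i V_{ii}^{}P_{ii}^{}=\tr(\bs V\bs D_P)$.

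For consistency, unbiasedness reduces the problem to showing $\var(\hat{\bs\Sigma}_n(\bs\beta_0)|\mathcal{J})\to 0$ entrywise. Each entry is at most a quartic form in the $\bs u_i$ with coefficients built from $\bar{\bs z}$, $\bs a$, $\varepsilon$, $\bs P$, and $\bs V$. Using Assumption~\ref{ass:model}(b,c) to bound conditional second and fourth moments of $\bs u$, $\varepsilon$, and $\bs\eta$, and Assumption~\ref{ass:eigbound}(a,d,e) to bound $\sum_i\|\bar{\bs z}_i\|^2/n$, the eigenvalues of $\bs Z'\bs Z/n$ and $\bs Z'\bs D_\varepsilon^2\bs Z/n$, and $P_{ii}$ away from $1$, these quartic forms have conditional variance of order $1/n$. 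A Chebyshev argument then gives convergence in probability to $\bs\Sigma_n(\bs\beta_0)$. The hard part of this step is controlling the mixed sums of the form $\sum_{i_1,i_2}V_{i_1i_2}^{}P_{i_1i_2}^{}(\cdot)$ that arise from $\bs V\odot\bs W$; these are handled by Cauchy--Schwarz against $\tr(\bs V\bs V)$ and $\tr(\bs P)=k$, both $O(n)$ under the eigenvalue bounds, combined with $\max_i P_{ii}<1$ to prevent degenerate diagonal blow-up.
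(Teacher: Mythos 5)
There is a genuine gap in your unbiasedness argument: you treat the conditional expectation given $\mathcal{J}$ as an expectation over $\bs u$ alone, with $\bs\varepsilon$, $\bar{\bs z}_{(l)}$, $\bs a_{(l)}$ and the matrices built from them held fixed. That is not how the statement works, and the computation does not close under that reading. The paper first rewrites every block of $\hat{\bs\Sigma}_n(\bs\beta_0)$ distributionally in terms of Rademacher signs $\bs r$ (using Assumption \ref{ass:model}(d), i.e.\ reflection invariance of $\bs\varepsilon$ conditional on $\bs Z$) and then takes the expectation over \emph{both} $\bs r$ and $\bs U$, invoking the moment formulas of Lemma \ref{thm:Rademacher}. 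This step is not optional bookkeeping. First, with only $\E[\bs u|\mathcal{J}]=\bs 0$ the cross terms mixing $\bar{\bs z}_{(l_1)}$ with $\bs D_\varepsilon\bs a_{(l_2)}$ (which contain no $\bs u$ at all), e.g.\ $\bar{\bs z}_{(l_1)}'(\bs I-\bs D_{P\iota})\bs V(\bs I-\bs D_{P\iota})\bs D_\varepsilon\bs a_{(l_2)}$, survive, yet no such terms appear in $\bs\Omega(\bs\beta_0)$ of Lemma \ref{thm:expandvar}; they vanish only because they are odd in the signs of $\bs\varepsilon$. Second, the estimator is written in terms of $\bs D_{P\iota}$, whereas the target is written in terms of $\bs D_P$, $\bs P\odot\bs P$ and $(\bs V\bs D_\varepsilon\odot\bs V\bs D_\varepsilon)$; converting one into the other requires evaluating expectations of products of up to four Rademacher-bearing factors (terms like $\bs r'\bs P\bs D_r\bs\Phi\bs r$), which is exactly what the paper's $\overset{(\E_{r,U})}{=}$ steps do. Third, the same issue hits the covariance block: in $\frac{2}{\sqrt{nk}}\bs x_{(l)}'(\bs D_V-\bs V\odot\bs P)\bs D_P\bs\varepsilon$ the $\bar{\bs z}_{(l)}$ contribution does not vanish under $\E_U$ alone, so your claimed reduction to a single trace in $\bs a_{(l)}$ (whose trace expression is also misstated, as it drops one factor of $\bs D_\varepsilon$ relative to $\bs a_{(l)}'\bs D_\varepsilon(\bs D_V-\bs V\odot\bs P)\bs D_P\bs\varepsilon$) is not available without the symmetrization.

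The same omission propagates to your consistency sketch: the relevant conditional variance of $\hat{\bs\Sigma}_n(\bs\beta_0)$ is over $(\bs r,\bs U)$, not over $\bs u$ alone, so the estimator is not merely "a quartic form in the $\bs u_i$ with deterministic coefficients". The paper has to treat bilinear forms $\bs x_{(l_1),r}'\bs A_r\bs x_{(l_2),r}$ where the weight matrix itself depends on $\bs r$ (its cases $\bs A_r=\bs D_r\bs D_{Pr}\bs V$ and $\bs A_r=\bs D_r\bs D_{Pr}\bs V\bs D_{Pr}\bs D_r$, coming from the $\bs D_{P\iota}$ terms), and it uses the Lyapunov-type conditions in Assumption \ref{ass:eigbound}(b),(c) — e.g.\ $\max_i\|\bar{\bs Z}'\bs V\bs D_\varepsilon\bs e_i\|^2/n\rightarrow_{a.s.}0$ — precisely to control these; a generic Chebyshev bound in $\bs u$ with eigenvalue bounds does not cover them. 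Your treatment of the $(1,1)$ block ($\hat\sigma_n^2(\bs\beta_0)=\sigma_n^2(\bs\beta_0)$ as an algebraic identity from idempotency) is fine and matches the paper, but the remaining blocks need the reflection-invariance averaging to be made explicit before the term-by-term matching with Lemma \ref{thm:expandvar} can go through.
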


\begin{proof}
    See Supplementary Appendix \ref{sapp:proof var unbias}.
\end{proof}

\begin{lemma}\label{lem:AR score}
    Define $\bs T(\bs\beta_{0})$ as in Theorem~\ref{thm:main}. Under Assumptions \ref{ass:model} to \ref{ass:eigbound}, when $n\rightarrow\infty$ and $k/n\rightarrow\lambda\in (0,1)$, $
		\bs\Sigma_{n}(\bs\beta_{0})^{-1/2}\bs T(\bs\beta_{0})\rightarrow_{d} N(\bs 0,\bs I_{p+1})$, 
\end{lemma}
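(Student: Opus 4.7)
The plan is to establish joint asymptotic normality via the Cramér--Wold device, applied to the Rademacher-equivalent statistic. By Assumption~\ref{ass:model}(d), conditionally on $\mathcal{J}$ the vector $\bs T(\bs\beta_0)$ has the same distribution as $\bs T_r(\bs\beta_0)$ obtained by replacing $\bs\varepsilon$ with $\bs D_r\bs\varepsilon$ in the AR statistic and in the score; after invoking the decomposition $\bs x_i=\bar{\bs x}_i+\varepsilon_i\bs a_i$ from Assumption~\ref{ass:decomposev}, the score coordinates take the form \eqref{eq:foc_r}. It therefore suffices to establish the CLT for $\bs T_r(\bs\beta_0)$ conditionally on $\mathcal{J}$.

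Fix $\bs c \in \mathbb{R}^{p+1}$ and set $L_n=\bs c'\bs T_r(\bs\beta_0)$. Expanding through \eqref{eq:foc_r}, $L_n$ decomposes into monomials in $\bs r$ of degrees zero, one, two and three; the degree-zero part enters the centering. Introducing the filtration $\mathcal{F}_i=\sigma(r_1,\ldots,r_i)$ and collecting, for each $i$, all monomials whose largest index equals $i$, I obtain a martingale-difference representation $L_n-\E[L_n\mid\mathcal{J}]=\sum_{i=1}^{n}D_{ni}$ with $\E[D_{ni}\mid\mathcal{F}_{i-1}]=0$. Each $D_{ni}$ is $r_i$ times a polynomial of degree at most two in $r_1,\ldots,r_{i-1}$, explicit in $\bs P$, $\bs a_{(l)}$, $\bar{\bs x}_{(l)}$, $\bs Z$, $\bs\varepsilon$.

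I would then apply a martingale CLT extending the bilinear version in Lemma~A2 of \citet{chao2012asymptotic} so as to accommodate the cubic-in-$\bs r$ summand in \eqref{eq:foc_r}. Two ingredients need to be verified. First, a conditional Lyapunov bound on $\sum_i\E[D_{ni}^4\mid\mathcal{F}_{i-1}]$, obtained from the maximal row-norm controls in Assumption~\ref{ass:eigbound}(b),(c) together with the bounded eigenvalue and moment conditions in Assumptions~\ref{ass:model}(b),(c) and \ref{ass:eigbound}(a),(d),(e). Second, convergence of the conditional quadratic variation: letting $V_n\equiv\sum_i\E[D_{ni}^2\mid\mathcal{F}_{i-1}]$, one expands $V_n$ and writes $V_n-\bs c'\bs\Sigma_n(\bs\beta_0)\bs c$ as a mean-zero quadratic form in $\bs r$ whose variance vanishes under Assumption~\ref{ass:eigbound}(d),(e); this identifies the limit variance with the matrix from Lemma~\ref{thm:expandvar} plus the AR--score cross-term \eqref{eq:covariance AR score}.

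The principal obstacle is the genuinely trilinear piece $n^{-1}\bs r'\bs P\bs D_r\bs D_{\bar{\bs x}_{(l)}}\bs Z(\bs Z'\bs D_{\varepsilon^2}\bs Z)^{-1}\bs Z'\bs D_\varepsilon\bs r$ in \eqref{eq:foc_r}, which falls outside the scope of the bilinear CLT of \citet{chao2012asymptotic}. It cannot simply be shown negligible: Lemma~\ref{thm:expandvar} identifies it as the source of the additional negative variance contributions $\bs\Omega^{H}$. The delicate calculation is to show that, within $V_n$, the quadratic variation of the trilinear piece and its covariances with the linear and bilinear pieces combine to deliver exactly the $\bs\Omega^{H}$ terms, while the remaining cross-fluctuations concentrate to zero. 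The conditional unbiasedness statement in Lemma~\ref{thm:var unbias} then serves as a convenient cross-check on the final variance expression, after which the Cramér--Wold conclusion transfers back from $\bs T_r(\bs\beta_0)$ to $\bs T(\bs\beta_0)$ via the distributional equivalence of the first step.
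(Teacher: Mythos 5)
Your proposal follows essentially the same route as the paper's proof: conditioning on $\mathcal{J}$, passing to the Rademacher-equivalent statistic via Assumptions \ref{ass:model}(d) and \ref{ass:decomposev}, grouping the monomials in $\bs r$ by their largest index into a martingale difference array, extending the bilinear CLT of \citet{chao2012asymptotic} to accommodate the cubic term through a conditional Lyapunov bound and convergence of the conditional quadratic variation to $\bs c'\bs\Sigma_n(\bs\beta_0)\bs c$, and concluding by Cram\'er--Wold together with the distributional-equivalence transfer. The only points you leave implicit --- verifying that $\bs\Sigma_n(\bs\beta_0)$ is nonsingular (bounded away from degeneracy) so the normalization is well defined, disposing of the initial low-index terms, and the dominated-convergence step that converts the conditional CLT into the unconditional statement --- are exactly the bookkeeping the paper carries out, so the approach is sound.
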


\begin{proof}
    See Supplementary Appendix \ref{sapp:proof AR score}.
\end{proof}

The proof of Theorem \ref{thm:main} then follows from Lemma \ref{thm:var unbias} and Lemma \ref{lem:AR score} and Slutsky's theorem.

\subsection{Proof of Lemma \texorpdfstring{\ref{lem:negdef}}{1}}\label{app:proof negdef}
\begin{proof} It is clear that in Theorem \ref{thm:expandvar}, $\bs\Omega^{H,a}(\bs\beta_{0})$ is negative semidefinite. To show that $\bs\Omega^{H,u}(\bs\beta_{0})\prec \bs O$, observe that $\bs\Omega^{H,u}(\bs\beta_{0})=-\frac{2}{n}\sum_{i=1}^{n}c_{i}\bs\Sigma^{U}_{i}$ where
\begin{align*}
    c_{i}&=\bs e_{i}'(\bs D_{V}\bs D_{P}(1-2\bs D_{P})+(\bs V\bs D_{\varepsilon}\odot \bs V\bs D_{\varepsilon})(\bs P\odot \bs P))\bs e_{i}\\
    &=V_{ii}P_{ii}(1-2P_{ii}+P_{ii}^2)+\sum_{j\neq i}V_{ij}^2\varepsilon_{j}^2P_{j i}^2\geq V_{ii}P_{ii}(1-P_{ii})^2.
\end{align*}
Since by assumption $P_{ii}\leq C<1$, $\lambda_{\min}(\bs\Sigma^{U}_{i})\geq C$, and $\sum_{i=1}^{n}V_{ii}P_{ii}>0$, we have that $\bs\Omega^{H,u}(\bs\beta_{0})\leq -(C\sum_{i=1}^{n}V_{ii}P_{ii})\cdot \bs I_{p}\prec\bs O$. 

It now suffices to prove that $\bar{\bs Z}'(\bs V\odot\bs W)\bar{\bs Z}\preceq \bs O.$
We first consider the diagonal terms of $\bs V\odot \bs W$. If $V_{ii}=0$, the result is trivial. We therefore assume $V_{ii}>0$. First, we make the observation that $\bs\Omega_{jj}(\bs\beta_{0})\geq 0$, since it is the expectation of the squared score. This also holds in a model with $\bs\Pi=\bs O$, and $\bs a_{(l)}=0$ for all $l$ and $\bs\Sigma^{U}(l_{1}, l_{2})=1$ for some $l_{1}=l_{2}$ and $0$ for all $l_{1}\neq l_{2}$. We then deduce that for all $i$,
\begin{equation}\label{eq:usefuldiagonal}
   V_{ii}-3P_{ii}V_{ii}+4V_{ii}P_{ii}^2-2\sum_{j}V_{ij}^2\varepsilon_{j}^2P_{jj}^2\geq 0.
\end{equation}
Now, $w_{ii}V_{ii} = -2(V_{ii}P_{ii}-2V_{ii}P_{ii}^2+\sum_{j}V_{ij}^2\varepsilon_{j}^2P_{ij}^2)$. To show that $w_{ii}\leq 0$, we need to show that $-\frac{1}{2}w_{ii}V_{ii}=V_{ii}P_{ii}-2V_{ii}P_{ii}^2+\sum_{j}V_{ij}^2\varepsilon_{j}^2P_{ij}^2\geq 0$. Using \eqref{eq:usefuldiagonal}, we have that $
   -\frac{1}{2}w_{ii}V_{ii}\geq V_{ii}\left(P_{ii}-2P_{ii}^2+\frac{1}{2}-\frac{3}{2}P_{ii}+2P_{ii}^2\right)= \frac{1}{2}V_{ii}(1-P_{ii}).$ 
Since $V_{ii}> 0$, we can now conclude that $w_{ii}< 0$ if $P_{ii}\leq C< 1$, which holds by Assumption \ref{ass:AR}.
   
Consider now the off-diagonal terms of $\bs V\odot \bs W$. For $i\neq j$, we have $
        w_{i j}=(P_{i i}P_{j j}+P_{i j}^2)(3-4(P_{i i}+P_{j j}))-2(P_{i i}+P_{j j})+2(P_{i i}+P_{j j})^2.$ 
Suppose that $P_{i i}+P_{j j}< 3/4$. Then, $P_{i i}P_{j j}+P_{i j}^2\leq (P_{i i}+P_{j j})^2$ by using that $P_{i j}^2\leq P_{i i }P_{j j}$ and $2P_{i i }P_{j j}\leq (P_{i i }+P_{j j})^2$. Defining $x_{i j}=P_{i i }+P_{j j }$, we then have that $w_{i j}\leq -(4x_{i j}^2-5x_{i j}+2)x_{i j}< 0$. Now suppose that $P_{i i }+P_{j j}\geq 3/4$, then $
    w_{i j}\leq P_{i i }P_{j j }(3-4(P_{i i }+P_{j j }))-2(P_{i i }+P_{j j})+2(P_{i i }+P_{j j })^2.$
We can verify that $w_{i j }\leq 0$ if $\max_{i=1,\ldots,n}P_{ii}\leq 1/8[3+2\sqrt{2}+\sqrt{3}(4\sqrt{2} - 5)^{1/2}]\approx 0.904$.

Define $\tilde{\bs\Pi} = (\bs Z'\bs D_{\varepsilon}^2\bs Z)^{1/2}\bs\Pi$ and $\tilde{\bs z}_{i} =(\bs Z'\bs D_{\varepsilon}^2\bs Z)^{-1/2}\bs z_{i}$. Then,
\begin{align*}
   \bs \Pi'\sum_{i,j}\bs z_{i }V_{i j}\bs z_{j}'\bs \Pi\cdot w_{i j }&=\bs\Pi'\sum_{i ,j}\bs z_{i }\bs z_{i }'(\bs Z'\bs D_{\varepsilon}^2\bs Z)^{-1}\bs z_{j}\bs z_{j}'\bs\Pi\cdot w_{i j}\\
   &=\tilde{\bs\Pi}'\left(\sum_{i=1}^{n}\tilde{\bs z}_{i}\tilde{\bs z}_{i}'\tilde{\bs z}_{i}\tilde{\bs z}_{i}'w_{ii}+\sum_{i > j}(\tilde{\bs z}_{i }\tilde{\bs z}_{i }'\tilde{\bs z}_{j}\tilde{\bs z}_{j}'+\tilde{\bs z}_{j}\tilde{\bs z}_{j}'\tilde{\bs z}_{i }\tilde{\bs z}_{i }')\cdot w_{i j}\right)\tilde{\bs\Pi}.
\end{align*}
The first term within the brackets is a negatively weighted sum of symmetric positive semidefinite matrices and hence, negative semidefinite. The second term is a nonpositively weighted sum of symmetric positive semidefinite matrices, so that the resulting matrix is symmetric and negative semidefinite. We conclude that $\bar{\bs Z}'(\bs W\odot \bs V)\bar{\bs Z}\preceq \bs O$. This completes the proof of Lemma \ref{lem:negdef}.\end{proof}

\section{Additional numerical results}

\subsection{Implementation of tests in the IVQR in Section~\ref{ssec:MC IVQR}}\label{sapp:details MC IVQR}
\paragraph{Subset score test}
Due to the indicator function the moment conditions for the IVQR model in Example \ref{ex:IVQR} are nonsmooth, which hinders the direct application of the \citeauthor{kleibergen2005testing}'s (\citeyear{kleibergen2005testing}) score test as the expectation of the Jacobian of the moment conditions is not always available. \citet{jun2008weak} notes however, that instead of the expectation of the Jacobian one can use the Jacobian of the expectation of the moment conditions, which for the IVQR model is available. In particular, if the quantile function is linear in the regressors $q(\bs x,\tau;\bs\theta)=\bs x_{i}'\bs\theta(\tau)$, and $\bs g_i(\bs\theta(\tau))=(\tau-\mathbbm{1}\{y_i\leq\bs x_{i}'\bs\theta(\tau)\})\bs\Psi(\bs z_{i})$ then $\frac{\partial\E[\bs g_{i}(\bs\theta(\tau))]}{\partial\bs\theta'}\big\rvert_{\theta=\theta_0}=\E[\bs x_if_{\varepsilon_{\tau,i}}(0|\bs x_i,\bs z_i)\bs\Psi(\bs z_i)]$ where $f_{\varepsilon_{\tau,i}}(\cdot|\bs X_i,\bs Z_i)$ is the conditional density of $\varepsilon_i(\tau)=Y_{i}-\bs X_i'\bs\beta(\tau)$. This Jacobian of the expectation can then be used to construct a score test.

As a second contribution \citet{jun2008weak} shows how the effect of exogenous regressors that are not of interest can be partialled out to obtain a more powerful test. We conjecture that his results, which assumed a single endogenous regressor, can be generalized to the case with more endogenous regressors.

Staying close to \citet{jun2008weak}, we assume that for each observation $i=1,\dots,n$ the following moment conditions holds
\begin{equation}
    \begin{split}
        \E(\mathbbm{1}\{Y_{i}\leq\bs D_{i}'\bs\alpha+\bs X_{i}'\bs\beta\}-\tau\}|\bs Z_{i},\bs X_{i})=\bs 0,
    \end{split}
\end{equation}
where $Y_{i}$ is the outcome variable, $\tau$ is the quantile of interest, $\bs D_{i}\in\mathbb{R}^{p_{d}}$ are the endogenous regressors, $\bs X_{i}\in\mathbb{R}^{p_{x}}$ are the exogenous regressors and $\bs Z_{i}\in\mathbb{R}^k$ are the instruments.

Let $\hat{\bs\beta}(\tau,\bs\alpha)$ be an estimate of the quantile regression of $Y_{i}-\bs\alpha\bs D_{i}$ on $\bs X_{i}$. Next, define the following objects
\begin{equation}
    \begin{split}
        \hat{\bs m}_c(\bs\alpha(\tau),\bs\beta)&=\frac{1}{n}\sum_{i=1}^n(\mathbbm{1}\{Y_{i}\leq\bs D_{i}'\bs\alpha+\bs X_{i}\bs\beta\}-\tau)\bs Z_{i}\\
        \hat{\bs H}_{1}(\bs\alpha(\tau))&=\frac{1}{nh_n}\sum_{i=1}^n\bs Z_{i}\bs X_{i}'k(\frac{\bs D_{i}'\bs\alpha+\bs X_{i}'\hat{\bs\beta}(\tau,\bs\alpha)-Y_{i}}{h_{n}})\\
        \hat{\bs H}_{x}(\bs\alpha(\tau))&=\frac{1}{nh_n}\sum_{i=1}^n\bs X_{i}\bs X_{i}'k(\frac{\bs D_{i}'\bs\alpha+\bs X_{i}'\hat{\bs\beta}(\tau,\bs\alpha)-Y_{i}}{h_{n}})\\
        \hat{\hat{\bs\Gamma}}_{n,h_n}(\bs\alpha(\tau),\hat{\bs\beta}(\tau,\bs\alpha(\tau)))&=\frac{1}{nh_n}\sum_{i=1}^n\hat{\bs Q}_{i}\bs D_{i}'k(\frac{\bs D_{i}'\bs\alpha+\bs X_{i}'\hat{\bs\beta}(\tau,\bs\alpha)-Y_{i}}{h_{n}}),
    \end{split}
\end{equation}
where $h_{n}$ is a bandwidth parameter, $k$ a kernel and $\hat{\bs Q}_{i}=\bs Z_i-\hat{\bs H}_{1}(\bs\alpha(\tau))\hat{\bs H}_{x}(\bs\alpha(\tau))^{-1}\bs X_i$.

Also, let $\hat{\bs V}(\bs\alpha(\tau))$ be the sample variance of $\hat{\bs Q}_{i}(\mathbbm{1}\{Y_{i}-\bs D_{i}'\bs\alpha(\tau)-\bs X_{i}'\hat{\bs\beta}(\tau,\bs\alpha)\leq0\}-\tau)$ and $\bs P$ the projection matrix onto the column space of $\hat{\bs V}^{-1/2}(\bs\alpha(\tau))\hat{\hat{\bs\Gamma}}_{n,h_n}(\bs\alpha(\tau),\hat{\bs\beta}(\tau,\bs\alpha(\tau)))$. Finally, we conjecture that, under assumptions similar to those in \citet{jun2008weak}, 
$\hat{K}_{n}(\bs\alpha(\tau))=(\sqrt{n}\hat{\bs m}_{c}(\bs\alpha(\tau),\hat{\bs\beta}(\tau,\bs\alpha(\tau))))'\hat{\bs V}^{-1/2}\bs P\hat{\bs V}^{-1/2}(\sqrt{n}\hat{\bs m}_{c}(\bs\alpha(\tau),\hat{\bs\beta}(\tau,\bs\alpha(\tau))))\rightarrow_{d}\chi^2(p_{d})$.

As \citet{jun2008weak} we use $h_{n}=n^{-1/5}$ and a Gaussian kernel for $k$.

\paragraph{GMM}
The nonsmoothness of the moment conditions also makes that one cannot expand the moment conditions to derive a limiting distribution of the GMM estimator. \citet{andrews1994empirical} shows that, under certain conditions, a limiting distribution still obtains when the expectation of the moment conditions is differentiable. \citet{arellano2009gmm} uses this theory to derive the limiting distribution of the GMM estimator in an unconditional IVQR. Combining his derivations with those in \citet{jun2008weak}, we obtain the limiting distribution of the GMM estimator in a conditional IVQR.

As above, for a linear quantile function, $\frac{\partial\E[\bs g_{i}(\bs\theta(\tau))]}{\partial\bs\theta'}\big\rvert_{\theta=\theta_0}=\E[\bs x_if_{\varepsilon_{\tau,i}}(0|\bs x_i,\bs z_i)\bs\Psi(\bs z_i)]$. Let $\bs J=\sum_{i=1}^n\frac{\partial\E(\bs g_{i}(\bs\theta))}{\partial\bs\theta'}\big\rvert_{\theta=\theta_0}$ and $\bs\Omega$ the variance of the moment conditions. The two or higher step optimal GMM estimator then has in the limit a normal distribution with variance $\bs J'\bs\Omega\bs J$. 

Since we have independent observations in our simulation, we estimate $\bs\Omega$ with $\hat{\bs\Omega}=\sum_{i=1}^n\bs g_{i}(\bs\theta(\tau))\bs g_{i}(\bs\theta(\tau))'$. Like \citet{jun2008weak} we estimate $\bs J$ with $\hat{\bs J}=\frac{1}{h_n}\sum_{i=1}^n\bs X_i\bs Z_i'k(\frac{\bs Y-\bs X_{i}'\bs\beta(\tau)}{h_{n}})$ where $h=n^{-1/5}$ and $k$ a Gaussian kernel. Then the estimated limiting variance of the GMM estimator is $\hat{\bs J}'\hat{\bs\Omega}\hat{\bs J}$.

\paragraph{Inverse quantile regressor}
We implement the inverse quantile regressor (IQR) by \citet{chernozhukov2006instrumental} with the STATA package \texttt{ivqregress}. We use the default options, except when evaluating the power. In that case we set the grid over which the nuisance parameters are evaluated, equal to the nuisance parameter grid used for the AR and MI AR tests.

\paragraph{Smoothed IVQR}
We also implement the smoothed IVQR by \citet{kaplan2017smoothed} with \texttt{ivqregress} package and all default options.

\subsection{Size and power for median IV regression}\label{sapp:medianIV}
We start by testing the joint hypothesis $H_0\colon\theta_{1}(1/2)=-1+\Phi^{-1}(1/2)=-1$ and $\theta_{2}(1/2)=1$ at a $5\%$ significance level, for $k$ ranging from 3 to 30. To more closely match the DGP from \citet{chernozhukov2006finite}, we generate the instruments in this section as $\bs Z_i\sim N(0,\bs I)$. Figure \ref{fig:IVQR size k} shows the results over $1,000$ data sets from this DGP. The left panel shows the rejection rates for the fixed-$k$ AR test, the MI AR test and the finite sample correct test from \citet{chernozhukov2006finite}. For $k=3$ we see that all tests are size correct. When the number of instruments increases, the rejection rates of the fixed-$k$ AR drop below the nominal size.

The right panel of Figure \ref{fig:IVQR size k} shows the rejection rates for the subset score test by \citet{jun2008weak} and tests based on a GMM estimator, the IQR estimator by \citet{chernozhukov2006instrumental} and the smoothed IVQR estimator by \citet{kaplan2017smoothed}. The rejection rates of all these tests increase with the number of moment conditions.

\begin{figure}
    \centering
    \caption{Size in the median IV regression simulation.}
    \label{fig:IVQR size k}    \includegraphics{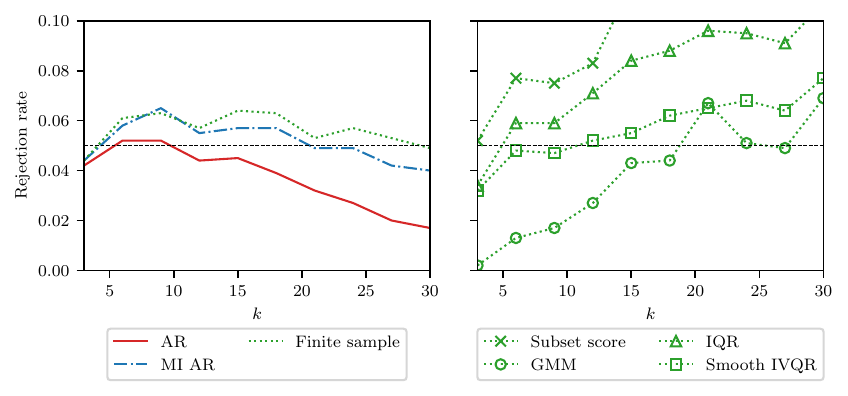}
     \begin{minipage}{\textwidth}\footnotesize\textit{Note:} Size when testing $H_0\colon\theta_{1}(1/2)=-1+\Phi^{-1}(1/2)=-1$ and $\theta_{2}(1/2)=1$ at $\alpha=0.05$ based on the fixed-$k$ Anderson--Rubin test (AR), the many instrument Anderson--Rubin test (MI AR), the finite sample test by \citet{chernozhukov2009finite} (Finite sample), the subset score test by \citet{jun2008weak} (Subset score), and tests based on a GMM estimator (GMM), the inverse quantile regressor estimator by \citet{chernozhukov2006instrumental} (IQR) and the smoothed IVQR estimator by \citet{kaplan2017smoothed} (Smooth IVQR). $k$ denotes the number of instruments. Sections \ref{ssec:MC IVQR} and \ref{sapp:medianIV} describe the DGP.
    \end{minipage}
\end{figure}

Next, we investigate the power of the fixed-$k$ AR test, the MI AR test and the finite sample correct test by \citet{chernozhukov2009finite}. We follow \citet{chernozhukov2006finite} and consider rejection rates of the marginal hypothesis $H_0\colon\theta_{2}(1/2)=\theta_{2}^*(1/2)$, where $\theta_{2}^*(1/2)$ varies between $-4$ and $6$. We reject the marginal hypothesis only if the joint test rejects for all values of $\theta_{1}(1/2)$. As a consequence the test of the marginal hypothesis will be conservative.

Figure \ref{fig:IVQR power} shows the power curves when testing at a $5\%$ significance level for $k=10$ and $k=30$ over $10,000$ data sets. Clearly, the MI AR test has higher power than the fixed-$k$ AR test and especially so when the number of instruments is large. Note furthermore that the MI AR has comparable power to the test by \citet{chernozhukov2009finite}. Unlike the latter, the former does not require any simulations to determine the critical values, which makes it a computationally friendly alternative.

\begin{figure}
    \centering
    \caption{Power in the median IV regression simulation.}
    \label{fig:IVQR power}
    \includegraphics{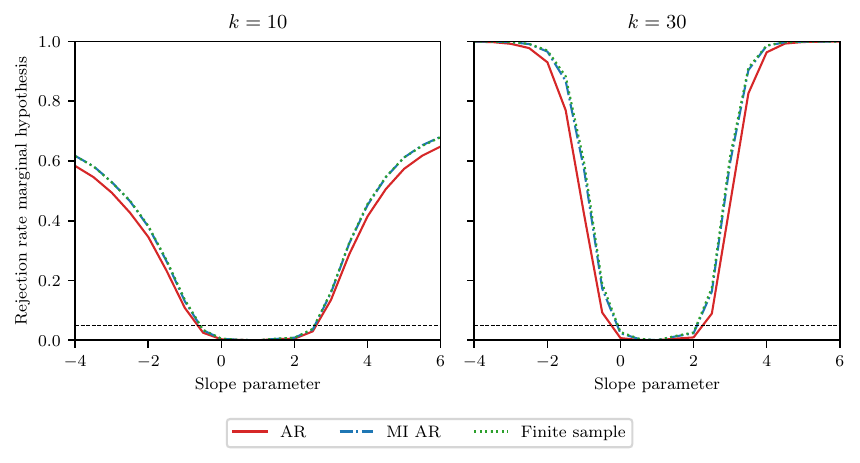}
    \begin{minipage}{\textwidth}\footnotesize\textit{Note:}     Marginal power when testing the slope parameter $H_{0}\colon\theta_{2}(1/2)=\theta^*_2(1/2)$ at $\alpha=0.05$ where $\theta_{2}^*(1/2)$ varies between -4 and 6, based on the fixed-$k$ Anderson--Rubin test (AR), the many instrument Anderson--Rubin test (MI AR) and the finite sample test by \citet{chernozhukov2009finite} (Finite sample). $k$ denotes the number of instruments. Section \ref{ssec:MC IVQR} describes the DGP.
    \end{minipage}
\end{figure}

\subsection{Skewed moment conditions}\label{sapp:skewed}
To generate skewed moment conditions for the quantile IV model in Section~\ref{ssec:MC IVQR} and the linear IV model in Section~\ref{ssec:MC linear IV}, we draw from a centered and scaled Gamma distribution with shape parameter $1/\zeta$, rate parameter $\sqrt{1/(\sigma^2\zeta)}$ and we subtract $\sqrt{\sigma^2/\zeta}$, such that the distribution has mean zero, variance $\sigma^2$ and skewness governed by $\zeta$. 
We set $\sigma^2$ for the different random variables to the same values as in the symmetric case. Note that for $\zeta$ close to zero, the asymmetry is small, whereas for $\zeta$ large the asymmetry is large. In Figure \ref{fig:histogram gamma} we plot empirical densities for different values of $\zeta$ based on $1,000,000$ draws of this distribution.
\begin{figure}[t]
    \centering
    \caption{Empirical densities of the centered and scaled Gamma distribution.}
    \label{fig:histogram gamma}
    \includegraphics{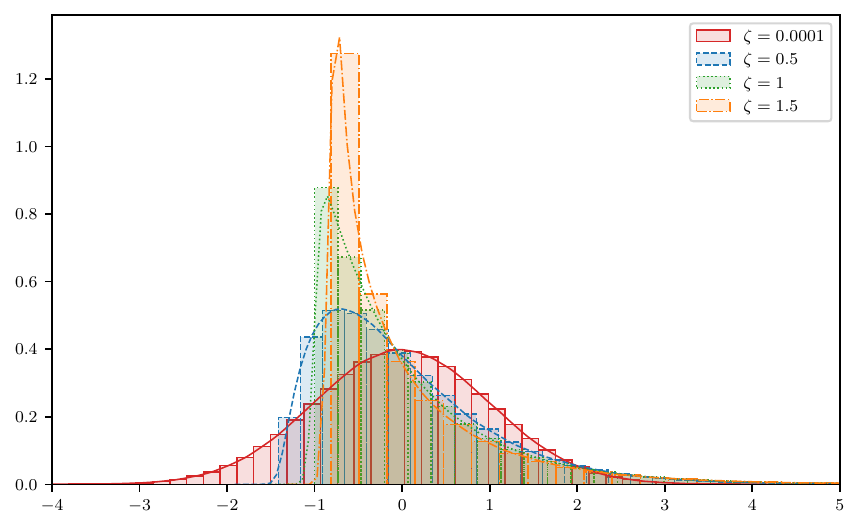}
    \begin{minipage}{\textwidth}\footnotesize\textit{Note:} Empirical densities of the centered and scaled Gamma distribution for different values of skewness parameter $\zeta$.
    \end{minipage}
\end{figure}
\begin{figure}[t]
    \centering
    \caption{Size in the linear IV simulation with skewed moment conditions.}
    \label{fig:size skewed}
    \includegraphics{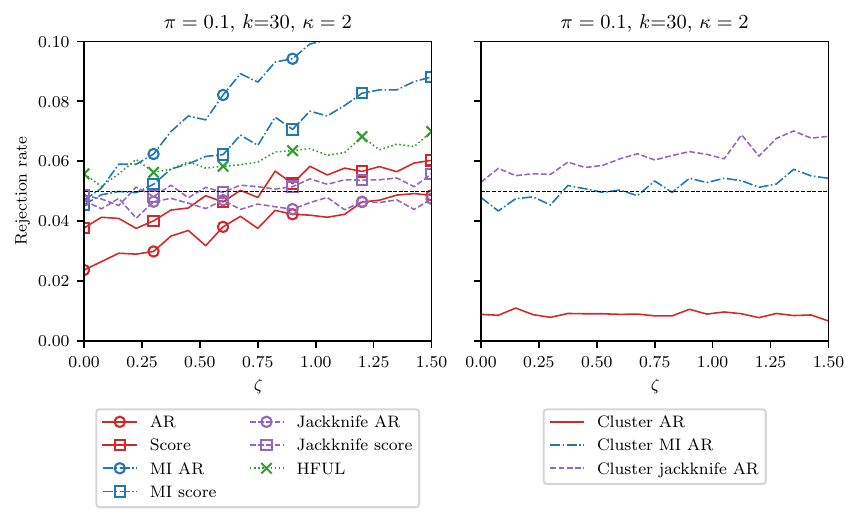}
    \begin{minipage}{\textwidth}\footnotesize\textit{Note:} Size when testing $H_0\colon\beta=0$ at $\alpha=0.05$ based on the fixed-$k$ (cluster) Anderson--Rubin test (AR), the fixed-$k$ score test (Score), the (cluster) many instrument Anderson--Rubin test (MI AR), the many instrument score test (MI score), the (cluster) jackknife Anderson--Rubin test (Jackknife AR, without cross-fit variance estimator), the jackknife score test (Jackknife score, without cross-fit variance estimator) and HFUL (HFUL). In the left panel the observations are independent. In the right panel the 800 observations are clustered in 100 unbalanced clusters. $k$ denotes the number of instruments, $\kappa$ the heteroskedasticity parameter, $\pi$ the instrument strength parameter and the $\zeta$ the skewness parameter. Section \ref{ssec:MC linear IV} and Appendix \ref{sapp:skewed} describe the DGP and Appendix \ref{sapp:details MC} the implementation of the tests.
    \end{minipage}
\end{figure}
\subsubsection{Size in the linear IV under skewed moment conditions}
The left panel of Figure \ref{fig:size skewed} shows the rejection rate of the tests for independent data over 10,000 draws from this DGP for different values of $\zeta$. None of the tests is insensitive to skewness of the moment conditions. However, the rejection rates of the tests developed in this paper rise the fastest when $\zeta$ increases and become oversized. The right panel of the same figure, in which the data are clustered, shows that although the rejection rates of the cluster many instrument AR and cluster jackknife AR seem to rise with the skewness, the cluster many instrument AR is out of the three AR tests closest to size correct.

\subsection{Implementation of tests in the linear IV in Section~\ref{ssec:MC linear IV} }\label{sapp:details MC}
\paragraph{Fixed-\texorpdfstring{$k$}{k} AR test}
The general GMM fixed-$k$ AR test is explained in Section \ref{sec:group}. In the linear IV model as considered in Section \ref{ssec:MC linear IV}, we simplify the weighting matrix to $\bs Z'\bs D_{\varepsilon}^2\bs Z$ for independent data and to $\bs Z'\bs B_{\varepsilon}\bs B_{\varepsilon}'\bs Z$ for clustered data, where $\bs B_{\varepsilon}$ is a $H\times n$ matrix with in column $h$ in the rows indexed by $[h]$ the elements of $\bs\varepsilon_{[h]}$ and zeroes elsewhere.

\paragraph{Fixed-\texorpdfstring{$k$}{k} score test}
In Section \ref{sec:score} we argue that in the linear IV model the fixed-$k$ and many instrument score statistics are identical up to additional terms in the variance estimator. We therefore implement the fixed-$k$ score statistic as $\bs S(\bs\beta)'[\hat{\bs\Omega}^{L}(\bs\beta)
]^{-1}\bs S(\bs\beta)\rightarrow_{d}\chi^2(p)$ where $p=\dim(\bs\beta)$ and $\hat{\bs\Omega}^L(\bs\beta)$ defined in \eqref{eq:omegahat}. 

\paragraph{Jackknife AR test}
Let $\bs P_{Z}=\bs Z'(\bs Z'\bs Z)^{-1}\bs Z$. We implement the jackknife AR test for the linear IV model and independent data by \citet{crudu2021inference} and \citet{mikusheva2021inference} as $
    {\bs\varepsilon'\dot{\bs P}_{Z}\bs\varepsilon}/{\sqrt{2(\bs\varepsilon\odot\bs\varepsilon)'(\dot{\bs P}_{Z}\odot\dot{\bs P}_{Z})(\bs\varepsilon\odot\bs\varepsilon)}}$. That is, we neither use the symmetric jackknife nor a cross-fit variance estimator in its implementation. We follow \citet{mikusheva2021inference} and use critical values from $2k^{-1/2}(Z_1-k)$ with $Z_1\sim\chi^2(k)$ to test.

We implement the cluster jackknife AR from \citet{ligtenberg2023inference} with the same expression as above, but with $\ddot{\bs P}_{Z}$ instead of $\dot{\bs P}_{Z}$. Here $\ddot{\bs P}_{Z}$ is $\bs P_{Z}$, but with its block diagonal elements, rather than only its diagonal elements, set to zero.

\paragraph{Jackknife score test}
We implement the jackknife AR test for the linear IV model and independent data by \citet{matsushita2020jackknife} as $\bs S_{J}(\bs\beta)'\hat{\bs\Omega}_{J}(\bs\beta)\bs S_{J}(\bs\beta)\rightarrow_{d} N(0,1),$ where $\bs S_{J}(\bs\beta)=\bs X'\dot{\bs P}_{Z}\bs\varepsilon$ and $\hat{\bs\Omega}_{J}(\bs\beta)=\bs X'\dot{\bs P}_{Z}\bs D_{\varepsilon}^2\dot{\bs P}_{Z}\bs X+\sum_{i\neq j}\bs x_{i} P_{Z,i j}\varepsilon_{i}\varepsilon_{j}P_{Z,j i}\bs x_{j}'$.

\paragraph{HFUL}
We implement the HFUL estimator from \citet{hausman2012instrumental} as $\hat{\bs \beta}_{\text{HFUL}}=(\bs X'\dot{\bs P}_{Z}\bs X-\hat{\alpha}\bs X'\bs X)^{-1}(\bs X'\dot{\bs P}_{Z}\bs y-\hat{\alpha}\bs X'\bs y)$, where $\hat{\alpha}=[\tilde{\alpha}-(1-\tilde{\alpha})C_{\text{HFUL}}/n]/[1-(1-\tilde{\alpha})C_{\text{HFUL}}/n]$ for $\tilde{\alpha}$ the minimum eigenvalue of $(\bar{\bs X}'\bar{\bs X})^{-1}(\bs X'\dot{\bs P}_{Z}\bar{\bs X})$ with $\bar{\bs X}=(\bs y, \bs X)$ and we follow \citet{hausman2012instrumental} and set $C_{\text{HFUL}}=1$. For its variance estimator we use $\hat{\bs V}=\hat{\bs H}^{-1}\hat{\bs \Sigma}\hat{\bs H}^{-1}$, with $\hat{\bs H}=\bs X'\dot{\bs P_{Z}}\bs X-\hat{\alpha}\bs X'\bs X$ and $\hat{\bs\Sigma}=\hat{\bs X}'\dot{\bs P}_Z\bs D_{\hat{\varepsilon}}^2\dot{\bs P}_Z\hat{\bs X}+\hat{\bs X}'\bs D_{\hat{\varepsilon}}(\dot{\bs P}_Z\odot \dot{\bs P}_Z)\bs D_{\hat{\varepsilon}}\hat{\bs X}$. Here $\hat{\bs\varepsilon}=\bs y-\bs X'\hat{\bs\beta}_{\text{HFUL}}$ and $\hat{\bs X}=\bs X-\hat{\bs\varepsilon}\hat{\bs\gamma}$ for $\hat{\gamma}=\bs X'\hat{\bs\varepsilon}/\hat{\bs\varepsilon}'\hat{\bs\varepsilon}$. As test we use a $t$-test with normal critical values.

\end{appendices}

\clearpage
\begin{appendices}
\renewcommand{\appendixname}{Supplementary Appendix}
\setcounter{section}{0}
\renewcommand{\thesection}{S\Alph{section}}
\renewcommand{\theequation}{S\Alph{section}.\arabic{equation}}

\section{Additional proofs}
\subsection{Proof of Lemma \texorpdfstring{\ref{thm:expandvar}}{2}}\label{sapp:proof expandvar}
\begin{proof}
    The score function from the continuous updating objective function is given by $
    \frac{\partial Q(\bs\beta)}{\partial \beta_{l}} = -\frac{1}{n}\bs x_{(l)}'(\bs I-\bs D_{P\iota})\bs V\bs\varepsilon.$ Under Assumption \ref{ass:main}, the score satisfies
    \begin{align*}
        \frac{\partial Q(\bs\beta)}{\partial \beta_{l}}& \overset{(d)}{=}-\frac{1}{n}(\bar{\bs x}_{(l)} + \bs D_{r}\bs D_{\varepsilon}\bs a_{(l)})'(\bs I - \bs D_{r}\bs D_{Pr})\bs V\bs D_{r}\bs\varepsilon\\
        &=-\frac{1}{n}\bar{\bs x}_{(l)}'\bs V\bs D_{\varepsilon}\bs r + \frac{1}{n}\bs r'\bs P\bs D_{r}\bs D_{\bar{x}_{(l)}}\bs V\bs D_{\varepsilon}\bs r-\frac{1}{n}\bs r'\bs D_{a_{(l)}}\bs P\bs r + \frac{1}{n}\bs r'\bs P\bs D_{a_{(l)}}\bs P\bs r\overset{(\E_{r})}{=}0.
	\end{align*}
	This proves the first statement of Theorem \ref{thm:expandvar}.

    The $(l_{1},l_{2})^{\text{th}}$ element of the conditional variance is given by
    \begin{align*}
    \E\left[n\frac{\partial Q(\bs\beta)}{\partial \beta_{l_{1}}}\frac{\partial Q (\bs\beta)}{\partial \beta_{l_{2}}}|\mathcal{J}\right]& =\E\bigg[ \underbrace{\frac{1}{n}\bs x_{(l_{1})}'\bs V\bs\varepsilon\bs\varepsilon'\bs V\bs x_{(l_{2})}}_{(I)} + \underbrace{\frac{1}{n}\bs x_{(l_{1})}'\bs D_{P\iota}\bs V\bs\varepsilon\bs\varepsilon'\bs V\bs D_{P\iota}\bs x_{(l_{2})}}_{(II)}\\
    &\quad \underbrace{-\frac{1}{n}\bs x_{(l_{1})}'\bs V\bs\varepsilon\bs\varepsilon'\bs V\bs D_{P\iota}\bs x_{(l_{2})}}_{(III)}\underbrace{-\frac{1}{n}\bs x_{(l_{1})}'\bs D_{P\iota}\bs V\bs\varepsilon\bs\varepsilon'\bs V\bs x_{(l_{2})}}_{(IV)}\big|\mathcal{J}\bigg].
    \end{align*}
    We write $(I)$--$(IV)$ in terms of the Rademacher random variables $\bs r$ from Assumption \ref{ass:main}, take the expectation over $\bs r$ using the results from Lemma \ref{thm:Rademacher} in Supplementary Appendix \ref{sapp:rademacher}, and then take the expectation over the first stage errors $\bs U$. The marker ``fixed-$k$ approximation" indicates terms that appear when we take the estimator for the variance of the score as in \citet{kleibergen2005testing}. Using that $\bs x_{(l)}  = \bar{\bs x}_{(l)}+\bs D_{\varepsilon}\bs a_{(l)}$,
    \begin{align*}
        (I)& \overset{(d)}{=} \frac{1}{n}\bar{\bs x}_{(l_{1})}'\bs V\bs D_{\varepsilon}\bs r\bs r'\bs D_{\varepsilon}\bs V\bar{\bs x}_{(l_{2})} +\frac{1}{n}\bs r'\bs D_{a_{(l_{1})}}\bs P\bs r \bs r'\bs P\bs D_{a_{(l_{2})}}\bs r + \frac{1}{n}\bar{\bs x}_{(l_{1})}'\bs V\bs D_{\varepsilon}\bs r\bs r'\bs P\bs D_{a_{(l_{2})}}\bs r\\
        &\quad +  \frac{1}{n}\bs r'\bs D_{a_{(l_{1})}}\bs P\bs r\bs r'\bs D_{\varepsilon}\bs V\bar{\bs x}_{(l_{2})} \\
        &\overset{(\E_{r})}{=}\frac{1}{n}[\bar{\bs x}_{(l_{1})}'\bs V\bar{\bs x}_{(l_{2})} -2\tr(\bs D_{P}\bs D_{a_{(l_{1})}}\bs D_{P}\bs D_{a_{(l_{2})}}) + \tr(\bs D_{a_{(l_{1})}}\bs P)\tr(\bs D_{a_{(l_{2})}}\bs P)\\
        &\quad +\tr(\bs D_{a_{(l_{1})}}\bs D_{a_{(l_{2})}}\bs P)+\tr(\bs D_{a_{(l_{1})}}\bs P\bs D_{a_{(l_{2})}}\bs P)]\\
        &\overset{(\E_{U})}{=}\frac{1}{n}[\underbrace{\bar{\bs z}_{(l_{1})}'\bs V\bar{\bs z}_{(l_{2})}+\tr(\bs D_{\Sigma^U(l_{1},l_{2})}\bs V)+\tr(\bs D_{a_{(l_{1})}}\bs D_{a_{(l_{2})}}\bs P)}_{\text{fixed-$k$ approximation}}\\
        &\quad -2\tr(\bs D_{P}\bs D_{a_{(l_{1})}}\bs D_{P}\bs D_{a_{(l_{2})}}) + \tr(\bs D_{a_{(l_{1})}}\bs P)\tr(\bs D_{a_{(l_{2})}}\bs P) +\tr(\bs D_{a_{(l_{1})}}\bs P\bs D_{a_{(l_{2})}}\bs P)].\\
        (II) & \overset{(d)}{=}\frac{1}{n}\bs r'\bs P\bs D_{r}\bs D_{\bar{x}_{(l_{1})}}\bs V\bs D_{\varepsilon}\bs r\bs r'\bs D_{\varepsilon}\bs V\bs D_{\bar{x}_{(l_{2})}}\bs D_{r}\bs P\bs r + \frac{1}{n}\bs r'\bs P\bs D_{a_{(l_{1})}}\bs P\bs r\bs r'\bs P\bs D_{a_{(l_{2})}}\bs P\bs r \\
        &\quad + \frac{1}{n}\bs r'\bs P\bs D_{a_{(l_{1})}}\bs P\bs r\bs r'\bs D_{\varepsilon}\bs V\bs D_{\bar{x}_{(l_{2})}}\bs D_{r}\bs P\bs r + \frac{1}{n}\bs r'\bs P\bs D_{a_{(l_{2})}}\bs P\bs r\bs r'\bs D_{\varepsilon}\bs V\bs D_{\bar{x}_{(l_{1})}}\bs D_{r}\bs P\bs r\\
        &\overset{(\E_{r,U})}{=}
         \underbrace{\frac{1}{n}\bar{\bs z}_{(l_{1})}'[\bs D_{P}\bs D_{V} + \bs D_{P}\bs V\bs D_{P}+\bs P\odot\bs P\odot\bs V-2\bs D_{P}^2\bs D_{V}]\bar{\bs z}_{(l_{2})}}_{\text{fixed-$k$ approximation (I)}}\\
        &\quad +\underbrace{\frac{1}{n}\tr(\bs P\bs D_{a_{(l_{1})}}\bs P\bs D_{a_{(l_{2})}}) + \frac{1}{n}\tr(\bs D_{\Sigma^U(l_{1},l_{2})}(\bs D_{P}\bs D_{V})}_{\text{fixed-$k$ approximation (II)}}\\
        &\quad
        +\frac{1}{n}\bar{\bs z}_{(l_{1})}'\left\{2\bs D_{P}\bs D_{V} + 7\bs D_{P}\bs V\bs D_{P} -10 \bs D_{P}\bs D_{V}\bs D_{P}+3(\bs V\odot\bs P\odot \bs P)\right.\\
        &\quad -2(\bs V\bs D_{\varepsilon}\odot \bs V\bs D_{\varepsilon})(\bs P\odot\bs P)\odot \bs I-4\bs D_{P}^2\bs V\bs D_{P}-4\bs D_{P}\bs V\bs D_{P}^2+16\bs D_{P}^3\bs D_{V}\\
        &\quad \left.-4(\bs V\odot \bs P\odot \bs P)\bs D_{P}-4\bs D_{P}(\bs V\odot \bs P\odot \bs P)\right\}\bar{\bs z}_{(l_{2})} \\
        &\quad +\frac{2}{n}\tr(\bs D_{\Sigma^U(l_{1},l_{2})}(\bs D_P\bs D_V-(\bs V\bs D_\varepsilon\odot\bs V\bs D_\varepsilon)(\bs P\odot\bs P)))\\
        &\quad-\frac{2}{n}\tr((\bs I\odot \bs P\bs D_{a_{(l_{1})}}\bs P)\bs P\bs D_{a_{(l_{2})}}\bs P) + \frac{1}{n}\tr(\bs P\bs D_{a_{(l_{1})}})\tr(\bs P\bs D_{a_{(l_{2})}})\\
        &\quad+ \frac{1}{n}\tr(\bs D_{a_{(l_{1})}}\bs P\bs D_{a_{(l_{2})}}\bs P).\\
        (III)&\overset{(d)}{=}  -\frac{1}{n}\bar{\bs x}_{(l_{1})}'\bs V\bs D_{\varepsilon}\bs r\bs r'\bs D_{\varepsilon}\bs V\bs D_{\bar{x}_{(l_{2})}}\bs D_{r}\bs P\bs r -\frac{1}{n}\bs r'\bs D_{a_{(l_{1})}}\bs P\bs r\bs r'\bs D_{\varepsilon}\bs V\bs D_{\bar{x}_{(l_{2})}}\bs D_{r}\bs P\bs r \\
        &\quad -\frac{1}{n}\bs r'\bs P \bs D_{r}\bs D_{\bar{x}_{(l_{1})}}\bs V\bs D_{\varepsilon}\bs r\bs r'\bs P \bs D_{a_{(l_{2})}}\bs r-\frac{1}{n}\bs r'\bs D_{a_{(l_{1})}}\bs P\bs r\bs r'\bs P\bs D_{a_{(l_{2})}}\bs P\bs r\\
         &\overset{(\E_{r,U})}{=}
       \underbrace{-\frac{1}{n}\bar{\bs z}_{(l_{2})}'\bs V\bs D_{P}\bar{\bs z}_{(l_{1})}-\frac{1}{n}\tr(\bs D_{a_{(l_{2})}}\bs P\bs D_{a_{(l_{1})}}\bs P)-\frac{1}{n}\tr(\bs D_{\Sigma^U(l_{1},l_{2})}\bs D_{P}\bs D_{V} )}_{\text{fixed-$k$ approximation}}\\
        &\quad -\frac{2}{n}\bar{\bs z}_{(l_{2})}'\bs D_{P}(\bs I-\bs D_{P})\bs V\bar{\bs z}_{(l_{1})} -\frac{2}{n}\tr(\bs D_{\Sigma^U(l_{1},l_{2})}\bs D_{P}(\bs I-\bs D_{P})\bs D_{V} )\\
        &\quad + \frac{2}{n}\tr(\bs D_{P}\bs P\bs D_{a_{(l_{1})}}\bs P\bs D_{a_{(l_{2})}}) -\frac{1}{n}\tr(\bs P\bs D_{a_{(l_{2})}})\tr(\bs P\bs D_{a_{(l_{1})}}) -\frac{1}{n}\tr(\bs D_{a_{(l_{2})}}\bs P\bs D_{a_{(l_{1})}}\bs P).\\
        (IV)
        &\overset{(\E_{r,U})}{=}\underbrace{-\frac{1}{n}\bar{\bs z}_{(l_{1})}'\bs V\bs D_{P}\bar{\bs z}_{(l_{2})}-\frac{1}{n}\tr(\bs D_{a_{(l_{1})}}\bs P\bs D_{a_{(l_{2})}}\bs P)-\frac{1}{n}\tr(\bs D_{\Sigma^U(l_{2},l_{1})}\bs D_{P}\bs D_{V} )}_{\text{fixed-$k$ approximation}}\\
        &\quad -\frac{2}{n}\bar{\bs z}_{(l_{1})}'\bs D_{P}(\bs I-\bs D_{P})\bs V\bar{\bs z}_{(l_{1})} -\frac{2}{n}\tr(\bs D_{\Sigma^U(l_{2},l_{1})}\bs D_{P}(\bs I-\bs D_{P})\bs D_{V} )\\
        &\quad + \frac{2}{n}\tr(\bs D_{P}\bs P\bs D_{a_{(l_{2})}}\bs P\bs D_{a_{(l_{1})}}) -\frac{1}{n}\tr(\bs P\bs D_{a_{(l_{2})}})\tr(\bs P\bs D_{a_{(l_{1})}})-\frac{1}{n}\tr(\bs D_{a_{(l_{1})}}\bs P\bs D_{a_{(l_{2})}}\bs P).
    \end{align*}
Note that in $(II)$, $\tr((\bs I\odot \bs P\bs D_{a_{(l_{1})}}\bs P)\bs P\bs D_{a_{(l_{2})}}\bs P)=\bs a_{(l_{1})}'(\bs P\odot\bs P)^2\bs a_{(l_{2})}$ and that in $(III)$, $\tr(\bs D_{P}\bs P\bs D_{a_{(l_{1})}}\bs P\bs D_{a_{(l_{2})}})=\bs a_{(l_{1})}'(\bs P\odot\bs P)\bs D_{P}\bs a_{(l_{2})}$.
    
The products $n^{-1}\tr(\bs P\bs D_{a_{(l_{2})}})\tr(\bs P\bs D_{a_{(l_{1})}})$ cancel when adding $(I)$--$(IV)$. Then,
\begin{equation}\label{eq:conditionalvariance}
\Omega_{l_{1}l_{2}}(\bs\beta_{0})=\E_{r}\big[n\cdot S_{l_{1},r}(\bs\beta_{0})S_{l_{2},r}(\bs\beta_{0})\big|\mathcal{J}\big] =\Omega_{l_{1}l_{2}}^{L}(\bs\beta_{0}) + \Omega_{l_{1}l_{2}}^{H}(\bs\beta_{0}),
\end{equation}
where $\Omega_{l_{1}l_{2}}^{L}(\bs\beta_{0})$ includes all terms labeled ``fixed-$k$ approximation", and $\Omega_{l_{1}l_{2}}^{H}(\bs\beta_{0})$ includes the remaining terms. Some further algebraic manipulations give the result in Lemma \ref{thm:expandvar}.
\end{proof}

\subsection{Proof of Lemma \texorpdfstring{\ref{thm:var unbias}}{3}}\label{sapp:proof var unbias}
\begin{proof} We first show unbiasedness and then proceed with consistency of the estimator.
\subsubsection{Unbiasedness}
The estimator for the variance of the score given in \eqref{eq:omegahat}, evaluated at the true parameter vector $\bs\beta_{0}$, consists of the following components.
\begin{align}\label{eq:variance estimator appendix}
	    \hat{\Omega}_{l_{1}l_{2}}^{L}(\bs\beta_{0})& = \frac{1}{n}\bs x_{(l_{1})}'(\bs I-\bs D_{P\iota})\bs V(\bs I-\bs D_{P\iota})\bs x_{(l_{2})},\nonumber\\
        \hat{\Omega}_{l_{1}l_{2}}^{H}(\bs\beta_{0})&=\frac{1}{n}\bs x_{(l_{1})}'[7\bs D_{P}\dot{\bs V}\bs D_{P}-4\bs D_{P}^2\dot{\bs V}\bs D_{P}-4\bs D_{P}\dot{\bs V}\bs D_{P}^2+3\dot{\bs V}\odot\bs P\odot\bs P\nonumber\nonumber\\
	    &\qquad  -4\bs D_{P}(\dot{\bs V}\odot\bs P\odot\bs P)-4(\dot{\bs V}\odot\bs P\odot\bs P)\bs D_{P}-2\bs D_{P}\dot{\bs V}-2\dot{\bs V}\bs D_{P}+2\bs D_{P}^2\dot{\bs V}\nonumber\\
	    &\qquad+2\dot{\bs V}\bs D_{P}^2]\bs x_{(l_{2})}-\frac{2}{n}\bs x_{(l_{1})}'\big[\big((\bs V\bs D_{\varepsilon}\odot\bs V\bs D_{\varepsilon})(\bs P\odot\bs P)\big)\odot\bs I\big]\bs x_{(l_{2})}.
\end{align}
For $\hat{\Omega}_{l_{1}l_{2}}^{L}(\bs\beta_{0})$, we use Assumption \ref{ass:decomposev} and then Assumption \ref{ass:model} to obtain
\begin{align}
	\bs x_{(l_{2})}'\bs V\bs x_{(l_{1})} 
	&\overset{(d)}{=}\bar{\bs x}_{(l_{2})}'\bs V\bar{\bs x}_{(l_{1})} +\bs r'\bs D_{a_{(l_{2})}}\bs P\bs D_{a_{(l_{1})}}\bs r + \bs r'\bs D_{a_{(l_{2})}} \bs D_{\varepsilon}\bs V\bar{\bs x}_{(l_{1})}+ \bar{\bs x}_{(l_{2})}'\bs V \bs D_{\varepsilon}\bs D_{a_{(l_{1})}}\bs r\nonumber\\
	&\overset{(\E_{r,U})}{=}
 \bar{\bs z}_{(l_{2})}'\bs V\bar{\bs z}_{(l_{1})} +\tr(\bs D_{\Sigma^U(l_{2},l_{1})}\bs V)+ \tr(\bs D_{a_{(l_{2})}}\bs P\bs D_{a_{(l_{1})}}).\label{eq:unbiasedvariancepartsV}
\end{align}
Similarly, we obtain
\begin{align}
    \bs x_{(l_{2})}'\bs D_{P\iota}\bs V\bs x_{(l_{1})}&\overset{(\E_{r,U})}{=}\bar{\bs z}_{(l_{2})}'\bs D_{P}\bs V\bar{\bs z}_{(l_{1})} + \tr(\bs D_{\Sigma^U(l_{2},l_{1})}\bs D_{P}\bs D_{V})+\tr(\bs D_{a_{(l_{1})}}\bs P\bs D_{a_{(l_{2})}}\bs P),\nonumber\\
    \bs x_{(l_{2})}'\bs D_{P\iota}\bs V\bs D_{P\iota}\bs x_{(l_{1})}&\overset{(\E_{r,U})}{=}\bar{\bs z}_{(l_{2})}[\bs D_{P}\bs D_{V}+\bs D_{P}\bs V\bs D_{P}-2\bs D_{P}^2\bs D_{V}+(\bs P\odot\bs P\odot \bs V)]\bar{\bs z}_{(l_{1})} \nonumber\\
    &\qquad +\tr(\bs D_{\Sigma^U(l_{2},l_{1})}\bs D_{V}\bs D_{P}) + \tr(\bs P\bs D_{a_{(l_{1})}}\bs P\bs D_{a_{(l_{2})}}).\label{eq:fixedkvarcons}
\end{align}
Aggregating these results, we see that $\E[\hat{\Omega}_{l_{1}l_{2}}^{L}(\bs\beta_{0})|\mathcal{J}] = \Omega_{l_{1}l_{2}}^{L}(\bs\beta_{0})$.

For $\hat{\Omega}_{l_{1}l_{2}}^{H}(\bs\beta_{0})$, we use the following results
\begin{align}
    &\bs x_{(l_{2})}'\bs D_{P}^{k}\dot{\bs V}\bs D_{P}^{l}\bs x_{(l_{1})} \overset{(\E_{r,U})}{=}\bar{\bs z}_{(l_{2})}'\bs D_{P}^{k}\dot{\bs V}\bs D_{P}^{l}\bar{\bs z}_{(l_{1})},\quad l,k=0,1,2,\nonumber\\
    &\bs x'_{(l_{2})}(\dot{\bs V}\odot \bs P\odot\bs P)\bs D_{P}^{k}\bs x_{(l_{1})} \overset{(\E_{r,U})}{=}\bar{\bs z}_{(l_{2})}'(\dot{\bs V}\odot \bs P\odot\bs P)\bs D_{P}^{k}\bar{\bs z}_{(l_{1})},\quad k=0,1,\nonumber\\
    &\bs x_{(l_{2})}'\bs D_{P}\bs D_{V}\bs x_{(l_{1})}\overset{(\E_{r,U})}{=} \bar{\bs z}_{(l_{2})}'\bs D_{P}\bs D_{V}\bar{\bs z}_{(l_{1})} +\tr(\bs D_{\Sigma^U(l_{2},l_{1})}\bs D_{P}\bs D_{V})+ \tr(\bs D_{P}^2\bs D_{a_{(l_{1})}}\bs D_{a_{(l_{2})}}),\nonumber\\
    &\bs x'_{(l_{2})}(\bs V\odot\bs P)\bs D_{P}\bs x_{(l_{1})}\overset{(\E_{r,U})}{=}\bar{\bs z}_{(l_{2})}'\bs D_{V}\bs D_{P}^{2}\bar{\bs z}_{(l_{1})} + \tr(\bs D_{\Sigma^U(l_{2},l_{1})}\bs D_{V}\bs D_{P}^{2}) +\bs a_{(l_{2})}'(\bs P\odot\bs P)\bs D_{P}\bs a_{(l_{1})}\nonumber,\\
	&\bs x_{(l_{2})}'\big[\big((\bs V\bs D_{\varepsilon}\odot\bs V\bs D_{\varepsilon})(\bs P\odot\bs P)\big)\odot\bs I\big]\bs x_{(l_{1})}\nonumber
    \overset{(\E_{r,U})}{=}\nonumber\\
    &\quad \bar{\bs z}_{(l_{2})}'[((\bs V\bs D_{\varepsilon}\odot\bs V\bs D_{\varepsilon})(\bs P\odot\bs P))\odot\bs I]\bar{\bs z}_{(l_{1})}+ \tr(\bs D_{\Sigma^U(l_{2},l_{1})}(\bs V\bs D_{\varepsilon}\odot\bs V\bs D_{\varepsilon})(\bs P\odot\bs P))\nonumber\\
    &\quad + \bs a_{(l_{2})}'(\bs P\odot\bs P)^2\bs a_{(l_{1})}.\label{eq:unbiasedvarianceparts}
\end{align}
Aggregating these results and using symmetry shows that $\hat{\bs\Omega}(\bs\beta_{0})$ is a conditionally unbiased estimator for $\bs\Omega(\bs\beta_{0})$.

Similarly, we have $\E[\hat{\sigma}^2_n(\bs\beta_{0})]=\E[\frac{2}{k}(k-\bs\iota'\bs D_P^2\bs\iota)]=\frac{2}{k}(\sum_{i=1}^nP_{ii}-\sum_{i=1}^nP_{ii}^2)=\frac{2}{k}(\sum_{i\neq j }P_{ij}^2)=\sigma_n^2(\bs\beta_{0})$ and
\begin{align*}
       \hat{\bs\Sigma}_{l+1,1}(\bs\beta_{0})
        &\overset{(d)}{=}\frac{2}{\sqrt{n\cdot k}}[\bar{\bs x}_{(l)}'(\bs D_{V}-(\bs V\odot\bs D_r\bs P\bs D_r))\bs D_r^2\bs D_{P}\bs D_\varepsilon\bs r\\
        &\quad+\bs r'\bs D_\varepsilon\bs D_{a_{(l)}}(\bs D_{V}-(\bs V\odot\bs D_r\bs P\bs D_r))\bs D_r^2\bs D_{P}\bs D_\varepsilon\bs r]\\
        &\overset{(\E_{r,U})}{=}
        \frac{2}{\sqrt{n\cdot k}}\tr(\bs\Psi^{(l)}\odot\bs P).
\end{align*}
We conclude that $\hat{\bs\Sigma}(\bs\beta_{0})$ is a conditionally unbiased estimator for $\bs\Sigma(\bs\beta_{0})$.

\subsubsection{Consistency}
We first show consistency of the variance estimator of the AR statistic. Under $H_0:\bs\beta=\bs\beta_0$, $\sigma^2_n(\bs\beta_{0})$ and $\hat{\sigma}^2_n(\bs\beta_{0})$ are identical, hence under $H_0$ the estimator is consistent.

Next, we consider the variance estimator of the score statistic. Define $\bs x_{(l_{1}),r}=\bar{\bs z}_{(l_{1})}+\bs u_{(l_{1})}+\bs D_{r}\bs D_{\varepsilon}\bs a_{(l_{1})}$. Under many instrument sequences the variance of the score is bounded away from zero as we establish in Supplementary Appendix \ref{subsubsec:varianceboundedawayfromzero}. Then, to show consistency of the variance estimator, we need to show for some matrix $\bs A_{r}$ that possibly depends on the vector of Rademacher random variables $\bs r$, that
\begin{equation}\label{eq:gen}
    n^{-2}\E[(\bs x_{(l_{1}),r}'\bs A_{r}\bs x_{(l_{2}),r}-\E[\bs x_{(l_{1}),r}'\bs A_{r}\bs x_{(l_{2}),r}|\mathcal{J}])^2|\mathcal{J}]\rightarrow_p 0.
\end{equation}
For $\bs A_{r}$ we consider the general cases (a) $\bs A_{r} = \bs D_{r}\bs A\bs D_{r}$ and (b) $\bs A_{r}=\bs A$, and the specific cases (c) $\bs A_{r} = \bs D_{r}\bs D_{Pr}\bs V$, and (d) $\bs A_{r} = \bs D_{r}\bs D_{Pr}\bs V\bs D_{Pr}\bs D_{r}$.
Cases (a) and (b) cover the consistency of the terms listed in \eqref{eq:unbiasedvariancepartsV} and \eqref{eq:unbiasedvarianceparts} that are all of the form $\bs x_{(l_{1})}'\bs A_{r}\bs x_{(l_{2})}$. For all these terms $\lambda_{\max}(\bs A\odot\bs A)\leq C$ $a.s.n.$ and $\lambda_{\max}(\bs A\bs A')\leq C$ $a.s.n.$, which we will use repeatedly below. We frequently invoke the bound that for a random vector $\bs w$ with independent elements that have bounded fourth moment, we have $\E[(\bs w'\bs A\bs w-\E[\bs w'\bs A\bs w])^2|\bs A]\leq C\tr(\bs A\bs A')$, see for instance \citet{whittle1960bounds}. Cases (c) and (d) will cover the consistency of the terms in \eqref{eq:fixedkvarcons}.

For (a)--(d), we decompose \eqref{eq:gen} into three parts that will be treated separately,
\begin{align*}
   & n^{-2}\E[(\bs x_{(l_{1}),r}'\bs A_{r}\bs x_{(l_{2}),r}-\E[\bs x_{(l_{1}),r}'\bs A_{r}\bs x_{(l_{2}),r}|\mathcal{J}])^2|\mathcal{J}]\\
   &\quad\leq 4\underbrace{n^{-2}\E[(\bar{\bs z}_{(l_{1})}'\bs A_{r}\bar{\bs z}_{(l_{2})}-\E[\bar{\bs z}_{(l_{1})}'\bs A_{r}\bar{\bs z}_{(l_{2})}|\mathcal{J}])^2|\mathcal{J}]}_{(I)}\\
    &\qquad+4\underbrace{n^{-2}\E[(\bs u_{(l_{1})}'\bs A_{r}\bs u_{(l_{2}),r}-\E[\bs u_{(l_{1})}'\bs A_{r}\bs u_{(l_{2})}|\mathcal{J}])^2|\mathcal{J}]}_{(II)}\\
    &\qquad+4\underbrace{n^{-2}\E[(\bs a_{(l_{1})}'\bs D_{\varepsilon}\bs D_{r}\bs A_{r}\bs D_{r}\bs D_{\varepsilon}\bs a_{(l_{2})}-\E[\bs a_{(l_{1})}'\bs D_{\varepsilon}\bs D_{r}\bs A_{r}\bs D_{r}\bs D_{\varepsilon}\bs a_{(l_{2})}|\mathcal{J}])^2|\mathcal{J}]}_{(III)}.
\end{align*}
We start with $(a.I)$--$(a.III)$. 
\begin{align*}
   (a.I) &=n^{-2}\E[(\bar{\bs z}_{(l_{1})}'\bs A_{r}\bar{\bs z}_{(l_{2})}-\E[\bar{\bs z}_{(l_{1})}'\bs A_{r}\bar{\bs z}_{(l_{2})}|\mathcal{J}])^2|\mathcal{J}]\\
     &=n^{-2}\E[(\bs r'\bs D_{\bar{z}_{(l_{1})}}\dot{\bs A}\bs D_{\bar{z}_{(l_{2})}}r)^2|\mathcal{J}]\\
    &=n^{-2}\tr(\bs D_{\bar{z}_{(l_{1})}}\dot{\bs A}\bs D_{\bar{z}_{(l_{2})}}\bs D_{\bar{z}_{(l_{1})}}\dot{\bs A}\bs D_{\bar{z}_{(l_{2})}}) + n^{-2}\tr(\bs D_{\bar{z}_{(l_{1})}}\dot{\bs A}\bs D_{\bar{z}_{(l_{2})}}\bs D_{\bar{z}_{(l_{2})}}\dot{\bs A}\bs D_{\bar{z}_{(l_{1})}})\\
    &\leq 2\lambda_{\max}(\dot{\bs A}\odot\dot{\bs A})\left(\frac{1}{n^2}\sum_{i=1}^{n}\bar{z}_{(l_{1}),i}^4\frac{1}{n^2}\sum_{i=1}^{n}\bar{z}_{(l_{2}),i}^4\right)^{1/2}\rightarrow_{a.s.}0,
\end{align*}
by Assumption \ref{ass:eigbound}. Similarly, for $(a.II)$
\begin{align*}
    (a.II)&=n^{-2}\E[(\bs u_{(l_{1})}'\bs A_{r}\bs u_{(l_{2})}-\E[\bs u_{(l_{1})}'\bs A_{r}\bs u_{(l_{2})}|\mathcal{J}])^2|\mathcal{J}]\\
    &\quad \leq 2n^{-1}\lambda_{\max}(\dot{\bs A}\odot\dot{\bs A})\frac{1}{n}\sum_{i=1}^{n}\E[u_{(l_{1}),ki}^4]^{1/2}\E[u_{(l_{2}),i}^4]^{1/2}\rightarrow_{a.s.}0,
\end{align*}
since Assumptions \ref{ass:model} and \ref{ass:decomposev} imply that $u_{(l_{1}),}$ has bounded fourth moment.
Finally, $(a.III)=\E[(\bs a_{(l_{1})}'\bs D_{\varepsilon}\bs D_{r}\bs A_{r}\bs D_{r}\bs D_{\varepsilon}\bs a_{(l_{2})}-\E[\bs a_{(l_{1})}'\bs D_{\varepsilon}\bs D_{r}\bs A_{r}\bs D_{r}\bs D_{\varepsilon}\bs a_{(l_{2})}|\mathcal{J}])^2|\mathcal{J}]=0$.

For $(b.I)$ we get $\E[(\bar{\bs z}_{(l_{1})}'\bs A_{r}\bar{\bs z}_{(l_{2})}-\E[\bar{\bs z}_{(l_{1})}'\bs A_{r}\bar{\bs z}_{(l_{2})}|\mathcal{J}])^2|\mathcal{J}]=0$, because conditional on $\mathcal{J}$ there is no randomness. For $(b.II)$ we have
\begin{align*}
(b.II)&=n^{-2}\E[(\bs u_{(l_{1})}'\bs A\bs u_{(l_{2})}-\E[\bs u_{(l_{1})}'\bs A\bs u_{(l_{2})}|\mathcal{J}])^2|\mathcal{J}]\\
&=n^{-2}\E[\sum_{i_{1},i_{2},i_{3},i_{4}}u_{(l_{1}),i_{1}}u_{(l_{1}),i_{2}}u_{(l_{2}),i_{3}}u_{(l_{2}),i_{4}}A_{i_{1}i_{3}}A_{i_{2}i_{4}}]-n^{-2}\tr(\bs D_{\Sigma^U(l_{1},l_{2})}\bs D_{A})^2\\
&\leq n^{-2}\E[\bs u_{(l_{1})}'\bs D_{u_{(l_{1})}}(\bs A\odot\bs A)\bs D_{u_{(l_{2})}}\bs u_{(l_{2})}+n^{-2}\bs u_{(l_{2})}'\bs D_{u_{(l_{1})}}\bs A\bs A'\bs D_{u_{(l_{1})}}\bs u_{(l_{2})}]\\
&\leq n^{-1}(\lambda_{\max}(\dot{\bs A}\odot\dot{\bs A})+\lambda_{\max}(\bs A\bs A'))\frac{1}{n}\sum_{i=1}^{n}\E[u_{(l_{1}),i}^4]^{1/2}\E[u_{(l_{2}),i}^4]^{1/2}\rightarrow_{a.s.}0.
\end{align*}
Finally, $(b.III)$ satisfies
\begin{align*}
    (b.III)&=n^{-2}\E[(\bs a_{(l_{1})}'\bs D_{\varepsilon}\bs D_{r}\bs A\bs D_{r}\bs D_{\varepsilon}\bs a_{(l_{2})}-\E[\bs a_{(l_{1})}'\bs D_{\varepsilon}\bs D_{r}\bs A\bs D_{r}\bs D_{\varepsilon}\bs a_{(l_{2})}|\mathcal{J}])^2|\mathcal{J}]\\
    &=n^{-2}\E[(\bs r'\bs D_{a_{(l_{1})}}\bs D_{\varepsilon}\bs A\bs D_{\varepsilon}\bs D_{a_{(l_{2})}}\bs r)^2|\mathcal{J}]-\tr(\bs D_{a_{(l_{1})}}\bs D_{\varepsilon}\bs A\bs D_{\varepsilon}\bs D_{a_{(l_{2})}})^2\\
    & \leq Cn^{-2}\tr(\bs D_{a_{(l_{1})}}\bs D_{\varepsilon}\bs A\bs D_{\varepsilon}\bs D_{a_{(l_{2})}}\bs D_{a_{(l_{2})}}\bs D_{\varepsilon}\bs A'\bs D_{\varepsilon}\bs D_{a_{(l_{1})}})\\
    &\leq Cn^{-2}\tr(\bs D_{\varepsilon}\bs A\bs D_{\varepsilon}^2\bs A'\bs D_{\varepsilon}).
\end{align*}
Using the expressions for $\bs A$ as in \eqref{eq:unbiasedvariancepartsV} and \eqref{eq:unbiasedvarianceparts}, we see that $(b.III)\rightarrow_{a.s.}0$.

We continue with $(c.I)$--$(c.III)$.
\begin{align*}
    (c.I)&=n^{-2}\E[(\bar{\bs z}_{(l_{1})}'\bs D_{r}\bs D_{Pr}\bs V\bar{\bs z}_{(l_{2})}-\E[\bar{\bs z}_{(l_{1})}'\bs D_{r}\bs D_{Pr}\bs V\bar{\bs z}_{(l_{2})}|\mathcal{J}])^2|\mathcal{J}]\\
    &=n^{-2}\E[(\bs r'\bs P\bs D_{r}\bs D_{\bar{z}_{(l_{1})}}\bs V\bar{\bs z}_{(l_{2})}-\E[\bar{\bs z}_{(l_{1})}'\bs D_{P}\bs V\bar{\bs z}_{(l_{2})}|\mathcal{J}])^2|\mathcal{J}]\\
    &=n^{-2}\tr(\bs D_{P}\bs D_{\bar{z}_{(l_{1})}}\bs D_{V\bar{z}_{(l_{2})}}^2\bs D_{\bar{z}_{(l_{1})}})- 2n^{-2}\tr(\bs D_{\bar{z}_{(l_{1})}}\bs V\bar{\bs z}_{(l_{2})}\bar{\bs z}_{(l_{2})}'\bs V\bs D_{\bar{z}_{(l_{1})}}\bs D_{P})\\
    &\quad +n^{-2}\bs\iota'(\bs P\odot\bs P\odot(\bs D_{\bar{z}_{(l_{1})}}\bs V\bar{\bs z}_{(l_{2})}\bar{\bs z}_{(l_{2})}'\bs V\bs D_{\bar{z}_{(l_{1})}}))\bs\iota\\
    &=n^{-2}\bar{\bs z}_{(l_{2})}'\bs V\bs D_{\bar{z}_{(l_{1})}}\bs D_{P}\bs D_{\bar{z}_{(l_{1})}}\bs V\bar{\bs z}_{(l_{2})}-2n^{-2}\bar{\bs z}_{(l_{2})}'\bs V\bs D_{\bar{z}_{(l_{1})}}\bs D_{P}\bs D_{\bar{z}_{(l_{1})}}\bs V\bar{\bs z}_{(l_{2})}\\
    &\quad+ n^{-2}\bar{\bs z}_{(l_{2})}'\bs V\bs D_{\bar{z}_{(l_{1})}}(\bs P\odot\bs P)\bs D_{\bar{z}_{(l_{1})}}\bs V\bar{\bs z}_{(l_{2})}\\
    &\leq \left(\frac{1}{n^2}\sum_{i=1}^{n}\bar{z}_{l_{1},i}^4\frac{1}{n^2}\sum_{i=1}^{n}(\bar{\bs z}_{(l_{2})}'\bs V\bs e_{i})^4\right)^{1/2}\rightarrow_{a.s.}0,
\end{align*}
with the convergence implied by Assumption \ref{ass:eigbound}.

$(c.II)$ follows by analogous arguments. For $(c.III)$, we have 
\begin{align*}
    (c.III)&=n^{-2}\E[(\bs a_{(l_{1})}'\bs D_{\varepsilon}\bs D_{Pr}\bs V\bs D_{r}\bs D_{\varepsilon}\bs a_{(l_{2})}-\E[\bs a_{(l_{1})}'\bs D_{\varepsilon}\bs D_{Pr}\bs V\bs D_{r}\bs D_{\varepsilon}\bs a_{(l_{2})}|\mathcal{J}])^2|\mathcal{J}]\\
    &=n^{-2}\E[(\bs r'\bs P\bs D_{a_{(l_{1})}}\bs P\bs D_{a_{(l_{2})}}\bs r-\tr(\bs P\bs D_{a_{(l_{1})}}\bs P\bs D_{a_{(l_{2})}}))^2|\mathcal{J}]\\
    &\leq n^{-2}\tr(\bs P\bs D_{a_{(l_{1})}}\bs P\bs D_{a_{(l_{2})}}^2\bs P\bs D_{a_{(l_{1})}}\bs P)\rightarrow_{a.s.}0.
\end{align*}

Proceeding with $(d.I)$--$(d.III)$, we have
\begin{align*}
    (d.I)&=n^{-2}\E[(\bar{\bs z}_{(l_{1})}'\bs D_{r}\bs D_{Pr}\bs V\bs D_{Pr}\bs D_{r}\bar{\bs z}_{(l_{2})}-\E[\bar{\bs z}_{(l_{1})}'\bs D_{r}\bs D_{Pr}\bs V\bs D_{Pr}\bs D_{r}\bar{\bs z}_{(l_{2})}|\mathcal{J}])^2|\mathcal{J}].
\end{align*}
Notice that
\begin{align*}
     &n^{-1} \bar{\bs z}_{(l_{1})}'\bs D_{r}\bs D_{Pr}\bs V\bs D_{Pr}\bs D_{r}\bar{\bs z}_{(l_{2})}\\
     & = n^{-1}\bs\iota\bs D_{r}\bs P\bs D_{r}\bs D_{\bar{z}_{(l_{1})}}\bs V\bs D_{\bar{z}_{(l_{2})}}\bs D_{r}\bs P\bs D_r\bs\iota\\
       &=n^{-1}\bar{\bs z}_{(l_{1})}'\bs D_{P}\bs V\bs D_{P}\bar{\bs z}_{(l_{2})}+n^{-1}\bs\iota'\bs D_{P}\bs D_{\bar{z}_{(l_{1})}}\bs V\bs D_{\bar{z}_{(l_{2})}}\bs D_{r}\dot{\bs P}\bs D_{r}\bs\iota\\
      &\quad+ n^{-1}\bs\iota'\bs D_{r}\dot{\bs P}\bs D_{r}\bs D_{\bar{z}_{(l_{1})}}\bs V\bs D_{\bar{z}_{(l_{2})}}\bs D_{P}\bs\iota +n^{-1}\bs\iota'\bs D_{r}\dot{\bs P}\bs D_{r}\bs D_{\bar{z}_{(l_{1})}}\dot{\bs V}\bs D_{\bar{z}_{(l_{2})}}\bs D_{r}\dot{\bs P}\bs D_{r}\bs\iota.
\end{align*}
The second and third term after the final equality sign have expectation equal to zero. The difference of these terms from their expectation converges almost surely to zero by the same arguments as used in showing convergence of parts (a)--(c). The final term has expectation $\bar{\bs z}_{(l_{1})}'(\dot{\bs V}\odot\dot{\bs P}\odot\dot{\bs P})\bar{\bs z}_{(l_{2})}$. Subtracting this expectation, and defining $\bs r_{-ij}$ as the vector $\bs r$ with the $i^{\text{th}}$ and $j^{\text{th}}$ element set to zero, the final term can be written as
\begin{align*}
    \tr(\dot{\bs P}\bs D_{r}\bs D_{\bar{z}_{(l_{1})}}\dot{\bs V}\bs D_{\bar{z}_{(l_{2})}}\bs D_{r}\dot{\bs P}) + n^{-1}\sum_{i_{1}=1}^{n}\sum_{i_{2}\neq i_{1}}r_{i_{1}}r_{i_{2}}\bs r_{-i_{1}i_{2}}'\bs D_{Pe_{i_{1}}}\bs D_{\bar{z}_{(l_{1})}}\dot{\bs V}\bs D_{\bar{z}_{(l_{2})}}\bs D_{Pe_{i_{2}}}\bs r_{-i_{1}i_{2}}.
\end{align*}
Squaring and taking the expectation, we get the bound 
\begin{align*}
    &\frac{2}{n^2}\E[\tr(\dot{\bs P}\bs D_{r}\bs D_{\bar{z}_{(l_{1})}}\dot{\bs V}\bs D_{\bar{z}_{(l_{2})}}\bs D_{r}\dot{\bs P})^2|\mathcal{J}]\\
    &\qquad + \frac{4}{n^2}\sum_{i_{1}=1}^{n}\sum_{i_{2}=1}^{n}\E[(\bs r_{-i_{1}i_{2}}'\bs D_{Pe_{i_{1}}}\bs D_{\bar{z}_{(l_{1})}}\dot{\bs V}\bs D_{\bar{z}_{(l_{2})}}\bs D_{Pe_{i_{2}}}\bs r_{-i_{1}i_{2}})^2|\mathcal{J}]\\
    &\leq \frac{2}{n^2}\E[(\bs r'\bs D_{\bar{z}_{(l_{1})}}(\dot{\bs V}\odot\dot{\bs P}^2)\bs D_{\bar{z}_{(l_{2})}}\bs r)^2]\\
    &\quad+ \frac{4}{n^2}\sum_{i_{1}=1}^{n}\sum_{i_{2}=1}^{n}\tr(\bs D_{Pe_{i_{1}}}\bs D_{\bar{z}_{(l_{1})}}\dot{\bs V}\bs D_{\bar{z}_{(l_{2})}}\bs D_{Pe_{i_{2}}}\bs D_{Pe_{i_{2}}}\bs D_{\bar{z}_{(l_{2})}}\dot{\bs V}\bs D_{\bar{z}_{(l_{1})}}\bs D_{Pe_{i_{1}}})\\
     &\leq C\left(\frac{1}{n^2}\sum_{i=1}^n\bar{z}_{(l_{1}),i}^4\frac{1}{n^2}\sum_{i=1}^n\bar{z}_{(l_{2}),i}^4\right)^{1/2}\rightarrow_{a.s.}0.
\end{align*}
$(d.II)$ follows from analogous arguments. Finally,
\begin{align*}
    (d.III) & = n^{-2}\E[(\bs a_{(l_{1})}'\bs D_{\varepsilon}\bs D_{Pr}\bs V\bs D_{Pr}\bs D_{\varepsilon}\bs a_{(l_{2})}-\tr(\bs P\bs D_{a_{(l_{1})}}\bs P\bs D_{a_{(l_{2})}}))^2|\mathcal{J}]\\
    & = n^{-2}\E[(\bs r'\bs P\bs D_{ a_{(l_{1})}}\bs P\bs D_{a_{(l_{2})}}\bs P\bs r)^2|\mathcal{J}]-\tr(\bs P\bs D_{a_{(l_{1})}}\bs P\bs D_{a_{(l_{2})}}))^2|\mathcal{J}]^2\\
    &\leq n^{-2}\tr(\bs P\bs D_{a_{(l_{1})}}\bs P\bs D_{a_{(l_{2})}}^2\bs P\bs D_{a_{(l_{1})}}\bs P)\rightarrow_{a.s.}0.
\end{align*}

Lastly, we consider the estimator of the covariance between the AR and the score statistic. From \eqref{eq:covariance AR score} and \eqref{eq:omegacov} we can bound the variance of $[\hat{\bs \Sigma}_{n,r}(\bs\beta_{0})]_{l+1,1}$ as 
\begin{align}
       &\E[(\hat{\bs \Sigma}_{n,r}(\bs\beta_{0})]_{l+1,1})^2|\mathcal{J}] \nonumber\\
       &\leq \frac{4}{ nk}\E[(\tr(\bs\Psi^{(l)}\odot\bs P)-(\bar{\bs z}_{(l)}+\bs D_{r}\bs D_{\varepsilon}\bs a_{(l)}+\bs u_{(l)})'(\bs D_V-\bs D_{r}(\bs V\odot\bs P))\bs D_P\bs\varepsilon)^2|\mathcal{J}]\nonumber\\
        &\leq\frac{C}{ nk}(\E[(\tr(\bs\Psi^{(l)}\odot\bs P)-\bs a_{(l)}'\bs D_{r}\bs D_{\varepsilon}(\bs D_V-\bs D_{r}(\bs V\odot\bs P))\bs D_P\bs\varepsilon)^2|\mathcal{J}]\nonumber\\
        &\quad+\E[(\bar{\bs z}_{(l)}'(\bs D_V-\bs D_{r}(\bs V\odot\bs P))\bs D_P\bs\varepsilon)^2|\mathcal{J}]+\E[(\bs u_{(l)}'(\bs D_V-\bs D_{r}(\bs V\odot\bs P))\bs D_P\bs\varepsilon)^2|\mathcal{J}])\nonumber\\
        &=\frac{C}{ nk}(\E[(\bar{\bs z}_{(l)}'(\bs D_V-\bs D_{r}(\bs V\odot\bs P))\bs D_P\bs\varepsilon)^2|\mathcal{J}]+\E[(\bs u_{(l)}'(\bs D_V-\bs D_{r}(\bs V\odot\bs P))\bs D_P\bs\varepsilon)^2|\mathcal{J}]).\label{eq:consistency covariance}
\end{align}
The first term becomes, by using the law of iterated expectations, Assumption \ref{ass:main} and Lemma \ref{thm:Rademacher} in Supplementary Appendix \ref{sapp:rademacher} which gives the expectation over products of Rademacher random variables,
\begin{align*}
    &\frac{C}{nk}\E[(\bar{\bs z}_{(l)}'(\bs D_V-(\bs V\odot\bs P))\bs D_P\bs\varepsilon)^2|\mathcal{J}]\\
    &\leq\frac{C}{nk}(\E[(\bar{\bs z}_{(l)}'\bs D_V\bs D_{r}\bs D_P\bs\varepsilon)^2|\mathcal{J}]+\E[(\bar{\bs z}_{(l)}'\bs D_{r}(\bs V\odot\bs P)\bs D_P\bs\varepsilon)^2|\mathcal{J}])\\
    &=\frac{C}{nk}(\E[\bar{\bs z}_{(l)}'\bs D_V\bs D_P\bs D_{\varepsilon}\bs r\bs r'\bs D_V\bs D_P\bs D_{\varepsilon}\bar{\bs z}_{(l)}|\mathcal{J}]\\
    &\quad+\E[\bs r'\bs D_{\bar{z}_{(l)}}(\bs V\odot\bs P)\bs D_P\bs\varepsilon\bs\varepsilon'\bs D_P(\bs V\odot\bs P)\bs D_{\bar{z}_{(l)}}\bs r|\mathcal{J}])\\
    &=\frac{C}{nk}(\tr(\bar{\bs z}_{(l)}'\bs D_V\bs D_P^3\bar{\bs z}_{(l)})+\bs\varepsilon'\bs D_P(\bs V\odot\bs P)\bs D_{\bar{z}_{(l)}}^2(\bs V\odot\bs P)\bs D_P\bs\varepsilon)\\
    &=\frac{C}{nk}(\tr(\bar{\bs z}_{(l)}'\bs D_V\bs D_P^3\bar{\bs z}_{(l)})+\bs\iota'\bs D_{\varepsilon}\bs D_P(\bs V\odot\bs P)\bs D_{\bar{z}_{(l)}}^2(\bs V\odot\bs P)\bs D_P\bs D_{\varepsilon}\bs\iota)\\
     &\leq\frac{C}{nk}(\tr(\bar{\bs z}_{(l)}'\bs D_V\bs D_P^3\bar{\bs z}_{(l)})+\bar{\bs z}_{(l)}'\bs D_{V}\bs D_{P}\bar{\bs z}_{(l)})
    \rightarrow_{a.s.}0,
\end{align*}
by Assumption \ref{ass:eigbound}. 
The final inequality uses that $\bs e_{i}'(\bs V\odot \bs V)\bs D_{P}\bs D_{\varepsilon}^2\bs\iota = \sum_{j}V_{ij}^2P_{jj}\varepsilon_{j}^2\leq \sum_{j}V_{ij}^2\varepsilon_{j}^2=V_{ii}$ and hence,
\begin{align*}
    \bs\iota'\bs D_{\varepsilon}\bs D_P(\bs V\odot\bs P)\bs D_{\bar{z}_{(l)}}^2(\bs V\odot\bs P)\bs D_P\bs D_{\varepsilon}\bs\iota&\leq \bar{\bs z}_{(l)}'\bs D_{\varepsilon}\bs D_{V}^2\bs D_{\varepsilon}\bar{\bs z}_{(l)}=\bar{\bs z}_{(l)}'\bs D_{V}\bs D_{P}\bar{\bs z}_{(l)}.
\end{align*}
We conclude that, under $H_0\colon\bs\beta=\bs\beta_0$, $\hat{\bs\Sigma}_n$ is consistent for $\bs\Sigma_n$.
\end{proof}

\subsection{Expectations over Rademacher random variables}\label{sapp:rademacher}
For the proof of Lemma \ref{lem:AR score} we require the following result.
\begin{lemma}\label{thm:Rademacher} Consider an $n\times 1$ vector $\bs r$ with independent Rademacher entries. Let $\bs A_1,\ldots,\bs A_4$ denote generic $n\times n$ matrices and $\bs v$ an $n\times 1$ vector. Then,
\begin{enumerate}
    \item\label{it:rAr} $\E[\bs r'\bs A_1\bs r] = \tr(\bs A_1)$.
    \item\label{it:rArrAr} \citet{ullah2004finite}:\newline
    $\E[\bs r'\bs A_1\bs r\bs r'\bs A_2\bs r] = -2\tr(\bs D_{A_1}\bs A_2) + \tr(\bs A_1)\tr(\bs A_2)+\tr(\bs A_1\bs A_2)+\tr(\bs A_1'\bs A_2)$.
     \item $\E[\bs v'\bs r\bs r'\bs A_2\bs D_{r}\bs A_1\bs r] = \bs v'\bs A_2\bs D_{A_1}\bs \iota+ \bs \iota'(\bs A_2\odot \bs A_1')\bs v + \bs \iota'\bs D_{A_2}\bs A_1\bs v - 2\bs \iota'\bs D_{A_2}\bs D_{A_1}\bs v$.
    \item\label{it:rArrArrArrAr}
    $\begin{aligned}[t]
            &\E[\bs r'\bs A_1\bs D_{r}\bs A_2\bs r\bs r'\bs A_3\bs D_{r}\bs A_4\bs r]
             = \tr(\bs D_{A_4A_1}\bs D_{ A_2A_3})\\ 
            &\quad +\bs\iota'\bs D_{A_3}\bs A_4\bs A_1\bs D_{A_2}\bs\iota-2\tr\left(\bs D_{A_3}\bs A_4\bs A_1\bs D_{A_2}\right)+\bs\iota'\bs A_2\odot\bs A_3\odot(\bs A_1'\bs A_4')\bs\iota\\ 
            &\quad +\bs\iota'\bs D_{A_1}\bs A_2\bs A_3\bs D_{A_4}\bs\iota-2\tr\left(\bs D_{A_1}\bs A_2\bs A_3\bs D_{A_4}\right)+\bs\iota'\bs A_4\odot\bs A_1\odot(\bs A_3'\bs A_2')\bs\iota\\ 
            &\quad +\bs\iota'(\bs A_1\odot \bs A_3)(\bs A_2\odot \bs A_4)\bs\iota+\tr(\bs A_1'\bs A_2'\odot \bs A_3'\bs A_4')\\ 
            &\quad -2\tr((\bs A_1\odot\bs A_3)(\bs A_2\odot\bs A_4))+\bs\iota'\bs D_{A_4A_3}\bs A_1\bs D_{A_2}\bs\iota -2\tr(\bs D_{A_4A_3}\bs A_1\bs D_{A_2})\\
            &\quad +\bs\iota'\bs D_{A_2A_1}\bs A_3\bs D_{A_4}\bs\iota -2\tr(\bs D_{A_2A_1}\bs A_3\bs D_{A_4})\\
            &\quad -2\bs\iota'\bs D_{A_1}\bs D_{A_2}\bs A_3\bs D_{A_4}\bs\iota-2\bs\iota'\bs D_{A_3}\bs D_{A_4}\bs A_1\bs D_{A_2}\bs\iota + 16\tr(\bs D_{A_3}\bs D_{A_4}\bs A_1\bs D_{A_2})\\ 
            &\quad -2\bs\iota'\bs A_1\bs D_{A_2}\odot\bs A_4\odot \bs A_3'\bs\iota-2\bs\iota'\bs A_3\bs D_{A_4}\odot\bs A_2\odot \bs A_1'\bs\iota\\
            & \quad+  \bs\iota'\bs D_{A_4}\bs A_3'\bs A_1\bs D_{A_2}\bs\iota -\tr(\bs D_{A_4}\bs A_3'\bs A_1\bs D_{A_2})-\tr(\bs D_{A_2}\bs A_1'\bs A_3\bs D_{A_4})\\
            &\quad +   \bs\iota'((\bs A_3\odot\bs A_1')\bs A_4)\odot \bs A_2\bs\iota -2\bs\iota'((\bs A_1\odot\bs A_3'\odot\bs I)\bs A_2)\odot \bs A_4\bs\iota\\ 
            &\quad+  \bs\iota'(\bs A_1\odot(\bs A_3(\bs A_2'\odot\bs A_4)))\bs\iota  -2\tr((\bs A_1\odot(\bs A_3(\bs A_2'\odot\bs A_4))))\\ 
            &\quad-2\tr((\bs A_3\odot(\bs A_1(\bs A_4'\odot\bs A_2))))\\
            &\quad + \bs\iota'(\bs A_1\odot \bs A_2')\bs A_4'\bs D_{A_3}\bs\iota -2\bs\iota'(\bs A_1\odot\bs A_2'\odot\bs I)\bs A_4'\bs D_{A_3}\bs\iota \\ 
            &\quad + \bs\iota'(\bs A_3\odot \bs A_4')\bs A_2'\bs D_{A_1}\bs\iota -2\bs\iota'(\bs A_3\odot\bs A_4'\odot\bs I)\bs A_2'\bs D_{A_1}\bs\iota \\ 
            &\quad + \bs\iota'\bs D_{A_1}\bs A_2\bs A_4'\bs D_{A_3}\bs\iota. 
        \end{aligned}$
      \item\label{it:rADrBDrCr} 
    $\begin{aligned}[t]
            &\E[\bs r'\bs A_{1}\bs D_{r}\bs A_{2}\bs A_{3}\bs D_{r}\bs A_{4}\bs r] = \tr(\bs D_{A_{4}A_{1}}\bs D_{A_{2}\bs A_{3}})+\bs\iota'\bs D_{A_{1}}\bs A_{2}\bs A_{3}\bs D_{A_{4}}\bs\iota\\
            &\quad - 2\tr(\bs D_{A_{1}}\bs A_{2}\bs A_{3}\bs D_{A_{4}})+\bs\iota'(\bs A_{4}\odot\bs A_{1}\odot(\bs A_{3}'\bs A_{2}'))\bs\iota.
        \end{aligned}$
        \setcounter{expectRademachers}{\value{enumi}}
\end{enumerate}
Suppose now that $\bs A_1$ and $\bs A_2$ are symmetric matrices with all diagonal elements equal to zero. Then,
\begin{enumerate}
    \setcounter{enumi}{\value{expectRademachers}}
    \item\label{it:rAr2rAr2} \citet{bao2010expectation}:\newline
        $\begin{aligned}[t]
    		&\E[(\bs r'\bs A_1\bs r)^2(\bs r'\bs A_2\bs r)^2]=4\tr(\bs A_1^2)\tr(\bs A_2^2)+8\tr^2(\bs A_1\bs A_2)\\
    		&\quad+32\tr(\bs A_1\bs A_2\bs A_1\bs A_2)+16\tr(\bs A_1\bs A_2\bs A_2\bs A_1)-32\bs\iota'(\bs I\odot\bs A_1^2)(\bs I\odot\bs A_2^2)\bs\iota\\
    		&\quad-64\bs\iota'(\bs I\odot\bs A_1\bs A_2)(\bs I\odot\bs A_1\bs A_2)\bs\iota+32\bs\iota'(\bs A_1\odot\bs A_1\odot\bs A_2\odot\bs A_2)\bs\iota.
    	\end{aligned}$
    \item\label{it:rArrArrAr}
    \citet{ullah2004finite}:\newline
    $\begin{aligned}[t]
    	    \E[\bs r'\bs A_1\bs r\bs r'\bs A_2\bs r\bs r'\bs A_3\bs r]
    		&=\tr(\bs A_3[20(\bs A_1\odot \bs A_2)-3\bs I\odot(2\bs A_1\bs A_2+2\bs A_2\bs A_1)\\
    		&\quad+4\bs A_1\bs A_2+4\bs A_2\bs A_1+2\tr(\bs A_1\bs A_2)\bs I]).
    	\end{aligned}$
\end{enumerate}
\end{lemma}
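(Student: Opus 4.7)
The plan is to exploit the elementary fact that, because the $r_i$ are independent $\pm 1$ variables, any monomial satisfies $\E\big[\prod_{i=1}^{n}r_i^{k_i}\big]=1$ when every $k_i$ is even and $0$ otherwise. Consequently, each identity is obtained by (i) expanding the matrix expression as a multiple sum over free indices, (ii) dropping every term in which some index appears an odd number of times, and (iii) re-expressing the surviving terms as traces and Hadamard products of the $\bs A_j$ using $\bs D_{r}\bs D_{r}=\bs I$ and $(\bs D_r)_{jk}=r_j\delta_{jk}$.

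First I would dispose of the statements that are either immediate or already in the literature: item~\ref{it:rAr} is the definition of the trace once $\E[r_ir_j]=\delta_{ij}$ is applied, while items~\ref{it:rArrAr}, \ref{it:rAr2rAr2} and~\ref{it:rArrArrAr} are stated verbatim in \citet{ullah2004finite} and \citet{bao2010expectation} as indicated, so only a citation is required. Items 3 and 5 each contain four Rademachers, so the only nonzero configurations of $\E[r_{i_1}r_{i_2}r_{i_3}r_{i_4}]$ are (a) all four equal and (b) two disjoint equal pairs. For item 3 the sum $\sum_{i,j,k,l}v_iA_{2,jk}A_{1,kl}\,\E[r_ir_jr_kr_l]$ splits into the three pair patterns $\{i=j,k=l\}$, $\{i=k,j=l\}$, $\{i=l,j=k\}$ plus the all-equal case, and the four stated terms arise after translating each pattern into the appropriate diagonal/off-diagonal contraction. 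Item~5 follows by the same enumeration applied to $\sum_{i,j,k,l}r_ir_jr_kr_l\,A_{1,ij}(A_2A_3)_{jk}A_{4,kl}$.

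Item~\ref{it:rArrArrArrAr} is the intricate one. Expansion gives a sextuple sum $\sum r_{i_1}\cdots r_{i_6}$ whose coefficients are products of entries of $\bs A_1,\dots,\bs A_4$, and the admissible configurations are partitions of the six positions into blocks of even size, namely $(6)$, $(4,2)$, and $(2,2,2)$. For each of the fifteen pairings of type $(2,2,2)$ I would identify which of the four matrices get paired at which of their arguments, translating the resulting contraction into the appropriate trace, Hadamard, or $\bs\iota'(\cdot)\bs\iota$ expression; the $(4,2)$ partitions generate the $\tr(\bs D_{\cdot}\bs D_{\cdot})$ terms, and the unique $(6)$ partition gives the remaining diagonal contributions. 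Because replacing "blocks with distinct values" by "blocks summed over all values" double counts configurations in which two blocks collide into a larger even block, inclusion–exclusion corrections are required, and these produce the $-2\,\tr(\cdots)$ adjustments that shadow the $\bs\iota'(\cdots)\bs\iota$ expressions throughout the stated formula.

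The principal obstacle is the combinatorial bookkeeping in item~\ref{it:rArrArrArrAr}: the fifteen $(2,2,2)$ pairings need to be listed systematically to avoid double counting or omission, and each must be translated into the correct compact notation without sign error. To control this I would (a) organize the pairings by how many of the paired positions lie inside the same half $\bs r'\bs A_1\bs D_r\bs A_2\bs r$ versus across the two halves, (b) exploit the symmetry $(\bs A_1,\bs A_2)\leftrightarrow(\bs A_3,\bs A_4)$ as an internal consistency check on the final expression, and (c) verify the identity numerically on small random $\bs A_j$ before invoking it in the variance expansion of Lemma~\ref{thm:expandvar}.
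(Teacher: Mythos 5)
Your plan is correct and is essentially the route the paper takes: each item is proved by expanding into sums of Rademacher monomials, discarding every configuration in which some index appears an odd number of times, and translating the surviving coincidence patterns into traces and Hadamard products, with items 2, 6 and 7 simply cited from Ullah (2004) and Bao and Ullah (2010) exactly as you propose, and with the all-indices-equal overlap producing the $-2$ corrections you describe for items 3 and 5. The only difference is bookkeeping: the paper centers the quadratic forms with $\bs\Delta=\bs r\bs r'-\bs I$, which splits item 4 into four pieces whose hardest part needs only ten residual coincidence cases and which delivers item 5 for free as the $(I)+(II')$ portion of that computation, whereas you enumerate the $(2,2,2)$, $(4,2)$ and $(6)$ partitions of the six positions directly with inclusion--exclusion — equivalent in substance, though be aware that the $(4,2)$ patterns generate more than just the $\tr(\bs D_{\cdot}\bs D_{\cdot})$ terms, so the translation step requires exactly the care your symmetry and numerical checks are meant to enforce.
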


\begin{proof}
\begin{enumerate}\itemsep0em 
    \item $\E[\bs r'\bs A_1\bs r] = \tr(\bs A_1\E[\bs r\bs r']) = \tr(\bs A_1)$.
    \item See \citet{ullah2004finite}, Appendix A5.
    \item Denote $\bs\Delta = \bs r\bs r'-\bs I$. We split the expectation into two parts,
\begin{align*}
    \E[\bs v'\bs r\bs r'\bs A_1\bs D_{r}\bs A_2\bs r]& = \underbrace{\E[\bs v'\bs A_1\bs D_{r}\bs A_2\bs r]}_{(I)} + \underbrace{\E[\bs v'\bs\Delta \bs A_1\bs D_{r}\bs A_2\bs r]}_{(II)}.
\end{align*}
For the first part, using independence of the Rademacher random variables,
\begin{align*}
    (I)&=\E\left[\sum_{i_{1},i_{2},i_{3}} v_{i_{1}}a_{1,i_{1}i_{2}}a_{2,i_{2}i_{3}}r_{i_{2}}r_{i_{3}}\right]
    = \sum_{i_{1},i_{2}} v_{i_{1}}a_{1,i_{1}i_{2}}a_{2,i_{2}i_{2}}
    = \bs v'\bs A_1\bs D_{A_2}\bs\iota.
\end{align*}
For $(II)$ we write $
    (II) = \E\left[\sum_{i_{1},i_{2},i_{3},i_{4}}v_{i_{1}}\delta_{i_{1}i_{2}}a_{1,i_{2}i_{3}}a_{2,i_{3}i_{4}}r_{i_{3}}r_{i_{4}}\right].$
There are two cases where the expectation is nonzero. In case $(II.a)$, $i_{1}=i_{3},i_{2}=i_{4},i_{1}\neq i_{2}$, and
\begin{align*}
    (II.a) & = \sum_{i_{1}\neq i_{2}}v_{i_{1}}a_{1,i_{2}i_{1}}a_{2,i_{1}i_{2}}
    =\bs\iota \bs A_1\odot \bs A_2'\bs v -\bs \iota'\bs D_{A_1}\bs D_{A_2}\bs v.
\end{align*}
In case $(II.b)$, $i_{1}=i_{4},i_{2}=i_{3},i_{1}\neq i_{2}$, such that
\begin{align*}
    (II.b)  = \sum_{i_{1}\neq i_{2}}v_{i_{1}}a_{1,i_{2}i_{2}}a_{2,i_{2}i_{1}}
    & = \bs\iota'\bs D_{A_1}\bs A_2\bs v-\bs\iota'\bs D_{A_1}\bs D_{A_2}\bs v.
\end{align*}
    \item\label{it:rADrArrADrAr proof}
    We decompose the expectation as
    \begin{align}\label{eq:parts}
    &\E[\bs r'\bs A_{1}\bs D_{r} \bs A_{2}\bs r\bs r'\bs A_{3}\bs D_{r}\bs A_{4}\bs r]
     = \underbrace{\E[\tr(\bs A_{1}\bs D_{r}\bs A_{2}\bs A_{3}\bs D_{r}\bs A_{4})]}_{(I)}\nonumber\\
     &\quad+ \underbrace{\E[\tr(\bs A_1\bs D_{r}\bs A_2(\bs r\bs r'-\bs I)\bs A_3\bs D_{r}\bs A_4(\bs r\bs r'-\bs I))]}_{(III)}\nonumber\\
    &\quad +\underbrace{\E[\tr(\bs A_{1}\bs D_{r}\bs A_2(\bs r\bs r'-\bs I)\bs A_3\bs D_{r}\bs A_4)]}_{(II)}+\underbrace{\E[\tr(\bs A_{1}\bs D_{r}\bs A_2\bs A_3\bs D_{r}\bs A_4(\bs r\bs r'-\bs I))]}_{(II')}.
    \end{align}
    Starting with $(I)$, we have that
    \begin{align*}
    (I)
    &=\E\left[\sum_{i_{1},i_{2},i_{3}}\bs e_{i_{1}}'\bs A_1\bs e_{i_{2}}\bs e_{i_{2}}'\bs D_{r}\bs e_{i_{2}}\bs e_{i_{2}}'\bs A_2\bs A_3\bs e_{i_{3}}\bs e_{i_{3}}'\bs D_{r}\bs e_{i_{3}}\bs e_{i_{3}}'\bs A_4\bs e_{i_{1}}\right]\\
    & = \sum_{i_{1},i_{2}}\bs e_{i_{1}}'\bs A_1\bs e_{i_{2}}\bs e_{i_{2}}'\bs A_2\bs A_3\bs e_{i_{2}}\bs e_{i_{2}}'\bs A_4\bs e_{i_{1}}
     = \tr(\bs D_{A_{4} A_{1}}\bs D_{A_{2}A_{3}}).
    \end{align*}
    For $(II)$, define $\delta_{i_{1}i_{2}} = [\bs r\bs r'-\bs I]_{i_{1}i_{2}}$ and note that $\delta_{i_{1}i_{1}}=0$, and $\E[\delta_{i_{1}i_{2}}]=\E[r_{i_{1}}r_{i_{2}}]=0$ if $i_{1}\neq i_{2}$ and $\E[\delta_{i_{1}i_{2}}^2]=\E[r_{i_{1}}^2r_{i_{2}}^2]=1$.
    \begin{align*}
    (II) & = \E\left[\sum_{i_{1}}\bs e_{i_{1}}'\bs A_1\bs D_{r}\bs A_2(\bs r\bs r'-\bs I)\bs A_3\bs D_{r}\bs A_4\bs e_{i_{1}}\right]\\
    & = \E\left[\sum_{i_{1},i_{2},i_{3},i_{4},i_{5}}\bs e_{i_{1}}'\bs A_1\bs e_{i_{2}}\bs e_{i_{2}}'\bs D_{r}\bs e_{i_{2}}\bs e_{i_{2}}'\bs A_2\bs e_{i_{3}}\delta_{i_{3}i_{4}}\bs e_{i_{4}}'\bs A_3\bs e_{i_{5}}\bs e_{i_{5}}'\bs D_{r}\bs e_{i_{5}}\bs e_{i_{5}}'\bs A_4\bs e_{i_{1}}\right].
    \end{align*}
    There are two cases when the expectation is nonzero: (a) $i_{2}=i_{3},i_{4}=i_{5},i_{2}\neq i_{5}$ and (b) $i_{2}=i_{4},i_{3}=i_{5},i_{2}\neq i_{5}$. Starting with case (a),
    \begin{align*}
    (II.a)&=\sum_{i_{1}}\sum_{i_{2}\neq i_{5}}\bs e_{i_{1}}'\bs A_1\bs e_{i_{2}}\bs e_{i_{2}}'\bs A_2\bs e_{i_{2}}\bs e_{i_{5}}'\bs A_3\bs e_{i_{5}}\bs e_{i_{5}}'\bs A_4\bs e_{i_{1}}\\
    &=\bs\iota'\bs D_{A_3}\bs A_4\bs A_1\bs D_{A_2}\bs\iota \underbrace{- \tr\left(\bs D_{A_3}\bs A_4\bs A_1\bs D_{A_2}\right)}_{(II.a.2)}.
    \end{align*}
    For case (b), we have
    \begin{align*}
    (II.b)&=\sum_{i_{1}}\sum_{i_{2}\neq i_{5}}\bs e_{i_{2}}'\bs A_1\bs e_{i_{2}}\bs e_{i_{2}}'\bs A_2\bs e_{i_{5}}\bs e_{i_{5}}'\bs A_3\bs e_{i_{5}}\bs e_{i_{5}}'\bs A_4\bs e_{i_{1}}
      = \bs\iota'\bs A_2\odot\bs A_3\odot(\bs A_1'\bs A_4')\bs\iota + (II.a.2).
    \end{align*}
    By rotation invariance, the expressions for $(II')$ can be obtained by changing $\bs A_{2}\rightarrow \bs A_{4}$, $\bs A_{3}\rightarrow \bs A_{1}$, $\bs A_{4}\rightarrow \bs A_{2}$, $\bs A_{1}\rightarrow \bs A_{3}$.
    
    The most difficult term to deal with in \eqref{eq:parts} is 
    \begin{equation}
    (III)= \E\left[\sum_{i_{1},\ldots,i_{6}}a_{1,i_{1}i_{2}}a_{2,i_{2}i_{3}}a_{3,i_{4}i_{5}}a_{4,i_{5}i_{6}}r_{i_{2}}r_{i_{5}}\delta_{i_{3}i_{4}}\delta_{i_{6}i_{1}}\right].
    \end{equation}
    There are now 10 cases to consider, which we label $(III.a)$--$(III.j)$. All of them satisfy $i_{3}\neq i_{4}$, $i_{6}\neq i_{1}$
    \begin{center}
    \begin{tabular}{llllllllll}
    	\hline
    a.&	$i_{2}=i_{5}$& $i_{3}=i_{6}$& $i_{4}=i_{1}$& $i_{3}\neq i_{1}$&f.&	          & 	  & $i_{3}=i_{1}$ & $i_{5}=i_{6}$ \\
    b.&	     & $i_{3}=i_{1}$& $i_{4}=i_{6}$& $i\neq i_{4}$&g.&	          & $i_{2}=i_{6}$ & $i_{3}=i_{5}$ & $i_{4}=i_{1}$ \\
    c.&	$i_{2}\neq i_{5}$ & $i_{2}=i_{3}$ & $i_{4}=i_{6}$ & $i_{5}=i_{1}$ &h.&	          &       & $i_{3}=i$ & $i_{5}=i_{4}$ \\
    d.&	          &       & $i_{4}=i_{1}$ & $i_{5}=i_{6}$ &i.&	          & $i_{2}=i_{1}$ & $i_{3}=i_{6}$ & $i_{4}=i_{5}$ \\
    e.&	          & $i_{2}=i_{4}$ & $i_{3}=i_{6}$ & $i_{5}=i_{1}$ &j.&	          &       & $i_{3}=i_{5}$ & $i_{4}=i_{6}$\\
    \hline
    \end{tabular}
    \end{center}
    We work out $(III.a)$--$(III.c)$ explicitly. The remaining cases follow by analogous calculations.
    \begin{align*}
    (III.a) & = \sum_{i_{1},i_{2},i_{3}\neq i_{1}}a_{1,i_{1}i_{2}}a_{2,i_{2}i_{3}}a_{3,i_{1}i_{2}}a_{4,i_{2}i_{3}}\\
    & = \sum_{i_{1},i_{2},i_{3}\neq i_{1}}\bs e_{i_{1}}'(\bs A_1\odot\bs A_3)\bs e_{i_{2}}\bs e_{i_{2}}'
    (\bs A_2\odot\bs A_4)\bs e_{i_{3}}\\
    &=\bs\iota'(\bs A_1\odot \bs A_3)(\bs A_2\odot \bs A_4)\bs\iota \underbrace{- \tr((\bs A_1\odot\bs A_3)(\bs A_2\odot\bs A_4))}_{(III.a.2)}.\\
    (III.b) & = \sum_{i_{1},i_{2},i_{4}\neq i_{1}}a_{1,i_{1}i_{2}}a_{2,i_{2}i_{1}}a_{3,i_{4}i_{2}}a_{4,i_{2}i_{4}} = \tr(\bs A_1'\bs A_2'\odot \bs A_3'\bs A_4')+(III.a.2).\\
    (III.c) & = \sum_{i_{1},i_{2}\neq \{i_{1},i_{4}\},i_{4}\neq \{i_{1},i_{2}\}}a_{1,i_{1}i_{2}}a_{2,i_{2}i_{2}}a_{4,i_{1}i_{4}}a_{3,i_{4}i_{1}}\\
    &=\sum_{i_{1},i_{2}\neq i_{1}}\bs e_{i_{1}}\bs D_{A_4A_3}\bs A_1\bs D_{A_2}\bs e_{i_{2}} -\bs e_{i_{1}}'\bs D_{A_3}\bs D_{A_4}\bs A_1\bs D_{A_2}\bs e_{i_{2}}\\
    &\quad -\bs e_{i_{1}}'(\bs A_1\bs D_{A_2})\odot \bs A_4\odot \bs A_1'\bs e_{i_{2}}\\
    & = \bs \iota'\bs D_{A_4A_3}\bs A_1\bs D_{A_2}\bs\iota - \tr(\bs D_{A_4A_3}\bs A_1\bs D_{A_2}) - \bs\iota'\bs D_{A_3}\bs D_{A_4}\bs A_1\bs D_{A_2}\bs\iota \\
    & \quad+2\tr(\bs D_{A_3}\bs D_{A_4}\bs A_1\bs D_{A_2}) - \bs\iota'(\bs A_1\bs D_{A_2})\odot \bs A_4\odot \bs A_3'\bs\iota.
\end{align*}
    
    There are many repeated elements in the expressions for $(III.d)$--$(III.j)$. We introduce the following notation
    
    \begin{equation*}
        \begin{array}{rlrl}
    	(c.1) & = \bs\iota'\bs D_{A_4A_3}\bs A_1\bs D_{A_2}\bs\iota,&
    	(c.2) & = -\tr(\bs D_{A_4A_3}\bs A_1\bs D_{A_2}),\\
    	(c.3) & = -\bs\iota'\bs D_{A_3}\bs D_{A_4}\bs A_1\bs D_{A_2}\bs\iota,&
    	(c.4) & = \tr(\bs D_{A_3}\bs D_{A_4}\bs A_1\bs D_{A_2}),\\
    	(c.5) & = -\bs\iota'(\bs A_1\bs D_{A_2})\odot \bs A_4\odot \bs A_3'\bs\iota,&
    	(d.1) & =  \bs\iota'\bs D_{A_4}\bs A_3'\bs A_1(\bs I\odot\bs A_2)\bs\iota, \\
    	(d.2) & = -\tr(\bs D_{A_4}\bs A_3'\bs A_1\bs D_{A_2}),&
    	(e.1) & = \bs\iota'((\bs A_3\odot\bs A_1')\bs A_4)\odot \bs A_2\bs\iota, \\
    	(e.3) & = -\bs\iota'((\bs A_1 \odot \bs A_3'\odot \bs I)\bs A_2)\odot \bs A_4\bs\iota,&
    	(g.1) & = \bs\iota'(\bs A_1\odot(\bs A_3(\bs A_2'\odot\bs A_4)))\bs\iota,\\
    	(g.2) & = -\tr((\bs A_1\odot (\bs A_3(\bs A_2'\odot \bs A_4)))),&
    	(h.1) & = \bs\iota'(\bs A_1\odot \bs A_2')\bs A_4'\bs D_{A_3}\bs\iota,\\
    	(h.5) & = -\bs\iota'((\bs A_1\odot \bs A_2')\odot \bs I)\bs 'A_4(\bs A_3'\odot \bs I)\bs\iota,&
    	(i.1) & = \bs\iota'\bs D_{A_1}\bs A_2\bs A_4'\bs D_{A_3}\bs\iota.
    	\end{array}
    \end{equation*}
    
    Furthermore, let any of these with a asterisk denote the same term but with $\bs A_{2}\rightarrow\bs A_{4}$, $\bs A_{3}\rightarrow \bs A_{1}$, $\bs A_{4}\rightarrow \bs A_{2}$, $\bs A_{1}\rightarrow \bs A_{3}$. Then
    \begin{equation}\begin{split}
    (III.c) & = (c.1)+(c.2)+ (c.3) + (c.4)+ (c.5)+  (c.4),\\
    (III.d) & = (d.1) +(d.2)+ (c.3)^* +(c.4)+ (c.3)+(c.4),\\
    (III.e) & = (e.1) +(c.5)^*+ (e.3)+(c.4)+ (c.5)^* +(c.4),\\
    (III.f) & = (c.1)^* + (c.5)^*+ (c.3)^* +(c.4)+ (c.2)^* +(c.4),\\
    (III.g) & = (g.1) +(g.2)^*+ (g.2) +(c.4)+ (d.2)^*  +(c.4),\\
    (III.h) & = (h.1) + (g.2)^*+ (c.2)^*  +(c.4)+ (h.5) + (c.4),\\
    (III.i) & = (i.1) +(e.3)+ (h.5)^* +(c.4)+ (h.5) +(c.4),\\
    (III.j) & = (h.1)^* + (g.2)+  (h.5)^* + (c.4)+ (c.2) + (c.4).
    \end{split}\end{equation}
    Putting everything together, we obtain the desired result.
    \item Can be obtained from Item \ref{it:rADrArrADrAr proof} by only considering the terms $(I)$ and $(II)'$ in the proof.
    \item See \citet{bao2010expectation}, Theorem 2.
    \item See \citet{ullah2004finite}, Appendix A5.
\end{enumerate}
\end{proof}

\subsection{Proof of Lemma \texorpdfstring{\ref{lem:AR score}}{4}}\label{sapp:proof AR score}
\begin{proof} The proof of Lemma \ref{lem:AR score} is similar to the proof of Lemma A2 in \citet{chao2012asymptotic}.
\subsubsection{Rewriting the statistic}\label{sapp:rewrite}
First we rewrite the $\AR$ statistic. In Section \ref{sec:group} we showed that $\frac{1}{\sqrt{k}}(\AR(\bs\beta_{0})-k)\overset{(d)}{=}\frac{1}{\sqrt{k}}(\AR_r(\bs\beta_{0})-k)$. Then defining
\begin{align*}
    w_{1n,\AR} &= \frac{2}{\sqrt{k}}P_{12},\quad
    y_{in,\AR} = \frac{2}{\sqrt{k}}\left[\sum_{j<i}P_{ij}r_{j}\right]\cdot r_{i},
\end{align*}
we have $\frac{1}{\sqrt{k}}(\AR_r(\bs\beta_{0})-k)=w_{1n,\AR}+\sum^n_{i=3}y_{in,\AR}$.

Next, we consider the score. We rewrite the first order conditions as
\begin{align*}
	\frac{\partial Q(\bs\beta)}{\partial \beta_{l}}\bigg|_{\bs\beta=\bs\beta_{0}}& = -\frac{1}{n}\bs x_{(l)}'(\bs I-\bs D_{P\iota})\bs V\bs\varepsilon\\
	&=-\frac{1}{n}\bigg[\bar{\bs x}_{(l)}'(\bs I-\bs D_{P\iota})\bs V\bs\varepsilon+ \bs\varepsilon'\bs D_{a_{(l)}}(\bs I-\bs D_{P\iota})\bs V\bs\varepsilon\bigg]\\
	&\overset{(d)}{=}-\frac{1}{n}\bigg[\bar{\bs x}_{(l)}'\bs V\bs D_{\varepsilon}\bs r-\bs r'\bs P\bs D_{r}\bs D_{\bar{x}_{(l)}}\bs V\bs D_{\varepsilon}\bs r+\bs r'\bs D_{a_{(l)}}\bs P\bs r- \bs r'\bs P\bs D_{a_{(l)}}\bs P\bs r\bigg]\\
	&=-\frac{1}{n}\bigg[\bar{\bs x}_{(l)}'\bs V\bs D_{\varepsilon}\bs r + \bs r'\bs \Psi^{(l)}\bs r - \bs r'\bs P\bs D_{r}\bs D_{\bar{x}_{(l)}}\bs V\bs D_{\varepsilon}\bs r\bigg],
\end{align*}
where
\begin{equation}\label{eq:defPsihandPhih}
    \bs\Psi^{(l)}\equiv \bs M\bs D_{a_{(l)}}\bs P,\quad \bs\Phi^{(l)}\equiv \bs D_{\bar{x}_{(l)}}\bs V\bs D_{\varepsilon}.
\end{equation}
We rewrite the final term as
\begin{align*}
	\bs r'\bs P\bs D_{r}\bs \Phi^{(l)}\bs r &= \tr(\bs P\bs D_{r}\bs\Phi^{(l)}) + \tr(\bs P\bs D_{r}\bs\Phi^{(l)} \bs \Delta)\qquad \bs\Delta\equiv \bs r\bs r'-\bs I_n\\
	&=\tr(\bs\Phi^{(l)}\bs D_{r}) + \tr(\bs P\bs D_{r}\bs \Phi^{(l)}\bs \Delta)\\
	&=\bar{\bs x}_{(l)}'\bs D_{V}\bs D_{\varepsilon}\bs r + \sum_{\substack{i_{1},i_{2},i_{3}\\ i_{1}\neq i_{3}}}P_{i_{1}i_{2}}\Phi_{i_{2}i_{3}}^{(l)}r_{i_{1}}r_{i_{2}}r_{i_{3}}\\
	&=\bar{\bs x}_{(l)}'\bs D_{V}\bs D_{\varepsilon}\bs r + \sum_{\substack{i_{1}, i_{2}\neq i_{1},\\ i_{3}\neq i_{1}}}P_{i_{1}i_{2}}\Phi_{i_{2}i_{3}}^{(l)}r_{i_{1}}r_{i_{2}}r_{i_{3}} +\sum_{i_{1}\neq i_{3}}P_{i_{1}i_{1}}\Phi_{i_{1}i_{3}}^{(l)}r_{i_{3}}\\
	&=\bar{\bs x}_{(l)}'\bs D_{V}\bs D_{\varepsilon}\bs r + \sum_{\substack{i_{1}, i_{2}\neq i_{1}, \\i_{3}\neq \{i_{1},i_{2}\}}}P_{i_{1}i_{2}}\Phi_{i_{2}i_{3}}^{(l)}r_{i_{1}}r_{i_{2}}r_{i_{3}} +\sum_{i_{1}\neq i_{3}}P_{i_{1}i_{1}}\Phi_{i_{1}i_{3}}^{(l)}r_{i_{3}}+\sum_{i_{1}\neq i_{2}}P_{i_{1}i_{2}}\Phi_{i_{2}i_{2}}^{(l)}r_{i_{1}}\\
		&=\bar{\bs x}_{(l)}'\bs D_{V}\bs D_{\varepsilon}\bs r + \sum_{\substack{i_{1}, i_{2}\neq i_{1},\\ i_{3}\neq \{i_{1},i_{2}\}}}P_{i_{1}i_{2}}\Phi_{i_{1}i_{3}}^{(l)}r_{i_{1}}r_{i_{2}}r_{i_{3}} +\sum_{i_{1}\neq i_{3}}P_{i_{1}i_{1}}\Phi_{i_{1}i_{3}}^{(l)}r_{i_{3}}+\sum_{i_{1}\neq i_{2}}P_{i_{1}i_{1}}\Phi_{i_{1}i_{2}}^{(l)}r_{i_{1}}\\
	&=\bar{\bs x}_{(l)}'\bs D_{V}\bs D_{\varepsilon}\bs r +2\sum_{i_{2}\neq i_{1}}\Phi_{i_{2}i_{1}}^{(l)}P_{i_{2}i_{2}} r_{i_{1}}  + \sum_{\substack{i_{1}, i_{2}\neq i_{1},\\ i_{3}\neq \{i_{1},i_{2}\}}}P_{i_{1}i_{2}}\Phi_{i_{2}i_{3}}^{(l)}r_{i_{1}}r_{i_{2}}r_{i_{3}}.
\end{align*}
Notice that $\bs\Phi^{(l)}\bs P=\bs \Phi^{(l)}$ and hence $\bs\Phi^{(l)}(\bs\Psi^{(l)})'=\bs \Phi^{(l)}\bs D_{a_{(l)}}\bs M$. Furthermore, $\tr(\bs \Psi^{(l)})=0$. We conclude that,
\begin{align*}
		\sqrt{n}\frac{\partial Q(\bs\beta)}{\partial \beta_{l}}&\overset{(d)}{=}\frac{1}{\sqrt{n}}\bigg(-\sum_{i_{2}\neq i_{1}}\Phi_{i_{2}i_{1}}^{(l)}(1-2P_{i_{2}i_{2}})r_{i_{1}} - \sum_{i_{2}\neq i_{1}}\Psi_{i_{2}i_{1}}^{(l)}r_{i_{2}}r_{i_{1}}+\sum_{\substack{i_{1}, i_{2}\neq i_{1}, \\i_{3}\neq \{i_{1},i_{2}\}}}P_{i_{1}i_{2}}\Phi_{i_{2}i_{3}}^{(l)}r_{i_{1}}r_{i_{2}}r_{i_{3}}\bigg)\\
		&=w_{1n,S}^{(l)}+  \sum_{i_{1}=3}^{n}y_{i_{1}n,S}^{(l)},
\end{align*}
where we defined
\begin{align*}
		w_{1n,S}^{(l)}& = -\frac{1}{\sqrt{n}}\sum_{i_{1}\neq 1}\Phi_{i_{1}1}^{(l)}(1-2P_{i_{1}i_{1}})r_{1} -\frac{1}{\sqrt{n}}\sum_{i_{1}\neq 2}\Phi_{i_{1}2}^{(l)}(1-2P_{i_{1}i_{1}})r_{2}  - \frac{1}{\sqrt{n}}\Psi_{[21]}^{(l)}r_{2}r_{1}, \\
		y_{i_{1}n,S}^{(l)}&=\bigg[-\frac{1}{\sqrt{n}}\sum_{i_{2}\neq i_{1}}\Phi_{i_{2}i_{1}}^{(l)}(1-2P_{i_{2}i_{2}})- \frac{1}{\sqrt{n}}\sum_{i_{2}<i_{1}}\Psi_{[i_{1}i_{2}]}^{(l)}r_{i_{2}} +\frac{1}{\sqrt{n}}\sum_{i_{3}<i_{2}<i_{1}}A_{[i_{1}i_{2}i_{3}]}^{(l)}r_{i_{2}}r_{i_{3}}\bigg]\cdot r_{i_{1}},\\
		\Psi_{[i_{1}i_{2}]}^{(l)} &= \Psi_{i_{1}i_{2}}^{(l)}+\Psi_{i_{2}i_{1}}^{(l)},\\
		A_{[i_{1}i_{2}i_{3}]}^{(l)}& = A_{i_{1}i_{2}i_{3}}^{(l)}+A_{i_{1}i_{3}i_{2}}^{(l)}+A_{i_{2}i_{1}i_{3}}^{(l)}+A_{i_{2}i_{3}i_{1}}^{(l)}+A_{i_{3}i_{1}i_{2}}^{(l)}+A_{i_{3}i_{2}i_{1}}^{(l)},\quad A_{i_{1}i_{2}i_{3}}^{(l)} = P_{i_{1}i_{2}}\Phi_{i_{2}i_{3}}^{(l)}
\end{align*}

We have now shown the following distributional equivalence,
\begin{align*}
		\bs Y_n=\begin{pmatrix}
			\frac{1}{\sqrt{k}}(\AR(\bs\beta)-k)\\
			\sqrt{n}\cdot \bs S(\bs\beta)'
			\end{pmatrix} \overset{(d)}{=}\bs Y_{nr}= \begin{pmatrix}
				w_{1n,\AR} \\
				\bs w_{1n,S}
				\end{pmatrix} + \sum_{i=3}^{n}\begin{pmatrix}
				y_{in,\AR}\\
				\bs y_{in,S}
				\end{pmatrix}.
\end{align*}

\subsubsection{Conditional distribution of \texorpdfstring{$\bs t'\bs\Sigma_n^{-1/2}\bs Y_{nr}$}{tSigmaY}}\label{sapp:conditional distribution}
To use the Cramér-Wold theorem in Supplementary Appendix \ref{sapp:Cramer-Wold} we need to show that for any $\bs t\in\mathbb{R}^{p+1}$ $\bs t'\bs\Sigma_{n}^{-1/2}\bs Y_{nr}\rightarrow_d\bs t'\bs Z$. When $\bs t=\bs 0$ the condition is trivially satisfied. Therefore, let $\bs t\in\mathbb{R}^{p+1}\setminus\bs 0$ and write $\bs t=C\bs\alpha(\bs\alpha'\bs\alpha)^{-1/2}$ for $\bs\alpha\in\mathbb{R}^{p+1}\setminus\bs 0$. Consider $(\bs\alpha'\bs\alpha)^{-1/2}\bs\alpha'\bs\Sigma_n^{-1/2}\bs Y_{nr}$ and define $\Xi_{n}=\var(\bs\alpha'\bs\Sigma_{n}^{-1/2}\bs Y_{nr}|\mathcal{J})$. Then $(\bs \alpha'\bs\alpha)^{-1/2}\bs\alpha '\bs \Sigma^{-1/2}_{n}\bs Y_{nr} = w_{1n}+ \sum_{i=3}^{n}y_{in}$, where we define
\begin{align}\label{eq:w y}
		w_{1n}&=\Xi_{n}^{-1/2}\left[c_{1n}w_{1n,\AR} + \bs c_{2n}'\bs w_{1n,S}\right],\nonumber\\
		y_{i_{1} n}& = \Xi_{n}^{-1/2}\left[-\frac{1}{\sqrt{n}}\sum_{i_{2}\neq i_{1}}\bs c_{2n}'\bs\phi_{i_{2}i_{1}}(1-2P_{i_{1}i_{1}})\right. - \frac{1}{\sqrt{n}}\sum_{i_{2}<i_{1}}(\bs c_{2n}'\bs\psi_{[i_{1}i_{2}]}- 2c_{1n}\gamma_nP_{i_{1}i_{2}})r_{i_{2}}\nonumber\\
		&\quad  + \left.\frac{1}{\sqrt{n}}\sum_{i_{3}<i_{2}<i_{1}}\bs c_{2n}'\bs a_{[i_{1}i_{2}i_{3}]}r_{i_{3}}r_{i_{2}}\right]\cdot r_{i_{1}},
\end{align}
where $\bs c_{n} = (c_{1n},\bs c_{2n}')'=\bs\Sigma^{-1/2}_{n}\bs\alpha$ for $0<\bs \alpha'\bs \alpha\leq C$, $\bs \phi_{i_{2}i_{1}} = (\Phi_{i_{2}i_{1}}^{(1)},\ldots,\Phi_{i_{2}i_{1}}^{(p)})'$, $\bs \psi_{[i_{2}i_{1}]}= (\Psi_{[i_{2}i_{1}]}^{(1)},\ldots,\Psi_{[i_{2}i_{1}]}^{(p)})'$, $\bs a_{[i_{1}i_{2}i_{3}]} = (A_{[i_{1}i_{2}i_{3}]}^{(1)},\ldots,A_{[i_{1}i_{2}i_{3}]}^{(p)})$ and $\gamma_n=\sqrt{n/k}$. Notice that $\bs c_{n}'\bs c_{n}\leq C$, which implies $c_{1n}^2\leq C$ and $\bs c_{2n}'\bs c_{2n}\leq C$.

For later purposes, it will be useful to write the bracketed term in $y_{in}$ in matrix notation. Define $\bs S_{i-1}$ as the $n\times n$ matrix with in the left-upper $i-1\times i-1$ block the identity matrix and zeroes elsewhere. Define
\begin{equation}\label{eq:defPsiandPhi}
    \bs\Psi=\bs M\bs D_{\sum_{l=1}^{p}c_{2n,l}a_{(l)}}\bs P,\quad \bs\Phi=\bs D_{\sum_{l=1}^{p}c_{2n,l}\bar{x}_{(l)}}\bs V\bs D_{\varepsilon},
\end{equation}
then we can write
\begin{align}\label{eq:A short}
		y_{in} &= \Xi_{n}^{-1/2}\bigg\{-\frac{1}{\sqrt{n}}\bs c_{2n}'\bar{\bs X}'(\bs I_{n}-2\bs D_{P})\dot{\bs V}\bs D_{\varepsilon}\bs e_{i}\nonumber\\
		&\quad - \frac{1}{\sqrt{n}}\bs r'\bs S_{i-1}\left[\left(\bs \Psi+\bs\Psi'-2\bs D_{\Psi}\right) - 2c_{1n}\gamma_n\dot{\bs P}\right]\bs e_{i}  + \frac{1}{\sqrt{n}}\bs r'\bs A_{-i}\bs r\bigg\}\cdot r_{i}\nonumber,\\
		\bs A_{-i}&=\bs S_{i-1}\bs A_{i}\bs S_{i-1}=\bs S_{i-1}[\dot{\bs P}\bs D_{\Phi e_{i}} + \bs D_{\Phi e_{i}}\dot{\bs P}+ \bs P\bs e_{i}\bs e_{i}'\bs\Phi-\bs D_{P e_{i}}\bs D_{e_{i}'\Phi}]\bs S_{i-1}.
\end{align}

We will now show that $(\bs\alpha'\bs\alpha)^{-1/2}\bs\alpha'\bs\Sigma^{-1/2}\bs Y_n$ converges to a standard normally distributed random variable. As in \citet{chao2012asymptotic} we first show that $w_{1n}=o_p(1)$ such that we can focus on $\sum_{i=3}^ny_{in}$. Next, we check conditions of the martingale difference array CLT.

\paragraph{\texorpdfstring{$w_{1n}=o_p(1)$}{w1n=op1} unconditionally}
Consider $w_{1n}$ as defined in \eqref{eq:w y}. Following \citet{chao2012asymptotic}, we show that $w_{1n}=o_{p}(1)$ by showing that $\E[\|w_{1n}\|^4|\mathcal{J}]\rightarrow_{a.s.}0$. To bound the terms from $\bs w_{1n,S}$ we need the following three bounds. First, using Assumption \ref{ass:eigbound}, we have that
\begin{align}\label{eq:pi z v bounded}
		\frac{1}{n^2}\sum_{i=1}^{n}\|\bar{\bs Z}'\bs V\bs e_{i}\varepsilon_{i}\|^4
		&\leq \frac{1}{n^2}\max_{i=1,\ldots,n}\|\bar{\bs Z}'\bs V\bs e_{i}
		\varepsilon_{i}\|^2\sum_{i=1}^n\|\bar{\bs Z}'\bs V\bs e_{i}
		\varepsilon_{i}\|^2\nonumber\\
		&\leq\frac{o_{a.s.}(1)}{n}\sum_{l=1}^{p}\sum_{i=1}^n(\bs e_{l}'\bar{\bs Z}'\bs V\bs e_{i}
		\varepsilon_{i})^2\nonumber\\
		&= \frac{o_{a.s.}(1)}{n}\sum_{l=1}^{p}\bs e_{l}'\bar{\bs Z}'\bs V
		\bar{\bs Z}\bs e_{l}\nonumber\\
		&\leq \frac{o_{a.s.}(1)}{n}\lambdamax(\bs V
		)\sum_{l=1}^{p}\bs e_{l}'\bar{\bs Z}'\bar{\bs Z}\bs e_{l}\rightarrow_{a.s.}0,
\end{align}
by Assumption \ref{ass:eigbound} and where $o_{a.s.}(1)$ is a term converging to zero $a.s.$

Second, under Assumption \ref{ass:eigbound} and by the finite fourth moment of the elements of $\bs U$ following from Assumption \ref{ass:model}, we have
\begin{align}\label{eq:sum phi bounded}
		\E\bigg[\frac{1}{n^2}\sum_{i_{1},i_{2}}\Vert\phi_{i_{1}i_{2}}\Vert^4\bigg|\mathcal{J}\bigg]
		&=\E\bigg[\frac{1}{n^2}\sum_{i_{1},i_{2}}\Vert\bar{\bs X}'\bs e_{i_{1}} V_{i_{1}i_{2}} \varepsilon_{i_{2}}\Vert^4\bigg|\mathcal{J}\bigg]\nonumber\\
	&=\E\bigg[\frac{1}{n^2}\sum_{i_{1},i_{2}}\Vert\bar{\bs X}'\bs e_{i_{1}}\Vert^4(\bs e_{i_{1}}'\bs V\bs D_{\varepsilon}\bs e_{i_{2}}\bs e_{i_{2}}'\bs D_{\varepsilon}\bs V\bs e_{i_{1}})^2\bigg|\mathcal{J}\bigg]\nonumber\\
		&\leq\E[\frac{1}{n^2}\sum_{i_{1},i_{2}}\Vert\bar{\bs X}'\bs e_{i_{1}}\Vert^4(\bs e_{i_{1}}'\bs V\bs D_{\varepsilon}^2\bs V\bs e_{i_{1}}(\bs e_{i_{1}}'\bs V\bs D_{\varepsilon}\bs e_{i_{2}}\bs e_{i_{2}}'\bs D_{\varepsilon}\bs V\bs e_{i_{1}})|\mathcal{J}]\nonumber\\
		&= \E\bigg[\frac{1}{n^2}\sum_{i_{1}}\Vert\bar{\bs X}'\bs e_{i_{1}}\Vert^4V_{i_{1}i_{1}}^2\bigg|\mathcal{J}\bigg]\nonumber\\
		&\leq\E\bigg[\frac{C}{n^2}\sum_{i_{1}}\Vert\bar{\bs X}'\bs e_{i_{1}}\Vert^4\bigg|\mathcal{J}\bigg]\nonumber\\
		&\leq\frac{C}{n^2}\sum_{i_{1}}\bigg(\Vert\bar{\bs Z}'\bs e_{i_{1}}\Vert^4+\E[\Vert\bs U'\bs e_{i_{1}}\Vert^4|\mathcal{J}]\bigg)\rightarrow_{a.s.}0.
\end{align}

Third, as the rows of $\bs U$ are independent and by Theorem 2 in \citet{whittle1960bounds},
\begin{align}\label{eq:UVDe4}
        \frac{1}{n^2}\sum_{i=1}^n\E[\Vert\bs U'\bs V\bs D_{\varepsilon}\bs e_i\Vert^4|\mathcal{J}]   &\leq\frac{C}{n^2}\sum_{l=1}^{p}\sum_{i=1}^n\E[(\bs u_{(l)}'\bs V\bs D_{\varepsilon}\bs e_i)^4|\mathcal{J}]\nonumber\\
        &\leq \frac{C}{n^2}\sum_{l=1}^{p}\max_{i}\E[u_{(l),i}^{4}|\mathcal{J}]\sum_{i=1}^{n}(\bs e_{i}'\bs D_{\varepsilon}\bs V^2\bs D_{\varepsilon}\bs e_{i})^2\nonumber\\
        &\leq \frac{C}{n^2}\sum_{l=1}^{p}\max_{i}\E[u_{(l),i}^{4}|\mathcal{J}]\tr(\bs V^2)\rightarrow_{a.s.}0.
\end{align}

Now we have that, since $\bs c_{2n}'\bs c_{2n}\leq C$ and using the definition of $\bs w_{1n,S}$, 
\begin{align*}
        &\E\left[\Vert\bs c_{2n}'\bs w_{1n,S}\Vert^4\bigg|\mathcal{J}\right]\leq C\cdot \E\left[\Vert\bs w_{1n,S}\Vert^4\bigg|\mathcal{J}\right]\\
	    &= C\E\left[\left. \Vert\frac{-1}{\sqrt{n}}\sum_{i\neq 1}\bs\phi_{i1}(1-2P_{ii})r_{1} -\frac{1}{\sqrt{n}}\sum_{i\neq 2}\bs\phi_{i2}(1-2P_{ii})r_{2}  - \frac{1}{\sqrt{n}}\bs\psi_{[21]}r_{2}r_{1}\Vert^4 \right|\mathcal{J} \right]\\
	    &\leq\frac{C}{n^2}\E\left[ \Vert\sum_{i\neq 1}\bs\phi_{i1}(1-2P_{ii})\Vert^4+\Vert\sum_{i\neq 2}\bs\phi_{i2}(1-2P_{ii})\Vert^4+ \Vert\bs\psi_{[21]}\Vert^4\bigg|\mathcal{J}\right]\\
	    &\leq\frac{C}{n^2}\E\bigg[\Vert\bar{\bs Z}'\bs V\bs e_{1}
	    \varepsilon_{1}\Vert^4 + \Vert\bs U'\bs V\bs e_{1}
	    \varepsilon_{1}\Vert^4+\Vert\bs\phi_{11}(1-2P_{11})\Vert^4\\
	    &\quad+\Vert\bar{\bs Z}'\bs V\bs e_{2}
	    \varepsilon_{2}\Vert^4 + \Vert\bs U'\bs V\bs e_{2}
	    \varepsilon_{2}\Vert^4+\Vert\bs\phi_{22}(1-2P_{22})\Vert^4+p\max_{l=1,\ldots,p}(\Psi_{[21]}^{(l)})^4\bigg|\mathcal{J}\bigg]\\
	    &\leq\frac{C}{n^2}\E\left[\sum_{i}\Vert\bar{\bs Z}'\bs V\bs e_{i}
	    \varepsilon_{i}\Vert^4 + \Vert\bs U'\bs V\bs e_{i}	    \varepsilon_{i}\Vert^4+\Vert\bs\phi_{11}\Vert^4+\Vert\bs\phi_{22}\Vert^4+C\bigg|\mathcal{J}\right]\rightarrow_{a.s.}0,
\end{align*}
where for the final line we use \eqref{eq:pi z v bounded}, \eqref{eq:sum phi bounded}, \eqref{eq:UVDe4}, Assumption \ref{ass:eigbound} and that 
\begin{equation}\label{eq:Psi bound}
   |\Psi_{i_{1}i_{2}}^{(l)}|= |\bs e_{i_{1}}'\bs M\bs D_{a_{(l)}}\bs P\bs e_{i_{2}}|\leq [\bs e_{i_{1}}\bs M\bs e_{i_{1}}\bs e_{i_{2}}'\bs P\bs D_{a_{(l)}^2}\bs P\bs e_{i_{2}}]^{1/2}\leq \max_{i=1,\ldots,n}|a_{(l),i}|\leq C \quad a.s.n.
\end{equation}
with the second inequality by $P_{ii}<1$ $a.s.n.$ by Assumption~\ref{ass:eigbound}.

For the part of $w_{1n}$ due to the AR statistic, we have
\begin{align}
		\E[\|c_{1n}w_{1n,\AR}\|^4|\mathcal{J}]&=\frac{16\cdot c_{1n}^4}{k^2}P_{12}^4
		\leq \frac{C}{k^2}\bigg(\sum_{i=1}^nP_{1i}^2\bigg)^2\leq\frac{C}{k^2}P_{11}^2
		\rightarrow_{a.s.}0.
\end{align}

As in the proof of Lemma A2 in \citet{chao2012asymptotic}, the above results imply that $w_{1n}=c_{1n}w_{1n,\AR} + \bs c_{2n}'\bs w_{1n,S}\rightarrow_{p} 0$ unconditionally, and hence $(\bs\alpha'\bs\alpha)^{-1/2}\bs\alpha'\bs\Sigma_{n}^{-1/2}\bs Y_{n} = \sum_{i=3}^{n}y_{in} + o_{p}(1).$

\paragraph{Martingale difference array}
Define the $\sigma$-fields $\mathcal{F}_{i,n} = \sigma(r_{1},\ldots,r_{i})$ such that $\mathcal{F}_{i-1,n}\subset \mathcal{F}_{i,n}$. We have $\E[y_{in}|\mathcal{J}, \mathcal{F}_{i-1,n}] = 0$, due to the $r_i$ that multiplies all the terms. Hence, conditional on $\mathcal{J}$, $\{y_{in},\mathcal{F}_{i,n},1\leq i\leq n,n\geq 3\}$ is a martingale difference array.

\paragraph{Variance bounded away from zero}\label{subsubsec:varianceboundedawayfromzero}
For our statistic to be well defined we require the existence of $\bs\Sigma^{-1}_n$ almost surely. We start by considering a quadratic form of $\bs\Omega$ defined in \eqref{eq:var} in Lemma \ref{thm:expandvar}. Let $\bs v$ be any $p$ dimensional vector satisfying $\bs v'\bs v=1$. Then,
$\bs v'\bs\Omega\bs v\geq \sum_{i}c_{i}\bs v'\bs\Sigma^{U}_{i}\bs v,$ 
where from \eqref{eq:var} we have $c_{i} = V_{ii}-3V_{ii}P_{ii}+4V_{ii}P_{ii}^2-2\sum_{j=1}^{n}V_{ij}^2\varepsilon_{j}^2P_{ji}^2$. Then since $\bs\Sigma^{U}_{i}$ is positive definite and using that for $i\neq j$, $P_{ij}^2\leq P_{jj}(1-P_{jj})$ 
\begin{align}\label{eq:omega bigger zero}
        c_{i}&=
        V_{ii}(1-3P_{ii}+4P_{ii}^2)-2\left(V_{ii}P_{ii}^3+\sum_{j\neq i}V_{ij}^2\varepsilon_{j}^2P_{ji}^2\right)\nonumber\\
        &\geq V_{ii}(1-3P_{ii}+4P_{ii}^2)-2\left(V_{ii}P_{ii}^3+P_{ii}(1-P_{ii})\sum_{j\neq i}V_{ij}^2\varepsilon_{j}^2\right)\nonumber\\
        &= V_{ii}(1-3P_{ii}+4P_{ii}^2)-2\left(V_{ii}P_{ii}^3+P_{ii}(1-P_{ii})(V_{ii}-V_{ii}P_{ii})\right)\nonumber\\
        &= V_{ii}(1-5P_{ii}+8P_{ii}^2-4P_{ii}^3)\nonumber\\
        &=V_{ii}(1-P_{ii})(1-2P_{ii})^2.
\end{align}
Consider the case where the inequality holds with equality, such that for every $ij$, either  $P_{ij}^2 = P_{jj}(1-P_{jj})$ or $V_{ji}^2\varepsilon_{i}^2=0$. The last condition cannot hold for every $i\neq j$, because $\sum_{i\neq j}V_{ji}^2\varepsilon_{i}^2=\sum_{i} V_{ji}^2\varepsilon_{i}^2-V_{jj}P_{jj}=\bs e_{j}'\bs V\bs D_{\varepsilon}^2\bs V\bs e_{j}-V_{jj}P_{jj}=V_{jj}(1-P_{jj})$. Since $P_{jj}<1$ and $V_{jj}>0$ by Assumption~\ref{ass:eigbound}, we conclude that there must be at least one $i$ such that $V_{ji}^2\varepsilon_{i}^2\neq 0$. For such an $i$, equality therefore only obtains if $P_{ij}^2 = P_{jj}(1-P_{jj})$. Assume moreover that $P_{jj}=1/2$, so that $P_{ij}=\pm \frac{1}{2}$. Using that $\bs P$ is a projection matrix, this implies that $P_{ii}=1/2$, $P_{i_{1}j}^2=0$ for $i_{1}\neq i$ and $P_{i_{2}i}=0$ for $i_{2}\neq j$. However, this means that two columns of $\bs D_{\varepsilon}\bs Z$ coincide up to their sign. This case is excluded by Assumption~\ref{ass:eigbound} that states that $\lambda_{\min}(\bs Z'\bs D_{\varepsilon}^2\bs Z/n)\geq C$ $a.s.n.$ We then have
\begin{align*}
        \frac{1}{n}\sum_{i}c_{i}\bs v'\bs\Sigma^U_i\bs v&\geq\frac{\lambdamin(\bs\Sigma^{U}_{i})}{n}\sum_{i}c_{i}\geq\frac{C}{n}\tr(\bs V)\geq \frac{C\tr(\bs Z\bs Z')}{n\lambdamax(\bs Z'\bs D_{\varepsilon}^2\bs Z)}
        \\&\geq\frac{Ck\lambdamin(\bs Z'\bs Z)}{n\lambdamax(\bs Z'\bs D_{\varepsilon}^2\bs Z)}>0 \quad a.s.n.,
\end{align*}
by Assumption \ref{ass:eigbound} and because $k/n>0$.

Now let $\bs b=[\bs\Sigma]_{2:p+1,1}$ the covariance between the AR statistic and the score. Then $\det(\bs\Sigma_n)=\det(\bs\Omega)\det(\bs\Omega-\bs b\bs b'\sigma^{-2}_n)$ by Schur complements. The $(l_{1},l_{2})^{\text{th}}$ element in $\bs b\bs b'\sigma^{-2}_n$ is the covariance of the AR statistic with $l_{1}^{\text{th}}$ and $l_{2}^{\text{th}}$ element of the score divided by the variance of the AR statistic. Hence this is equal to the correlation of the AR statistic with the $l_{1}^{\text{th}}$ and $l_{2}^{\text{th}}$ element of the score statistic times the standard deviations of the $l_{1}^{\text{th}}$ and $l_{2}^{\text{th}}$ element of the score statistic. Let $\bs\rho$ be the vector of correlations between the AR statistic and the score. That is, $\rho_{l}=\corr(1/\sqrt{k}(\text{AR}(\bm \beta_0)-k),1/\sqrt{n}S_{l}(\bs\beta_{0})|\mathcal{J})$. Then,
\begin{align*}
        \det(\bs\Sigma_n)&=\det(\bs\Omega)\det(\bs\Omega-\bs b\bs b'\sigma^{-2}_n)
        =\det(\bs\Omega)\det(\bs I+\bs D_{\rho})\det(\bs\Omega)\det(\bs I-\bs D_{\rho})>0,
\end{align*}
if $\rho_{l}\neq\pm 1$ for all $l$. We now prove that this is indeed the case. Consider first the variance of the score. We have with $[\bs D_{c}]_{ii}=c_{i}$ from \eqref{eq:omega bigger zero}
\begin{align*}
        \Omega_{ll}&\geq \frac{1}{n}\left[\tr(\bs\Sigma^{U}_{(l,l)}\bs D_{c}) +\bs a_{(l)}'(\bs D_{P}-\bs P\odot\bs P)(\bs I- 2(\bs D_{P}-(\bs P\odot \bs P)))\bs a_{(l)}\right].
\end{align*}
Define $\bs\Delta = \bs D_{P}-\bs P\odot\bs P$. Then, for the squared correlation coefficient, we get
\begin{equation*}
    \rho_{l}^2 = 2\frac{(n^{-1}\bs a_{(l)}'\bs\Delta\bs D_{P}\bs\iota)^2}{n^{-1}\tr(\bs\Delta)[n^{-1}\tr(\bs\Sigma^{U}_{(l,l)}\bs D_{c})+n^{-1}\bs a_{(l)}'\bs\Delta(\bs I-2\bs\Delta)\bs a_{(l)}]}.
\end{equation*}
As this is a correlation coefficient, we have $|\rho_{l}^2|\leq 1$ and this holds even if $\tr(\bs\Sigma^{U}_{(l,l)}\bs D_{c})$ is arbitrarily small. However, as $n^{-1}\tr(\bs\Sigma^{U}_{(l,l)}\bs D_{c})\geq C>0$ $a.s.n$., we have $|\rho_{l}^2|\leq C<1$ $a.s.n.$ We conclude that $\bs\Sigma^{-1}_n$ exists $a.s.n.$

\paragraph{Lyapunov condition}
In this section we show that the martingale difference array $\{y_{in},\mathcal{F}_{i,n},1\leq i\leq n, n\geq 3\}$ satisfies the following Lyapunov condition 
\begin{align}\label{eq:lyapunov}
	\sum_{i=3}^{n}\E[y_{in}^4|\mathcal{J}]
	&\leq C \Xi_{n}^{-2}\sum_{i=3}^{n}\underbrace{\E\bigg[\bigg(-\frac{1}{\sqrt{n}}\sum_{j\neq i}\bs c_{2n}'\bs\phi_{ji}(1-2P_{ii})r_i\bigg)^4\bigg|\mathcal{J}\bigg]}_{\text{linear}} \nonumber\\
	&\quad  +\underbrace{\E\bigg[\bigg(\frac{1}{\sqrt{n}}\sum_{j<i}(\bs c_{2n}'\bs\psi_{[ij]}- 2c_{1n}\gamma_nP_{ij})r_{j}r_i\bigg)^4\bigg|\mathcal{J}\bigg]}_{\text{quadratic}}\nonumber \\
	&\quad+ \underbrace{\E\bigg[\bigg(\frac{1}{\sqrt{n}}\sum_{i_{1}<j<i}\bs c_{2n}'\bs a_{[iji_{1}]}r_{i_{1}}r_{j} r_{i}\bigg)^4\bigg|\mathcal{J}\bigg]}_{\text{cubic}}\rightarrow_{a.s.}0.
\end{align}

Since the variance $\bs\Sigma_{n}$ was shown in Appendix \ref{subsubsec:varianceboundedawayfromzero} to be bounded away from zero, $\Xi^{-2}_n$ is finite, as $
	\Xi_n=\var(\bs\alpha'\bs\Sigma_n^{-\frac{1}{2}}\bs Y_{nr}|\mathcal{J})=(\bs\alpha'\bs\alpha)\var( w_{1n}+\sum^{n}_{i=3} y_{in}|\mathcal{J})=(\bs\alpha'\bs\alpha)(1+o_{a.s.}(1))>0$. We now subsequently consider the linear, quadratic and cubic terms in \eqref{eq:lyapunov}.

\subparagraph{Linear term} For the term linear in $\bs r$, we have that
\begin{align*}
	&\E\bigg[\frac{1}{n^2}\sum_{i=3}^{n}\bigg(\sum_{j\neq i}\bs c_{2n}'\bs \phi_{ji}(1-2P_{ii})\bigg)^4\bigg|\mathcal{J}\bigg]\\
	&\leq \E\bigg[\frac{C}{n^2}\sum_{i=3}^{n}(1-2P_{ii})^4 \|\sum_{j=1}^{n}\bs \phi_{ji}-\bs\phi_{ii}\|^4\bigg|\mathcal{J}\bigg]\\
	&\leq \E\bigg[\frac{C}{n^2}\sum_{i=3}^{n} \left(\|\bar{\bs Z}'\bs V\bs D_{\varepsilon}\bs e_{i}\|^4 \right.+\|\bs U'\bs V\bs D_{\varepsilon}\bs e_{i}\|^4  \left.+\Vert\bs\phi_{ii}\Vert^4\right)\bigg|\mathcal{J}\bigg]\rightarrow_{a.s.}0,\\
\end{align*}
since $(1-2P_{ii})^2<1$ and by Assumption \ref{ass:eigbound}, \eqref{eq:pi z v bounded}, \eqref{eq:sum phi bounded} and \eqref{eq:UVDe4}.

\subparagraph{Quadratic term} For the term quadratic in $\bs r$ in \eqref{eq:lyapunov}, we first notice that
\begin{equation*}
	\frac{1}{n^2}\sum_{i=3}^{n}\E\bigg[\|\sum_{j<i}\gamma_nP_{ij}r_ir_{j}\|^4\bigg|\mathcal{J}\bigg] \leq \frac{\gamma_n^2}{n^2}\sum_{i=3}^{n}\bigg(\sum_{j<i}P_{ij}^4 + 3\sum_{\substack{(j,i_{1})<i\\j\neq i_{1}}}P_{ij}^2P_{ii_{1}}^2\bigg)\leq C\frac{k}{nk}\rightarrow 0.
\end{equation*}
Similarly,
\begin{align}\label{eq:Psi4}
	&\frac{1}{n^2}\sum_{i=3}^{n}\E\bigg[\|\sum_{j<i}\bs c_{2n}'\bs\psi_{[ij]} r_{i}r_{j}\|^4\bigg|\mathcal{J}\bigg]\nonumber\\
	&\leq \frac{C}{n^2}\sum_{i_{1}=3}^{n}\sum_{i_{2}<i_{1}}\sum_{i_{3}<i_{1}}\sum_{i_{4}<i_{1}}\sum_{i_{5}<i_{1}}|\bs c_{2n}'\bs \psi_{[i_{1}i_{2}]}||\bs c_{2n}'\bs\psi_{[i_{1}i_{3}]}||\bs c_{2n}'\bs\psi_{[i_{1}i_{4}]}||\bs c_{2n}'\bs\psi_{[i_{1}i_{5}]}|\E[r_{i_{2}}r_{i_{3}}r_{i_{4}}r_{i_{5}}|\mathcal{J}]\nonumber\\
	&\leq\frac{C}{n^2}\sum_{i_{1}=3}^{n}\bigg(\sum_{i_{2}<i_{1}}\big(\bs c_{2n}'\bs \psi_{[i_{1}i_{2}]}\big)^4 + 3\sum_{\substack{(i_{2},i_{3})<i\\i_{2}\neq i_{3}}} \big(\bs c_{2n}'\bs\psi_{[i_{1}i_{2}]}\big)^2 \big(\bs c_{2n}'\bs \psi_{[i_{1}i_{3}]}\big)^2\bigg)\nonumber\\
	&\leq \frac{C}{n^2}\sum_{i_{1}=3}^{n}\bigg(\sum_{i_{2}<i_{1}}\sum_{l=1}^{p}\big(\Psi_{[i_{1}i_{2}]}^{(l)}\big)^4 + 3\sum_{\substack{(i_{2},i_{3})<i_{1}\\i_{2}\neq i_{3}}} \sum_{l=1}^{p}\big(\Psi_{[i_{1}i_{2}]}^{(l)}\big)^2\sum_{l=1}^{p}\big(\Psi_{[i_{1}i_{3}]}^{(l)}\big)^2\bigg).
\end{align}

To bound this expression, note that by \eqref{eq:Psi bound} $\bs e_{i}'\bs\Psi^{(l)} \bs e_{i} \leq C$ $a.s.n.$
Also, for any vector $\bs v$, $(\bs\Psi^{(l)} \bs v)^2= \bs v'\bs P\bs D_{a_{(l)}}\bs M\bs D_{a_{(l)}}\bs P\bs v\leq \max_{i=1,\ldots,n} a_{(l),i}^2\cdot \bs v'\bs P\bs v.$
This implies that $
	\sum_{i=1}^{n}(\Psi_{ij}^{(l)})^2\leq \max_{i=1,\ldots,n} a_{(l),i}^2 \cdot P_{jj}\leq C\quad a.s.n.$
Then,
\begin{align*}
		\frac{1}{n^2}\sum_{i_{1},i_{2}}(\Psi^{(l)}_{i_{1}i_{2}})^4&\leq\frac{1}{n^2}\sum_{i_{1},i_{2}}\left(\sum_{i_{3}}(\Psi^{(l)}_{i_{3}i_{2}})^2\right)(\Psi^{(l)}_{i_{1}i_{2}})^2\\
		&\leq\frac{1}{n^2}\max_{i=1,\ldots,n}a_{(l),i}\sum_{i_{2}}P_{i_{2}i_{2}}\sum{i_{1}}(\Psi_{i_{1}i_{2}}^{(l)})^2\\
		&\leq\frac{1}{n^2}\max_{i=1,\ldots,n}a_{(l),i}^2\sum_{i_{2}}P_{i_{2}i_{2}}^2\leq \frac{Ck}{n^2}\rightarrow_{a.s.}0.
\end{align*}
Using this result, we have for the first term on the final line of \eqref{eq:Psi4}
\begin{align*}
		\frac{1}{n^2}\sum_{i_{1}=3}^{n}\sum_{i_{2}<i_{1}}\left(\Psi_{[i_{1}i_{2}]}^{(l)}\right)^4&\leq\frac{1}{n^2} \sum_{i_{1},i_{2}}\left(\Psi_{i_{1}i_{2}}^{(l)}+\Psi^{(l)}_{i_{2}i_{1}}\right)^4\leq\frac{C}{n^2} \sum_{i_{1},i_{2}}\big[(\Psi_{i_{1}i_{2}}^{(l)})^4+(\Psi_{i_{2}i_{1}}^{(l)})^4\big]\leq \frac{Ck}{n^2}\rightarrow_{a.s.}0.
\end{align*}
For the second term on the final line of \eqref{eq:Psi4}, we have
\begin{align*}
		\frac{1}{n^2}\sum_{i_{1}=3}^{n}\sum_{\substack{(i_{2},i_{3})<i_{1}\\ i_{2}\neq i_{3}}} \big(\Psi_{[i_{1}i_{2}]}^{(l)}\big)^2\big(\Psi_{[i_{1}i_{3}]}^{(l)}\big)^2&\leq \frac{C}{n^2}\sum_{i_{1},i_{2},i_{3}}\big((\Psi_{i_{1}i_{2}}^{(l)})^2+(\Psi_{i_{2}i_{1}}^{(l)})^2\big)\big((\Psi_{i_{1}i_{3}}^{(l)})^2+(\Psi_{i_{3}i_{1}}^{(l)})^2\big)\\
		&\leq \frac{C}{n^2}\sum_{i_{1},i_{2},i_{3}}\bigg[(\Psi_{i_{1}i_{2}}^{(l)})^2(\Psi_{i_{1}i_{3}}^{(l)})^2+(\Psi_{i_{1}i_{2}}^{(l)})^2(\Psi_{i_{3}i_{1}}^{(l)})^2\\
		&\quad\qquad\qquad +(\Psi_{i_{2}i_{1}}^{(l)})^2(\Psi_{i_{1}i_{3}}^{(l)})^2+(\Psi_{i_{2}i_{1}}^{(l)})^2(\Psi_{i_{2}i_{1}}^{(l)})^2\bigg].
\end{align*}
We now show almost sure convergence to zero of the sums over the four terms within the brackets. First, 
\begin{align*}
		\frac{1}{n^2}\sum_{i_{1},i_{2},i_{3}}(\Psi^{(l)}_{i_{1}i_{2}})^2(\Psi^{(l)}_{i_{1}i_{3}})^2&\leq\frac{1}{n^2}\sum_{i_{1}}\left(\sum_{i_{2}}(\Psi^{(l)}_{i_{1}i_{2}})^2\right)^2\nonumber\\
		&\leq\frac{1}{n^2}\sum_{i_{1}}\left(\bs e_{i_{1}}'\bs\Psi^{(l)}\bs\Psi^{(l)\prime}\bs e_{i_{1}}\right)^2\nonumber\\
		&\leq\frac{1}{n^2}\tr(\bs M\bs D_{a_{(l)}}\bs P\bs D_{a_{(l)}}\bs M\bs D_{a_{(l)}}\bs P\bs D_{a_{(l)}}\bs M)\\
		&\leq \frac{Ck}{n^2}\rightarrow_{a.s.}0.
\end{align*}
For the second, and likewise for the third term,
\begin{align*}
		\frac{1}{n^2}\sum_{i_{1},i_{2},i_{3}}(\Psi^{(l)}_{i_{1}i_{2}})^2(\Psi^{(l)}_{i_{3}i_{1}})^2&\leq\frac{1}{n^2}\sum_{i_{1}}\bs e_{i_{1}}' \bs\Psi^{(l)\prime}\bs\Psi^{(l)}\bs e_{i_{1}}\bs e_{i_{1}}' \bs\Psi^{(l)}\bs\Psi^{(l)\prime}\bs e_{i_{1}}\\
		&\leq\frac{1}{n^2}\sum_{i_{1}}|\bs e_{i_{1}}' \bs M\bs D_{a_{(l)}}\bs P\bs D_{a_{(l)}}\bs M\bs e_{i_{1}}||\bs e_{i_{1}}' \bs P\bs D_{a_{(l)}}\bs M\bs D_{a_{(l)}}\bs P\bs e_{i_{1}}|\\
		&\leq \frac{1}{n^2}\sum_{i_{1}}|\max_{i_{2}=1,\ldots,n}a_{(l),i_{2}}^2|^2P_{i_{1}i_{1}}\leq \frac{Ck}{n^2}\rightarrow_{a.s.}0.
\end{align*}
For the fourth and final term we have,
\begin{align*}
		\frac{1}{n^2}\sum_{i_{1},i_{2},i_{3}}(\Psi^{(l)}_{i_{2}i_{1}})^2(\Psi^{(l)}_{i_{3}i_{1}})^2&\leq\frac{1}{n^2}\sum_{i_{1},i_{2}}(\Psi^{(l)}_{i_{2}i_{1}})^2\sum_{i_{3}}(\Psi^{(l)}_{i_{3}i_{1}})^2\\
        &\leq\frac{1}{n^2}\sum_{i_{1}}\max_{i_{2}=1,\ldots,n}a_{(l),i_{2}}^4 P_{i_{1}i_{1}}^2\leq \frac{Ck}{n^2}\rightarrow_{a.s.}0.
\end{align*}
Consequently, the quadratic term in \eqref{eq:lyapunov} converges to zero almost surely.

\subparagraph{Cubic term}
From \eqref{eq:A short}, the cubic term can be written as,
\begin{align}\label{eq:Lyapunov cubic}
		&\sum^n_{i=3}\E\bigg[\bigg(\frac{1}{\sqrt{n}}\bs r'\bs A_{-i}\bs r r_{i}\bigg)^4\bigg|\mathcal{J}\bigg]=\sum^n_{i=3}\frac{C}{n^2}\E[\E[(\bs r'\bs A_{-i}\bs r)^4|\mathcal{J},\bs U]|\mathcal{J}].
\end{align}
As $\bs A_{-i}$ is symmetric with zeroes on its diagonal, we have by Lemma \ref{thm:Rademacher} in Appendix \ref{sapp:rademacher}
\begin{align}\label{eq:cubic92trAi2}
	&\sum^n_{i=3}\frac{C}{n^2}\E[\E[(\bs r'\bs A_{-i}\bs r)^4|\mathcal{J},\bs U]|\mathcal{J}]\nonumber\\
	&\leq\sum^n_{i=3}\frac{C}{n^2}\E[
 \tr(\bs A_{-i}^2)^2+
 \tr(\bs A_{-i}^4)
	+
 \bs\iota'(\bs A_{-i}\odot \bs A_{-i}\odot \bs A_{-i}\odot \bs A_{-i})\bs\iota|\mathcal{J}]\nonumber\\
	&\leq \sum^n_{i=3}\frac{C}{n^2}\E[
 \tr(\bs A_{-i}^2)^2|\mathcal{J}].
\end{align}
The second inequality follows since $\bs A_{-i}^2$ is p.s.d., hence $\tr(\bs A_{-i}^4)\leq \tr(\bs A_{-i}^2)^2$, and
\begin{align*}
		\bs\iota'(\bs A_{-i}\odot\bs A_{-i}\odot\bs A_{-i}\odot\bs A_{-i})\bs\iota = \sum_{i=1}^{n}\sum_{j=1}^{n}(\bs e_{i}'\bs A_{-i}\bs e_{j})^4\leq  \sum_{i=1}^{n}\sum_{j=1}^{n}\bs e_{i}'\bs A_{-i}^2 \bs e_{i} (\bs e_{i}'\bs A_{-i}\bs e_{j})^2\\
		=\sum_{i=1}^{n}(\bs e_{i}'\bs A_{-i}^2 \bs e_{i})^2\leq \left(\sum_{i=1}^{n}(\bs e_{i}'\bs A_{-i}^2 \bs e_{i})\right)^2= \tr(\bs A_{-i}^2)^2.
\end{align*}
The final line of \eqref{eq:cubic92trAi2} can be further bounded as
\begin{align}\label{eq:cubicEtrAi2}
        \sum^n_{i=3}\frac{C}{n^2}\E[\tr(\bs A_{-i}^2)|\mathcal{J}]
        &=\sum^n_{i=3}\frac{C}{n^2}\E[\tr(\bs S_{i-1}\bs A_{i}\bs S_{i-1}\bs A_{i}\bs S_{i-1})|\mathcal{J}]\nonumber\\
        &\leq \sum^n_{i=3}\frac{C}{n^2}\E[\tr([\dot{\bs P}\bs D_{\Phi e_{i}}]^2 + [\bs D_{\Phi e_{i}}\dot{\bs P}]^2 +[\bs P\bs e_{i}\bs e_{i}'\bs\Phi]^2-[\bs D_{P e_{i}}\bs D_{e_{i}'\Phi}]^2)\big|\mathcal{J}].
\end{align}
To bound these four terms we use the following result
\begin{align}\label{res:sum phi 4}
        \frac{C}{n^2}\sum_{i=1}^n\E[\left(\bs e_i'\bs\Phi^{\prime}\bs\Phi\bs e_i \right)^2|\mathcal{J}]   
        &\leq\frac{C}{n^2}\sum_{i=1}^n\E[\bs e_i'\bs\Phi^{\prime}\bs\Phi\bs\Phi^{\prime}\bs\Phi\bs e_i|\mathcal{J}]\nonumber\\ 
        &\leq\frac{C}{n^2}\sum_{l=1}^p\sum_{i=1}^n\E[\bar{x}_{(l),i}^4|\mathcal{J}]\nonumber\\
        &\leq\frac{C}{n^2}\sum_{i=1}^n\Vert\bar{\bs Z}'\bs e_i\Vert^4+\E[\Vert\bs U'\bs e_i\Vert^4|\mathcal{J}] \rightarrow_{a.s.}0,
\end{align}
by Assumption \ref{ass:eigbound} and the finite fourth moment of the elements of $\bs U$.

For the first and second term of \eqref{eq:cubicEtrAi2}, we have by \eqref{res:sum phi 4}
\begin{align*}
	    \frac{1}{n^2}\sum_{i=1}^n\E[\tr^2(\dot{\bs P}\bs D_{\Phi e_i}\dot{\bs P}\bs D_{\Phi e_i})|\mathcal{J}]
		&\leq\frac{C}{n^2}\sum_{i=1}^n\E\bigg[\bigg(\sum^n_{j=1}\bs e_j'\bs P\bs D^2_{\Phi e_i}\bs P\bs e_j \bigg)^2\bigg|\mathcal{J}\bigg]\\
		&\leq\frac{C}{n^2}\sum_{i=1}^n\E\bigg[\bigg(\sum^n_{j=1}\bs e_j'\bs D^2_{\Phi e_i}\bs e_j \bigg)^2\bigg|\mathcal{J}\bigg]\rightarrow_{a.s.}0.
\end{align*}
For the third term of \eqref{eq:cubicEtrAi2}, also by \eqref{res:sum phi 4},
\begin{align*}
	    \frac{1}{n^2}\sum_{i=1}^n\E[\tr^2(\bs P\bs e_i\bs e_i'\bs\Phi\bs P\bs e_i\bs e_i'\bs\Phi)|\mathcal{J}]
		&=\frac{1}{n^2}\sum_{i=1}^n\E[\tr^2(\bs P\bs e_i\bs e_i'\bs\Phi\bs\Phi^{\prime}\bs e_i\bs e_i'\bs P)|\mathcal{J}]\\
		&\leq\frac{1}{n^2}\sum_{i=1}^n\E[\left(P_{ii} \bs e_i'\bs\Phi\bs\Phi^{\prime}\bs e_i\right)^2|\mathcal{J}]\\
		&\leq\frac{1}{n^2}\sum_{i=1}^n\E[\left(\bs e_i'\bs\Phi\bs\Phi^{\prime}\bs e_i\right)^2|\mathcal{J}]\rightarrow_{a.s.}0.
\end{align*}
And finally, for the fourth term of \eqref{eq:cubicEtrAi2}, \eqref{res:sum phi 4} implies that
\begin{align*}
	    \frac{1}{n^2}\sum_{i=1}^n\E[\tr^2(\bs D_{Pe_i}\bs D_{e_i'\Phi}\bs D_{Pe_i}\bs D_{e_i'\Phi}|\mathcal{J}]
	    &=\frac{1}{n^2}\sum_{i=1}^n\E[\left[\sum_{j=1}^nP_{ji}^2(\Phi_{ji})^2\right]^2|\mathcal{J}]\\
	    &\leq\frac{1}{n^2}\sum_{i=1}^n\E[\left[\sum_{j=1}^nP_{ii}^2(\Phi_{ji})^2\right]^2|\mathcal{J}]\\
	    &\leq\frac{1}{n^2}\sum_{i=1}^n\E[\left[\bs e_i'\bs \Phi^{\prime}\bs \Phi\bs e_i\right]^2|\mathcal{J}]\rightarrow_{a.s.}0.
\end{align*}
Hence the cubic term converges to zero almost surely. Therefore, the Lyapunov condition is satisfied.

\paragraph{Converging conditional variance}
This part of the proofs shows the following convergence result: for any $\epsilon>0$,
\begin{equation*}
    \Pr\bigg(\bigg|\sum_{i=3}^{n}\E[y_{in}^2|\bs r_{-i},\mathcal{J}]-s_{n}^2\bigg|\geq \epsilon\bigg|\mathcal{J}\bigg)\rightarrow_{a.s.}0.
\end{equation*}
We start by noting that,
\begin{align*}
		s_{n}^2 &= \E\bigg[\bigg(\sum_{i=3}^{n}y_{in}\bigg)^2\bigg|\mathcal{J}\bigg]=\E[((\bs\alpha'\bs\alpha)^{-1/2}\bs\alpha'\bs\Sigma_{n}^{-1/2}\bs Y_{n}+o_{a.s.}(1))^2|\mathcal{J}]=1+o_{a.s.}(1),
\end{align*}
where the vanishing part is due to $w_{1n}$. We can conclude that $s_{n}^2$ is bounded and bounded away from zero in probability. Now define $\bs r_{-i}=r_1,\ldots,r_{i-1}$ and write $y_{in}$ in \eqref{eq:A short} as $y_{in}=\Xi_{n}^{-1/2}(y_{in}^{(1)}+y_{in}^{(2)}+y_{in}^{(1)})$ with
\begin{align*}
		y_{in}^{(1)} &=\frac{-1}{\sqrt{n}}\bs c_{2n}'\bar{\bs X}'(\bs I_{n}-2\bs D_{P})\dot{\bs V}\bs D_{\varepsilon}\bs e_{i}r_{i}, \\
		y_{in}^{(2)}&= \frac{-1}{\sqrt{n}}\bs r'\bs S_{i-1}\left[\left(\bs \Psi+\bs\Psi'-2\bs D_{\Psi}\right) - 2c_{1n}\gamma_n\dot{\bs P}\right]\bs e_{i}r_{i}, \\
	    y_{in}^{(3)}&=\frac{1}{\sqrt{n}}\bs r'\bs A_{-i}\bs rr_{i}.
\end{align*}
Using a conditional version of Chebyshev's inequality, we have
\begin{align}\label{eq:plim conditional variance}
	&\Pr\bigg(\bigg|\sum_{i=3}^{n}\E[y_{in}^2|\bs r_{-i},\mathcal{J}]-s_{n}^2(\mathcal{J})\bigg|\geq \epsilon\bigg|\mathcal{J}\bigg)\nonumber\\
	&=\Pr\bigg(\bigg|\sum_{i=3}^{n}\E[y_{in}^2|\bs r_{-i},\mathcal{J}]-\sum^n_{i=3}\E[y_{in}^2|\mathcal{J}]\bigg|\geq \epsilon\bigg|\mathcal{J}\bigg)\nonumber\\
	&\leq \frac{1}{\epsilon^2}\E\bigg[\bigg(\sum_{i=3}^{n}\E[y_{in}^2|\bs r_{-i},\mathcal{J}]-\sum^n_{i=3}\E[y_{in}^2|\mathcal{J}]\bigg)^2\bigg|\mathcal{J}\bigg]\nonumber\\
	&\leq\frac{C}{\epsilon^2}\E\bigg[\bigg(\sum_{i=3}^{n}\E[(y_{in}^{(1)}+y_{in}^{(2)}+y_{in}^{(3)})^2|\bs r_{-i},\mathcal{J}]-\sum^n_{i=3}\E[(y_{in}^{(1)}+y_{in}^{(2)}+y_{in}^{(3)})^2|\mathcal{J}]\bigg)^2\bigg|\mathcal{J}\bigg]\nonumber\\
	&\leq \frac{C}{\epsilon^2}\left\{\E\left[\left.\left(\sum_{i=3}^{n}\E[(y_{in}^{(1)})^2|\bs r_{-i},\mathcal{J}]-\sum^n_{i=3}\E[(y_{in}^{(1)})^2|\mathcal{J}]\right)^2\right|\mathcal{J}\right]\right.\nonumber\\
	&\quad+ \E\left[\left.\left(\sum_{i=3}^{n}\E[(y_{in}^{(2)})^2|\bs r_{-i},\mathcal{J}]-\sum^n_{i=3}\E[(y_{in}^{(2)})^2|\mathcal{J}]\right)^2\right|\mathcal{J}\right]\nonumber\\
	&\quad+ \E\left[\left.\left(\sum_{i=3}^{n}\E[(y_{in}^{(3)})^2|\bs r_{-i},\mathcal{J}]-\sum^n_{i=3}\E[(y_{in}^{(3)})^2|\mathcal{J}]\right)^2\right|\mathcal{J}\right]\nonumber\\
	&\quad+ \E\left[\left.\left(\sum_{i=3}^{n}\E[(y_{in}^{(1)})(y_{in}^{(2)})|\bs r_{-i},\mathcal{J}]-\sum^n_{i=3}\E[(y_{in}^{(1)})(y_{in}^{(2)})|\mathcal{J}]\right)^2\right|\mathcal{J}\right]\nonumber\\
	&\quad+ \E\left[\left.\left(\sum_{i=3}^{n}\E[(y_{in}^{(1)})(y_{in}^{(3)})|\bs r_{-i},\mathcal{J}]-\sum^n_{i=3}\E[(y_{in}^{(1)})(y_{in}^{(3)})|\mathcal{J}]\right)^2\right|\mathcal{J}\right]\nonumber\\
	&\quad+ \left.\E\left[\left.\left(\sum_{i=3}^{n}\E[(y_{in}^{(2)})(y_{in}^{(3)})|\bs r_{-i},\mathcal{J}]-\sum^n_{i=3}\E[(y_{in}^{(2)})(y_{in}^{(3)})|\mathcal{J}]\right)^2\right|\mathcal{J}\right]\right\}.
\end{align}

Each of these terms converges to zero almost surely. Below we show how this follows for the cross product between the quadratic and the cubic how. For the other terms we do this in a separate document that is available upon request.

Define $\hat{\bs\Phi}=\bs D_{\sum_{h=1}^pc_{2n,h}\bar{Z}_{(l)}}\bs V\bs D_\varepsilon$ and $\hat{\bs A}_{-i}$ as $\bs A_{-i}$ but with $\hat{\bs\Phi}$ instead of $\bs\Phi$ and similarly for other variables that contain $\bs\Phi$. The product between the quadratic and cubic term in \eqref{eq:plim conditional variance} then is
\begin{align}\label{eq:quadratic cubic psi and p}
        &\E\bigg[\bigg(\sum_{i=3}^{n}\E[(y_{in}^{(2)})(y_{in}^{(3)})|\bs r_{-i},\mathcal{J}]-\sum^n_{i=3}\E((y_{in}^{(2)})(y_{in}^{(3)})|\mathcal{J})\bigg)^2\bigg|\mathcal{J}\bigg]\nonumber\\
        &\leq\frac{C}{n^2} \E\bigg[\bigg(\sum_{i=3}^{n}\bs r'\bs S_{i-1}[(\bs \Psi+\bs\Psi'-2\bs D_{\Psi}) - 2c_{1n}\gamma_n\dot{\bs P}]\bs e_{i}\bs r'\bs S_{i-1}\E(\bs A_{i}|\mathcal{J})\bs S_{i-1}\bs r\bigg)^2\bigg|\mathcal{J}\bigg]\nonumber\\
        &=\frac{C}{n^2} \E\bigg[\bigg(\sum_{i=3}^{n}\bs r'\bs S_{i-1}\left[\left(\bs \Psi+\bs\Psi'-2\bs D_{\Psi}\right) - 2c_{1n}\gamma_n\dot{\bs P}\right]\bs e_{i}\bs r'\bs S_{i-1}\hat{\bs A}_{i}\bs S_{i-1}\bs r\bigg)^2\bigg|\mathcal{J}\bigg]\nonumber\\
        &\leq\frac{C}{n^2}\bigg( \E\bigg[\bigg(\sum_{i=3}^{n}\bs r'\bs S_{i-1}\left(\bs \Psi+\bs\Psi'-2\bs D_{\Psi}\right)\bs e_{i}\bs r'\bs S_{i-1}\hat{\bs A}_{i}\bs S_{i-1}\bs r\bigg)^2\bigg|\mathcal{J}\bigg]\nonumber\\
        &\quad+ \E\bigg[\bigg(\sum_{i=3}^{n}\bs r'\bs S_{i-1}2c_{1n}\gamma_n\dot{\bs P}\bs e_{i}\bs r'\bs S_{i-1}\hat{\bs A}_{i}\bs S_{i-1}\bs r\bigg)^2\bigg|\mathcal{J}\bigg]\bigg),
\end{align}
due to the odd number of Rademacher random variables in $y_{in}^{(2)}y_{in}^{(3)}$. We only bound the second term. The first term follows by similar arguments and an eigenvalue bound on $\bs\Psi+\bs\Psi'-2\bs D_{\Psi}$. By completing the square we obtain
\begin{align}\label{eq:quadratic cubic}
        &\frac{C}{nk} \E[\sum_{i,j=3}^{n}\bs r'\bs S_{i-1}\dot{\bs P}\bs e_{i}\bs e_{j}'\dot{\bs P}\bs S_{j-1}\bs r\bs r'\hat{\bs A}_{-i}\bs r\bs r'\hat{\bs A}_{-j}\bs r|\mathcal{J}]\nonumber\\
        &=\frac{C}{nk}\sum_{i,j=3}^{n}\tr(\bs S_{i-1}\dot{\bs P}\bs e_{i}\bs e_{j}'\dot{\bs P}\bs S_{j-1}[20(\hat{\bs A}_{-i}\odot\hat{\bs A}_{-j})-6\bs I\odot(\hat{\bs A}_{-i}\hat{\bs A}_{-j}+\hat{\bs A}_{-j}\hat{\bs A}_{-i})\nonumber\\
        &\quad+4\hat{\bs A}_{-i}\hat{\bs A}_{-j}+4\hat{\bs A}_{-j}\hat{\bs A}_{-i}+\tr(\hat{\bs A}_{-i}\hat{\bs A}_{-j})\bs I])\nonumber\\
        &=\frac{C}{nk}\sum_{i,j=3}^{n}\bs e_{j}'\dot{\bs P}\bs S_{j-1}[20(\hat{\bs A}_{-i}\odot\hat{\bs A}_{-j})-6\bs I\odot(\hat{\bs A}_{-i}\hat{\bs A}_{-j}+\hat{\bs A}_{-j}\hat{\bs A}_{-i})+4\hat{\bs A}_{-i}\hat{\bs A}_{-j}\nonumber\\
        &\quad+4\hat{\bs A}_{-j}\hat{\bs A}_{-i}+\tr(\hat{\bs A}_{-i}\hat{\bs A}_{-j})\bs I]\bs S_{i-1}\dot{\bs P}\bs e_{i},
\end{align}
by Lemma \ref{thm:Rademacher} in Appendix \ref{sapp:rademacher}. For sake of space, we focus on the first term, which can be written as  
\begin{align}\label{eq:quadratic cubic 1}
        &\frac{C}{nk}\sum^{n}_{i,j=3}\bs e_j'\dot{\bs P}\bs S_{j-1}(\hat{\bs A}_{-i}\odot\hat{\bs A}_{-j})\bs S_{i-1}\dot{\bs P}\bs e_i\nonumber\\
        &=\frac{C}{nk}\sum^{n}_{i,j=3}\bs e_j'\dot{\bs P}\bs S_{j-1}\bs S_{i-1}(\hat{\bs A}_{i}\odot\hat{\bs A}_{j})\bs S_{i-1}\bs S_{j-1}\dot{\bs P}\bs e_i\nonumber\\
        &=\frac{C}{nk}[2\sum^{n}_{i=3}\sum_{j=i+1}^n\bs e_j'\dot{\bs P}\bs S_{i-1}(\hat{\bs A}_{i}\odot
        \hat{\bs A}_{j})\bs S_{i-1}\dot{\bs P}\bs e_i+\sum^{n}_{i=3}\bs e_i\dot{\bs P}\bs S_{i-1}(\hat{\bs A}_{i}\odot
        \hat{\bs A}_{i})\bs S_{i-1}\dot{\bs P}\bs e_i]\nonumber\\
        &=\frac{C}{nk}[2\sum^{n}_{i=3}\sum_{j=i+1}^n\bs e_j'\dot{\bs P}\bs S_{i-1}(\sum^{4}_{r=1}\hat{\bs A}_{i}^{(r)}\odot\sum^{4}_{s=1}\hat{\bs A}_{j}^{(s)})\bs S_{i-1}\dot{\bs P}\bs e_i\nonumber\\
        &\quad+\sum^{n}_{i=3}\bs e_i'\dot{\bs P}\bs S_{i-1}(\sum^{4}_{r=1}\hat{\bs A}_{i}^{(r)}\odot\sum^{4}_{s=1}\hat{\bs A}_{i}^{(s)})\bs S_{i-1}\dot{\bs P}\bs e_i],
\end{align}
where $\hat{\bs A}_i^{(r)}$ for $r=1,\ldots,4$ are the four terms between the $\bs S_{i-1}$ in $\bs A_{-i}$ from \eqref{eq:A short}, but with $\bs\Phi$ substituted by $\hat{\bs\Phi}$. Again, we only consider the first term, which consist of 16 cross products for the different $r$ and $s$. Let $\sum_{j=i+1}^n\bs e_j'\bs e_j=\bs I_n-\bs S_{i-1}-\bs e_i'\bs e_i=\tilde{\bs I}_{in}$. Then for $r=1,s=1$
\begin{align*}
        &\frac{C}{nk}\sum_{i=3}^n\sum_{j=i+1}^n\bs e_j'\dot{\bs P}\bs S_{i-1}(\hat{\bs A}_{i}^{(1)}\odot\hat{\bs A}_{j}^{(1)})\bs S_{i-1}\dot{\bs P}\bs e_i\\
        &=\frac{C}{nk}\sum_{i=3}^n\sum_{j=i+1}^n\bs e_j'\dot{\bs P}\bs S_{i-1}(\bs D_{\hat{\Phi}e_i}\dot{\bs P}\odot\bs D_{\hat{\Phi}e_j}\dot{\bs P})\bs S_{i-1}\dot{\bs P}\bs e_i\\
        &=\frac{C}{nk}\sum_{i=3}^n\sum_{j<i}\bs e_j'\dot{\bs P}\tilde{\bs I}_{in}\hat{\bs\Phi}'\bs e_j\bs e_j'\hat{\bs\Phi}\bs e_i\bs e_j'(\dot{\bs P}\odot\dot{\bs P})\bs S_{i-1}\dot{\bs P}\bs e_i\\
        &=\frac{C}{nk}\sum_{i=3}^n\bs e_i'\hat{\bs\Phi}'\bs D_{\dot{P}\tilde{I}_{in}\hat{\Phi}'}\bs S_{i-1}(\dot{\bs P}\odot\dot{\bs P})\bs S_{i-1}\dot{\bs P}\bs e_i\\
        &\leq\frac{C}{nk}\sum_{i=3}^n[\bs e_i'\hat{\bs\Phi}'\bs D_{\dot{P}\tilde{I}_{in}\hat{\Phi}'}^2\hat{\bs\Phi}\bs e_i\bs e_i'\dot{\bs P}\bs S_{i-1}(\dot{\bs P}\odot\dot{\bs P})\bs S_{i-1}(\dot{\bs P}\odot\dot{\bs P})\bs S_{i-1}\dot{\bs P}\bs e_i]^{\frac{1}{2}}\\
        &\leq\frac{C}{nk}\sum_{i=3}^n[\bs e_i'\bs P(\sum_{l=1}^p\bs D_{\bar{z}_{(l)}})\bs D_{\dot{V}\tilde{I}_{in}P(\sum_{l=1}^p D_{\bar{z}_{(l)}})}^2(\sum_{l=1}^p\bs D_{\bar{z}_{(l)}})\bs P\bs e_i\bs e_i'\dot{\bs P}^2\bs e_i]^{\frac{1}{2}},
\end{align*}
by the Cauchy--Schwarz inequality twice and because $\lambdamax((\dot{\bs P}\odot\dot{\bs P})(\dot{\bs P}\odot\dot{\bs P}))\leq C$. Since for a $n\times n$ matrix $\bs A$ we have $\lambdamax(\bs D_{A}^2)=\max_{j=1,\ldots,n}(\bs e_j'\bs A\bs e_j)^2= \max_{j=1,\ldots,n}\bs e_j'\bs A'\bs e_j\bs e_j'\bs A\bs e_j\leq\allowbreak\max_{j=1,\ldots,n}\allowbreak\bs e_j'\bs A'\bs A\bs e_j\leq \lambdamax(\bs A'\bs A)$ and $\lambdamax(\dot{\bs V})\leq C$ we have $\lambdamax(\bs D_{\dot{V}\tilde{I}_{in}P(\sum_{l=1}^p D_{\bar{z}_{(l)}})}^2)\leq C\lambdamax([\sum_{l=1}^p \bs D_{\bar{z}_{(l)}}]^2)=C \max_{i=1,\ldots,n}\Vert \bar{\bs z}_{i}\Vert^2$. Therefore the equation above becomes
\begin{align*}
        &\frac{C}{nk}\sum_{i=3}^n[\bs e_i'\bs P(\sum_{l=1}^p\bs D_{\bar{z}_{(l)}})\bs D_{\dot{V}\tilde{I}_{in}P(\sum_{h=1}^p D_{\bar{z}_{(l)}})}^2(\sum_{l=1}^p\bs D_{\bar{z}_{(l)}})\bs P\bs e_i\bs e_i'\dot{\bs P}^2\bs e_i]^{\frac{1}{2}}\\
        &\leq\frac{C}{nk}\sum_{i=3}^n[(\max_{j=1,\ldots,n}\Vert \bar{\bs z}_{j}\Vert^2)^2\bs e_i'\bs P\bs e_i\bs e_i'\dot{\bs P}^2\bs e_i]^{\frac{1}{2}}\leq\frac{C}{nk}\sum_{i=3}^n\max_{j=1,\ldots,n}\Vert \bar{\bs z}_{j}\Vert^2\bs e_i'\bs P\bs e_i\rightarrow_{a.s.}0,
\end{align*}
by Assumption \ref{ass:eigbound}. The other combinations of $r$ and $s$ can be shown to converge to zero using similar arguments. Continuing like this we can show that \eqref{eq:plim conditional variance} converges to zero almost surely.

\subsubsection{Unconditional distribution of \texorpdfstring{$\bs t'\bs\Sigma_n^{-1/2}\bs Y_n$}{tSigmaY} by Lebesgue's dominated convergence Theorem}\label{sapp:unconditional}
To obtain the unconditional distribution, note that for some $\epsilon>0$, say $\epsilon=1$, we have
\begin{align*}
		&\sup_n\E([|\Pr((\bs\alpha'\bs\alpha)^{-1/2}\bs\alpha'\bs\Sigma_{n}^{-1/2}\bs Y_{nr}<y|\mathcal{J})|^{1+\epsilon}]
		\\
		&= \sup_n\E[(\Pr((\bs\alpha'\bs\alpha)^{-1/2}\bs\alpha'\bs\Sigma_{n}^{-1/2}\bs Y_{nr}<y|\mathcal{J}))^{2}]\leq\sup_n\E[1^2]
		\leq\infty.
\end{align*}
Therefore, $\Pr((\bs\alpha'\bs\alpha)^{-1/2}\bs\alpha'\bs\Sigma_{n}^{-1/2}\bs Y_{nr}<y|\mathcal{J})$ is uniformly integrable \citep[p.\ 338]{billingsley1995probability} and we can apply a version of Lebesgue's dominated convergence theorem \citep[Theorem 25.12]{billingsley1995probability}
\begin{align*}
		&\Pr((\bs\alpha'\bs\alpha)^{-1/2}\bs\alpha'\bs\Sigma_{n}^{-1/2}\bs Y_{n}<y)=\E[\Pr((\bs\alpha'\bs\alpha)^{-1/2}\bs\alpha'\bs\Sigma_{n}^{-1/2}\bs Y_{n}<y|\mathcal{J})]\\&=\E[\Pr((\bs\alpha'\bs\alpha)^{-1/2}\bs\alpha'\bs\Sigma_{n}^{-1/2}\bs Y_{nr}<y|\mathcal{J})]\rightarrow_{a.s.}\E[\Phi(y)]=\Phi(y).
\end{align*}

\subsubsection{Distribution of \texorpdfstring{$\bs Y_n$}{Y} by the Cramér--Wold theorem}\label{sapp:Cramer-Wold}
We have shown that for any $\bs\alpha$ we have $(\bs\alpha'\bs\alpha)^{-1/2}\bs\alpha'\bs\Sigma_{n}^{-1/2}\bs Y_{n}\rightarrow_d(\bs\alpha'\bs\alpha)^{-1/2}\bs\alpha'\bs Z$, with $\bs Z\sim N(\bs 0,\bs I_{p+1})$. Then also $C(\bs\alpha'\bs\alpha)^{-1/2}\bs\alpha'\bs\Sigma_{n}^{-1/2}\bs Y_{n}\rightarrow_dC(\bs\alpha'\bs\alpha)^{-1/2}\bs\alpha'\bs Z$ and by the Cramér--Wold theorem \citep[T29.4]{billingsley1995probability}  $\bs\Sigma_{n}^{-1/2}\bs Y_n\rightarrow_d\bs Z$.

\end{proof}

\section{Control variables in linear IV}\label{app:controls}
In Example \ref{ex:linIV} we wrote the linear IV model without exogenous control variables. In this section we show that if the number of control variables is not too large and invariance of the moment conditions stems from symmetry in the second stage errors, as in Section \ref{sec:score}, then including the exogenous control variables does change the main results.

Consider the linear IV model from \eqref{eq:model1} with exogenous control variables $\bs F$
\begin{align*}
        \bs y=\bs X\bs\beta+\bs F\bs\Gamma_1+\bs\varepsilon\nonumber\\
        \bs X=\bs Z\bs\Pi+\bs F\bs\Gamma_2+\bs\eta.
\end{align*}
Assume that the control variables are such that the diagonal elements of $\bs P_{F}$ converge to zero fast enough. In particular, $\max_{i=1,\dots,n}P_{F,ii}=O(n^{-(1/2+\delta)})$ $a.s.n.$ for some $\delta>1/6$.

We can partial out the exogenous control variables by pre-multiplying both equations with $\bs M_{F}$. Take $\tilde{\bs Z}=\bs M_{F}\bs Z$ and write the CU objective function as 
\begin{align*}
    F(\bs\beta)&=\frac{1}{n}\bs\varepsilon'\bs M_{F}\bs Z(\bs Z'\bs M_{F}\bs D_{M_{F}\varepsilon}^2\bs M_{F}\bs Z)^{-1}\bs Z'\bs M_{F}\bs\varepsilon=\frac{1}{n}\bs\varepsilon'\tilde{\bs Z}(\tilde{\bs Z}'\bs D_{M_{F}\varepsilon}^2\tilde{\bs Z})^{-1}\tilde{\bs Z}'\bs\varepsilon.
\end{align*}
We will now show that $\bs\varepsilon'\tilde{\bs Z}(\tilde{\bs Z}'\bs D_{M_{F}\varepsilon}^2\tilde{\bs Z})^{-1}\tilde{\bs Z})^{-1}\tilde{\bs Z}'\bs\varepsilon\rightarrow_{a.s.}\bs\varepsilon'\tilde{\bs Z}(\tilde{\bs Z}'\bs D_{\varepsilon}^2\tilde{\bs Z})^{-1}\tilde{\bs Z})^{-1}\tilde{\bs Z}'\bs\varepsilon$, such that partialling out the exogenous controls does not obstruct the derivation of our main results, once we take $\tilde{\bs Z}$ as the new instruments. 

Note that $\tilde{\bs Z}'\bs D_{M_{F}\varepsilon}^2\tilde{\bs Z}=\tilde{\bs Z}'\bs D_{\varepsilon}^2\tilde{\bs Z}+\tilde{\bs Z}'\bs D\tilde{\bs Z}$, with $D_{ii}=-2\varepsilon_{i}\bs e_{i}'\bs P_{F}\bs\varepsilon+(\bs e_{i}'\bs P_{F}\bs\varepsilon)^2$. We can write
\begin{align*}
    |D_{ii}|
    &\leq 2(\varepsilon_{i}^2\bs\varepsilon'\bs P_{F}\bs e_{i}\bs e_{i}'\bs P_{F}\bs\varepsilon)^{1/2}+\bs\varepsilon'\bs P_{F}\bs e_{i}\bs e_{i}'\bs P_{F}\bs\varepsilon\leq2(\varepsilon_{i}^2P_{F,ii}\bs\varepsilon'\bs P_{F}\bs\varepsilon)^{1/2}+P_{F,ii}\bs\varepsilon'\bs P_{F}\bs\varepsilon,
\end{align*}
such that if $P_{F,ii}\bs\varepsilon'\bs P_{F}\bs\varepsilon\rightarrow_{a.s.}0$ we have $D_{ii}\rightarrow_{a.s.}0$.
\begin{align}\label{eq:D}
    P_{F,ii}\bs\varepsilon'\bs P_{F}\bs\varepsilon=P_{F,ii}\sum_{j=1}^nP_{F,jj}(\varepsilon_{j}^2-\sigma_{j}^2)+P_{F,ii}\sum_{j=1}^nP_{F,jj}\sigma_{j}^2+2P_{F,ii}\sum_{j=1}^{n-1}\sum_{i_{1}<j}\varepsilon_{j}P_{F,ji_{1}}\varepsilon_{i_{1}}.
\end{align}
The first term converges to zero almost surely by the assumption on the diagonal of $\bs P_{F}$ and the finite fourth moments of $\varepsilon_i$. The second term converges to zero almost surely by the same assumption on the diagonal of $\bs P_{F}$ and since we assume $\sigma^2_{i}<C$ for all $i$. The final term is shown to converge to zero almost surely in the next subsection.

Then we can write
\begin{align*}
    \tilde{\bs Z}'\bs D_{M_{F}\varepsilon}^2\tilde{\bs Z}&=(\tilde{\bs Z}'\bs D_{\varepsilon}^2\tilde{\bs Z})^{1/2}[\bs I+(\tilde{\bs Z}'\bs D_{\varepsilon}^2\tilde{\bs Z})^{-1/2}\tilde{\bs Z}\bs D\tilde{\bs Z}(\tilde{\bs Z}'\bs D_{\varepsilon}^2\tilde{\bs Z})^{-1/2}](\tilde{\bs Z}'\bs D_{\varepsilon}^2\tilde{\bs Z})^{1/2}\\
    &=(\tilde{\bs Z}'\bs D_{\varepsilon}^2\tilde{\bs Z})^{1/2}\bs U[\bs I+\bs\Lambda]\bs U'(\tilde{\bs Z}'\bs D_{\varepsilon}^2\tilde{\bs Z})^{1/2},
\end{align*}
where $\bs U\bs\Lambda\bs U'=(\tilde{\bs Z}'\bs D_{\varepsilon}^2\tilde{\bs Z}/n)^{-1/2}\tilde{\bs Z}\bs D\tilde{\bs Z}/n(\tilde{\bs Z}'\bs D_{\varepsilon}^2\tilde{\bs Z}/n)^{-1/2}$ the eigenvalue decomposition. Then since $D_{ii}\rightarrow_{a.s.}0$ we have $\Lambda_{ii}\rightarrow_{a.s.}0$ by $1/C\leq\lambdamin(\tilde{\bs Z}'\bs D_{\varepsilon}^2\tilde{\bs Z}/n)$ and $\lambdamax(\tilde{\bs Z}'\tilde{\bs Z}/n)\leq C$.

Next, write
\begin{align*}
    (\tilde{\bs Z}'\bs D_{M_F\varepsilon}\tilde{\bs Z})^{-1}&=(\tilde{\bs Z}'\bs D_{\varepsilon}^2\tilde{\bs Z})^{-1/2}\bs U[\bs I+\bs\Lambda]^{-1}\bs U'(\tilde{\bs Z}'\bs D_{\varepsilon}^2\tilde{\bs Z})^{-1/2}\\
    &=(\tilde{\bs Z}'\bs D_{\varepsilon}^2\tilde{\bs Z})^{-1}-(\tilde{\bs Z}'\bs D_{\varepsilon}^2\tilde{\bs Z})^{-1/2}\bs U\bs\Lambda[\bs I+\bs\Lambda]^{-1}\bs U'(\tilde{\bs Z}'\bs D_{\varepsilon}^2\tilde{\bs Z})^{-1/2}.
\end{align*}
Now take $\bs v$ and $\bs w$ such that their quadratic forms with respect to $(\tilde{\bs Z}'\bs D_{\varepsilon}^2\tilde{\bs Z})^{-1}$ are $O(1)$. Then the bilinear form of $\bs v$ and $\bs w$ with the second term in the equation above can be shown to converge to zero almost surely, by applying the Cauchy-Schwarz inequality and using that $\lambdamax(\bs\Lambda)\rightarrow_{a.s.}0$ and $\lambdamax([\bs I+\bs\Lambda]^{-1})\rightarrow_{a.s.}1$.

\subsection{Convergence of the martingale difference array}
We now show that the final term in \eqref{eq:D} converges completely to zero, and hence almost surely to zero, using Theorem 2 in \citet{ghosal1998complete}.

For notational convenience, let the following be conditional on $\bs Z$ and $\bs F$. Also, write $\bs F_{n}$ to emphasize the dependence of $\bs F$ on $n$. To match the notation in \citet{ghosal1998complete} define
\begin{align*}
    X_{n,k}&=P_{F_n,ii}\sum_{j<k}\varepsilon_{j}P_{F_n,jk}\varepsilon_{k} \text{ for $k<n$ and $0$ otherwise};\\
    \mathcal{F}_{n,k}&=\{\varepsilon_{i}:i\leq k\}.
\end{align*}

We first need that $\{(X_{n,k},\mathcal{F}_{n,k}),k\geq 1\}$ are sequences of square-integrable martingale differences. $\E(X_{n,k}|\mathcal{F}_{n,k-1})=0$ by independence and mean zero of $\varepsilon_i$, hence it is a martingale difference sequence. Square-integrability follows because
\begin{align*}
    \E(X_{n,k}^2)&=\E([P_{F_n,ii}\sum_{j<k}\varepsilon_{j}P_{F_n,jk}\varepsilon_{k}]^2)=\E(P_{F_n,ii}^2\sum_{j<k}\varepsilon_{j}^2P_{F_n,jk}^2\varepsilon_{k}^2)\leq C P_{F_n,ii}^2\sum_{j<k}P_{F_n,jk}^2\\
    &\leq C P_{F_n,ii}^2P_{F_n,kk}<\infty.
\end{align*}

Next, we require constants $\{M_n\}$ such that $\sum_{k=1}^\infty\E(X_{n,k}^2|\mathcal{F}_{n,k-1})\leq M_{n}$ $a.s.$, where $\mathcal{F}_{n,0}$ is the trivial $\sigma$-field. Let $\bs S_{k}$ be the matrix with ones on the first $k$ diagonal elements and zeros elsewhere. We have
\begin{align*}
    \sum_{k=1}^\infty\E(X_{n,k}^2|\mathcal{F}_{n,k-1})
    &=\sum_{k=1}^n\E(P_{F_n,ii}^2\varepsilon_{k}^2[\sum_{j<k}P_{F_n,jk}\varepsilon_{j}]^2|\mathcal{F}_{n,k-1})\\
    &=\sum_{k=1}^n P_{F_n,ii}^2\sigma_{k}^2\bs\varepsilon'\bs S_{k-1}\bs P_{F_n}\bs e_{k}\bs e_{k}'\bs P_{F_n}\bs S_{k-1}\bs\varepsilon\\
    &\leq\sum_{k=1}^n P_{F_n,ii}^2\sigma_{k}^2P_{F_n,kk}\bs\varepsilon'\bs\varepsilon\\
    &\leq C\sum_{k=1}^n nP_{F_n,ii}^2P_{F_n,kk}\bs\varepsilon'\bs\varepsilon/n\leq Cn^2\max_{i=1,\dots,n}P_{F_n,ii}^3,
\end{align*}
since by a strong law of large numbers $\bs\varepsilon'\bs\varepsilon/n\leq C$ $a.s.n.$ Then, by the assumption that $\max_{i=1,\dots,n}P_{F_n,ii}=O(n^{-(1/2+\delta)})$ $a.s.n.$ with $\delta>1/6$ we can find $\{M_{n}\}=\{Cn^{1/2-3\delta}\}$ for which the condition holds. Moreover, for such $\{M_n\}$, $c_n=1$ and $\lambda>1/(1/2-3\delta)$ we have that $\sum_{n=1}^\infty c_nM_{n}^\lambda<\infty$, as needed.

Finally, we require that for all $\epsilon>0$ that $\sum_{n=1}^\infty c_n\sum_{k=1}^\infty\Pr(|X_{n,k}|>\epsilon)<\infty$. Let $\epsilon>0$ be given and take $c_n=1$. Then by Markov's inequality
\begin{align*}
    \sum_{n=1}^\infty c_n\sum_{k=1}^\infty\Pr(|X_{n,k}|>\epsilon)&\leq C\sum_{n=1}^\infty\sum_{k=1}^\infty\E([P_{F_n,ii}\sum_{j<k}\varepsilon_{j}P_{F_n,jk}\varepsilon_{k}]^2)\\
    &\leq C\sum_{n=1}^\infty\sum_{k=1}^\infty\E(P_{F_n,ii}^2\sum_{j<k}\varepsilon_{j}^2P_{F_n,jk}^2\varepsilon_{k}^2)\\
    &\leq C\sum_{n=1}^\infty\sum_{k=1}^\infty P_{F_n,ii}^2\sum_{j<k}P_{F_n,jk}^2\\
    &\leq C\sum_{n=1}^\infty\sum_{k=1}^\infty P_{F_n,ii}^2P_{F_n,kk}^2<\infty,
\end{align*}
because both infinite sums converge by the condition on $P_{F_n,ii}$.

By Theorem 2 in \citet{ghosal1998complete}, we have complete convergence of $\sum_{k=1}^\infty X_{nk}$ to zero, which implies almost sure convergence.

\end{appendices}
\end{document}